\title{Parity Games of Bounded Tree-Depth}
\author{Konrad Staniszewski}{University of Warsaw, Poland \and IDEAS NCBR Sp. z o.o., Warsaw, Poland}{}{https://orcid.org/0000-0001-7711-6827}{Work supported by the National Science Centre, Poland (grant no. 2021/41/B/ST6/03914)}
\authorrunning{K. Staniszewski} 
\keywords{Parity Games, Circuits, Tree-Depth, Clique-Width}
\newcommand{\WCircle}{\tikz\draw[black,fill=white] (0,0) circle (.8ex);}
\newcommand{\SCircle}{\tikz\draw[black,pattern=north west lines, pattern color=lipicsLineGray] (0,0) circle (.8ex);}
\newcommand{\GCircle}{\tikz\draw[black,fill=lipicsLightGray] (0,0) circle (.8ex);}
\newcommand{\TML}{
  \node[draw] [below left = 0.5 and 1.57 of c] (ac) {$\TDPCCVVSHORT{a}{c}$};
  \node[draw] [below left = 0.5 and 0.7 of c] (aec) {$\TDPCCVESHORT{a}{c}$};

  \node[draw] [below left = 0.5 and -0.3 of c] (bc) {$\TDPCCVVSHORT{b}{c}$};
  \node[draw] [below left = 0.5 and -1.17 of c] (bec) {$\TDPCCVESHORT{b}{c}$};

  \node[draw] [below right = 0.5 and 0.55 of c] (cc) {$\TDPCCVVSHORT{c}{c}$};
  \node[draw] [below right = 0.5 and 1.42 of c] (cec) {$\TDPCCVESHORT{c}{c}$};

  \node[draw] [below left = 0.5 and 0.6 of d2] (ad) {$\TDPCCVVSHORT{a}{d}$};
  \node[draw] [below left = 0.5 and -0.28 of d2] (aed) {$\TDPCCVESHORT{a}{d}$};
  
  \node[draw] [below right = 0.5 and -0.2 of d2] (dd) {$\TDPCCVVSHORT{d}{d}$};
  \node[draw] [below right = 0.5 and 0.68 of d2] (ded) {$\TDPCCVESHORT{d}{d}$};

  \draw (bc) -- (c);
  \draw (bec) -- (c);

  \draw (cc) -- (c);
  \draw (cec.north) -- (c);

  \draw (ac.north) -- (c);
  \draw (aec) -- (c);

  \draw (dd) -- (d2);
  \draw (ded.north) -- (d2);

  \draw (ad.north) -- (d2);
  \draw (aed) -- (d2);

}
\newcommand{\TMLI}{
  \node[draw] [below = 0.5 of b] (c) {$\TDPCCVVSHORT{c}{c}$};
  \node[draw] [below = 0.5 of d] (d2) {$\TDPCCVVSHORT{d}{d}$};
}
\newcommand{\TMLIII}{
  \node[draw] [below left = 0.5 and 1.5 of a] (b) {$\TDPCCVVSHORT{b}{b}$};
        \node[draw] [below right = 0.5 and 1.5 of a] (d) {$\TDPCCVVSHORT{d}{d}$};
}
\newcommand{\TMLIV}{
  \node[draw] [below= 0.2 of a0] (a) {$\TDPCCVVSHORT{a}{a}$};
}
\newcommand{\TMLV}[1]{
  \node[] at (0, #1) (a0) {$\ldots$};
}
\newcommand{\TMLEII}{
  \draw (c) -- (b);
  \draw (d2) -- (d);
}
\newcommand{\TMLEIV}{
  \draw (b) -- (a);
  \draw (d) -- (a);
  \draw (a) -- (a0);
}
\newcommand{\TMLEV}{
  \draw (a) -- (a0);
}
\newcommand{\CFrom}{s}
\newcommand{\CTo}{t}
\newcommand{\oset}[2]{%
  \mathrel{\mathop{#2}\limits^{
    \vbox to0ex{\kern-2\ex@
    \hbox{$#1$}}}}}
\newcommand{\UpCloseECMP}[1]{{#1\kern-0.3em\uparrow}}
\newcommand{\MIF}{\text{if }}
\newcommand{\MOTHERWISE}{\text{otherwise }}
\newcommand{\Domain}[1]{{\mathrm{Dom}(#1)}}
\newcommand{\UpdateGadgetName}{\mathbb{U}}
\newcommand{\UpdateGadgetC}[2]{\UpdateGadgetName_{cond}^\Pair{#1}{#2}}
\newcommand{\UpdateGadgetCP}[1]{\UpdateGadgetName_{cond}^{#1}}
\newcommand{\UpdateGadget}[2]{\UpdateGadgetName^\Pair{#1}{#2}}
\newcommand{\ProcPipe}{\mathbb{P}}
\newcommand{\WOHard}{{W[1]\text{-}hard}}
\newcommand{\FT}{\rightarrow}
\newcommand{\PFT}{\rightharpoonup}
\newcommand{\TreeModelName}{\mathcal{TM}}
\newcommand{\TreeModelTupleLong}[4]{(#1, #2, #3, #4)} 
\newcommand{\TreeModelTupleShort}{\TreeModelTupleLong{\T}{\C = \C_\PE \uplus \C_\PO}{\Color}{\TCS}} 
\newcommand{\TreeModelShort}{\TreeModelName=\TreeModelTupleShort} 
\newcommand{\T}{T} 
\newcommand{\C}{C} 
\newcommand{\CPair}{\C_\PE \uplus \C_\PO}
\newcommand{\Color}{Color} 
\newcommand{\ColorBName}{COLOR} 
\newcommand{\ColorB}[2]{\ColorBName_{#2}^{#1}} 
\newcommand{\G}{G}
\newcommand{\V}{V}
\newcommand{\E}{E}
\newcommand{\R}{rank}
\newcommand{\PE}{\mathrm{E}}
\newcommand{\PO}{\mathrm{O}}
\newcommand{\PX}{\mathrm{X}}
\newcommand{\Values}{\mathrm{elem}}
\newcommand{\IFF}{\Leftrightarrow}
\newcommand{\GameGraphTuple}[2]{(#1=#1_{\PE} \uplus #1_{\PO}, #2)}
\newcommand{\GameGraphAltFullDef}[3] {#1 = \GameGraphTuple{#2}{#3}}
\newcommand{\GameGraphFullDef}[3]{#1 = \GameGraphTuple{#2}{#3 \subseteq #2 \times #2}}
\newcommand{\ParityGameFullDef}[4]{ #1=(#2=#2_{\PE} \uplus #2_{\PO}, #3 \subseteq #2 \times #2, #4 : #2 \FT \mathbb{N})}
\newcommand{\ParityGameDef}[4]{ #1=(#2=#2_{\PE} \uplus #2_{\PO}, #3, #4)}
\newcommand{\ParityGameFullDefC}[5]{ #1=(#2=#2_{\PE} \uplus #2_{\PO}, #3 \subseteq #2 \times #2, #4 : #2 \FT \mathbb{N}, #5 : #2 \FT \C_\PE \uplus \C_\PO)}
\newcommand{\ParityGameDefC}[5]{ #1=(#2=#2_{\PE} \uplus #2_{\PO}, #3, #4, #5)}
\newcommand{\MinB}[1]{{\min_{#1}}} 
\newcommand{\Min}[1]{{\min_{#1}}}
\newcommand{\Nat}{\mathbb{N}}
\newcommand{\Tuple}[1]{{\langle #1 \rangle}}
\newcommand{\Pair}[2]{\Tuple{#1, #2}}
\newcommand{\Play}{\pi}
\newcommand{\PlayPref}{\overset{\rightarrow}{\pi}}
\newcommand{\PlayPrefSpec}{\overset{\Rightarrow}{\pi}}
\newcommand{\PlayInfSpec}{\overset{\Rightarrow}{+}}
\newcommand{\PlayFrom}[3]{\Pi_{#1}(#2, #3)}
\newcommand{\PlayFromTo}[4]{\overset{\rightarrow}{\Pi}_{#1}(#2, #3, #4)}
\newcommand{\PlayLastVertex}[1]{{last(#1)}}
\newcommand{\ST}{\rho}
\newcommand{\STOP}{{\scriptstyle{STOP}}}
\newcommand{\RCMP}{\prec}
\newcommand{\RCMPEQ}{\preceq}
\newcommand{\ECMP}{\sqsubset}
\newcommand{\ECMPEQ}{\sqsubseteq}
\newcommand{\PlayMaxRank}[1]{mr_{#1}}
\newcommand{\PlayMaxRankS}[2]{mr_{#1}({#2})}
\newcommand{\FA}[2]{\forall{{#1 .}#2}}
\newcommand{\EX}[2]{\exists{{#1 .}#2}}
\newcommand{\EnforcementNameV}{\mathrm{venf}}
\newcommand{\EnforcementNameC}{\mathrm{enf}}
\newcommand{\EnforcementV}[3]{\EnforcementNameV^{#1}_{#2, #3}}
\newcommand{\EnforcementC}[4]{\EnforcementNameC^{#1}_{#3, #4}}
\newcommand{\EnfMerge}[4]{\mathrm{merge}(#2, #3, #4)}
\newcommand{\EnfSetAny}[2]{\mathcal{E}}
\newcommand{\EnfSetAlgG}[3]{\mathcal{E}^{#2}_{#3}}
\newcommand{\EnfB}[2]{\mathit{ENF}_{#2}^{#1}}
\newcommand{\EnfOneMove}{\mathit{MOVE}}
\newcommand{\EnfCycle}{\mathit{LOOPED}}
\newcommand{\EnfOpt}{OPT}
\newcommand{\EnfLoop}[2]{\mathrm{loop}(#1, #2)}
\newcommand{\NoPairWith}[2]{{#2 \not\in \Domain{#1}}}
\newcommand{\EvenPair}[2]{\mathrm{eon}(#1, #2)}
\newcommand{\EnfLift}[2]{\mathrm{lift}(#1, #2)}
\newcommand{\EnfUnion}[2]{\mathrm{\overset{\scalebox{2}[1]{$\sim$}}{\min}}(#1, #2)}
\newcommand{\UndefinedT}{undefined}
\newcommand{\UndefinedM}{\text{\UndefinedT}}
\newcommand{\IsUndefined}[1]{{#1 \text{ is \UndefinedT}}}
\newcommand{\FunAugment}[3]{{{#1}[#2 \mapsto #3]}}
\newcommand{\TCS}{D}
\newcommand{\TCSBName}{{DB}}
\newcommand{\TCSB}[2]{{\TCSBName}_{#1, #2}}
\newcommand{\TCCBName}{\mathit{CHILD}}
\newcommand{\TCCB}[2]{\TCCBName_{#1, #2}}
\newcommand{\Approaches}{\rightarrow}
\newcommand{\True}{{\mathrm{true}}}
\newcommand{\False}{{\mathrm{false}}}
\newcommand{\ACO}{{\text{AC}^0}}
\newcommand{\NCTWO}{{\text{NC}^2}}
\newcommand{\Circuit}{\mathbb{SCW}}
\newcommand{\TRCircuit}{\mathbb{RED}}
\newcommand{\TDCircuit}{\mathbb{TD}}
\newcommand{\TGCircuit}{\mathbb{GET}}
\newcommand{\IN}{\mathrm{in}}
\newcommand{\OUT}{\mathrm{out}}
\newcommand{\GUpdateEV}[2]{\mathrm{U}_{\PE}^{#1, #2}}
\newcommand{\GUpdateOV}[2]{\mathrm{U}_{\PO}^{#1, #2}}
\newcommand{\AnyGoodEnfForQPv}[3]{\mathrm{SAFE}_{#3}^{#1, #2}}
\newcommand{\AnyGoodEnfForQP}[2]{\AnyGoodEnfForQPv{#1}{#2}{}}
\newcommand{\CChoose}[2]{\mathrm{COMBINE}^{#1,#2}}
\newcommand{\CInitialize}[1]{\mathrm{INIT}_{#1}}
\newcommand{\CIn}[1]{{IN}_{#1}}
\newcommand{\COut}[1]{{OUT}^{#1}}
\newcommand{\CCond}{\mathit{COND}}
\newcommand{\CIsEColor}{\mathit{ECOLOR}}
\newcommand{\VertexUsedBit}[1]{{A}^{#1}}
\newcommand{\CLayerPack}[1]{\mathit{LPACK}_{#1}}
\newcommand{\CNoChildren}[1]{\mathrm{CHILDLESS}_{#1}}
\newcommand{\CAncestorEQ}[2]{\mathrm{ANEQ}_{#1}^{#2}}
\newcommand{\CAncestorEQEBTW}[3]{\mathrm{ANEQC}_{#1}^{#2\rightarrow#3}}
\newcommand{\Param}{k}
\newcommand{\NR}{d}
\newcommand{\ACRankBits}{(\NR+1)^\Param}
\newcommand{\ACSBitPack}{\ACRankBits}
\newcommand{\AlgComplexitySpaceL}{\mathcal{O}(\log(n) + \log(\Param))}
\newcommand{\AlgComplexitySpace}{\mathcal{O}(\log(n) + \log(\NR)\Param)}
\newcommand{\AlgComplexityP}[1]{n^{\mathcal{O}(1)}\NR^{#1}} 
\newcommand{\ACAlgNComplexity}{\AlgComplexityP{\mathcal{O}(\Param)}}
\newcommand{\ACAlgComplexity}{\ACAlgNComplexity}
\newcommand{\ACAlgDepth}{{\mathcal{O}(\Param^3)}}
\newcommand{\VRankBName}{\mathit{RANK}}
\newcommand{\VRankB}[1]{\VRankBName_{#1}}
\newcommand{\CVB}[2]{{\TDPCCVVSHORT{#1}{#2}}}
\newcommand{\CVEB}[2]{{\TDPCCVESHORT{#1}{#2}}}
\newcommand{\Forest}{\mathcal{F}}
\newcommand{\Tree}{T}
\newcommand{\EPB}[1]{\mathit{EB}[#1]}
\newcommand{\VWEdgB}[2]{\mathit{EDG}[#1, #2]}
\newcommand{\DepthB}[1]{\mathit{DEPTH}_{#1}}
\newcommand{\ParentB}[1]{\mathit{PARENT}_{#1}}
\newcommand{\CFTD}{{\Circuit_{\Tuple{2n^2, \NR + 2, 4\Param + 4}}}}
\newcommand{\TDPCCVVC}{{\circ}}
\newcommand{\TDPCCVEC}{{\rightarrow}}
\newcommand{\TDPCCVVSHORT}[2]{{{#1}_{#2}^{\TDPCCVVC{}}}}
\newcommand{\TDPCCVESHORT}[2]{{{#1}_{#2}^{\TDPCCVEC{}}}}
\newcommand{\Obrzdalek}{{Obdr\v{z}\'{a}lek}}
\newcommand{\LOGCFL}{\mathrm{LOGCFL}}
\newcommand{\PTIME}{\mathrm{P}}
\newcommand{\WinS}[1]{{W_{#1}}}
\newcommand{\FOL}{{\mathrm{FO}}}
\newcommand{\MSO}{{\mathrm{MSO}}}
\newcommand{\GSO}{{\mathrm{GSO}}}
\begin{document}

\maketitle

\begin{abstract}
The exact complexity of solving parity games is a major open problem.
Several authors have searched for efficient algorithms over specific classes of graphs.
In particular, Obdr\v{z}\'{a}lek showed that for graphs
of bounded tree-width or clique-width, the problem is in $\PTIME$, which was later improved by Ganardi, who showed that it is even in $\LOGCFL$ (with an additional assumption for clique-width case).
Here we extend this line of research by showing that for graphs of bounded tree-depth 
the problem of solving parity games is
in logspace uniform $\ACO$.
We achieve this by first considering a parameter that we obtain from a modification of clique-width, which we call shallow clique-width.
We subsequently provide a suitable reduction.

\end{abstract}

\section{Introduction}

Parity games are two-player games played on directed graphs with ranked vertices
$\ParityGameFullDef{\G}{\V}{\E}{\R}$. Player $\PX$ makes a move at vertices from
$\V_\PX$ by moving a token along one of the outgoing edges.
By making moves, players form a sequence of vertices called a play.
A play may be finite or not, but it must be
exhaustive. That is, whenever there is a possibility to make a move, then a move is made.
Player $\PE$ wins a play $\Play$ if it is either finite and ends at a vertex from $\V_\PO$ or is
infinite and the parity of the highest rank that occurs infinitely often is even,
otherwise player $\PO$ wins.

Parity games play an important role in system verification and synthesis \cite{practical_synthesis}.
However, the exact complexity of solving parity games is still an open problem.
The fastest known algorithms work in time $n^{\mathcal{O}\left(\log\left(1 + \frac{d}{\log(n)}\right)\right)}$ \cite{Lehtinen_2022, 8005092}, where $n$ is the size of the
graph and $d$ is the number of different ranks.
The problem was also tackled from the point of view of parameterized complexity.
\Obrzdalek{} showed that in graphs with bounded tree-width \cite{10.1007/978-3-540-45069-6_7} or clique-width \cite{10.5555/2392389.2392399}
parity games can be solved in polynomial time. 
The result for tree-width was later improved to $\NCTWO$ by Fearnley and Schewe \cite{10.1007/978-3-642-31585-5_20} and to $\LOGCFL$ by Ganardi \cite{10.1007/978-3-662-46678-0_25}.
The result for clique-width was improved by Ganardi to $\LOGCFL$ under an additional assumption that the $k$-expression is given \cite{10.1007/978-3-662-46678-0_25}.
It was also shown that parity games admit a polynomial-time algorithm on graphs with bounded entanglement \cite{10.1007/978-3-540-32275-7_15}, DAG-width \cite{10.1007/11672142_43}, or Kelly-width \cite{HUNTER2008206}
(for Kelly-width, decomposition is assumed to be a part of the input). For more about these classes and relations between them see \cite{GANIAN201488}.
Calude, Jain, Khoussainov, Li, and Stephan in their breakthrough work about solving parity games in quasipolynomial time \cite{10.1145/3055399.3055409}
showed that the problem is fixed-parameter tractable when parameterized by the number of ranks.

In this work, we show that solving parity games on graphs of bounded tree-depth is in logspace uniform $\ACO$
\footnote{A problem is said to be in logspace uniform $\ACO$ if there is a constant $k$, a polynomial $p$, and a deterministic Turing machine that given an input size $N$, generates in
space $\mathcal{O}(\log(N))$ the description of a circuit of size $\mathcal{O}(p(N))$ and depth at most $k$ that solves the problem instances of size $N$.
The circuit can be viewed as a directed graph with vertices of in-degree $0$ denoted as input, out-degree $0$ as output, and other vertices 
labeled with boolean operations ($\wedge$, $\vee$ or $\neg$; both $\wedge$ and $\vee$ can take an arbitrary number of arguments).
Edges show computation flow. In such a graph, vertices are usually called gates.
For a more detailed definition see \cite{arora_barak_2009}.}.
We start by introducing a parameter, which we call shallow clique-width, in \cref{sec:def_not}.
In \cref{sec:tools}, we introduce tools that are subsequently used in \cref{sec:circ_for_scw} to create an $\ACO$ algorithm for parity games on
graphs of bounded shallow clique-width.
In \cref{sec:circ_td}, we show how to efficiently reduce the problem of solving parity games on graphs of bounded tree-depth
to solving parity games on graphs of bounded shallow clique-width.
Our result improves over the ones of
\Obrzdalek{} and Ganardi when considering graphs of bounded tree-depth.
That is because logspace uniform $\ACO$ is a strict subclass of $\LOGCFL$ (parity is not in $\ACO$ \cite{ac0parity_sips,Ajtai198311FormulaeOF}),
and graphs of bounded tree-depth have bounded tree-width.
In addition to this, our intermediate result for shallow clique-width is an example of an $\ACO$ algorithm for solving parity games 
on a class of graphs that are not sparse (assuming an appropriate graph description is provided).

\subsection*{Related Work}
  The relation between uniform $\ACO$ and tree-depth was studied in \cite{10.1145/2946799,elberfeld:hal-00678176}.
  Elberfeld, Jakoby, and Tantau showed that $\MSO$ definable problems are in uniform $\ACO$ when
  considering structures whose Gaifman graphs have bounded tree-depth \cite{elberfeld:hal-00678176}.
  Elberfeld, Grohe, and Tantau showed that on graphs of bounded tree-depth $\FOL$, $\MSO$, and $\GSO$ (guarded second-order logic)
  have the same expressive power \cite{10.1145/2946799}.
  The question of whether the winning regions of parity games are definable in $\MSO$ is unsolved for the case when the number of ranks is unbounded.
  Dawar and Gr{\"a}del showed that they are not definable in $\mu$-calculus (a fragment of $\MSO$),
  but can be defined in $\GSO$ if we use a quasi-order over vertices (having a higher rank) along with a unary predicate $\mathrm{OddRank}$ \cite{10.1007/978-3-540-87531-4_26}.
  However, this along with \cite{10.1145/2946799,elberfeld:hal-00678176} does not place parity games on graphs of bounded tree-depth in uniform $\ACO$
  since the tree-depth of the Gaifman graphs of the structures considered in \cite{10.1007/978-3-540-87531-4_26} is unbounded due to the aforementioned quasi-order.

 We use a dynamic programming approach along the graph decomposition similar to the one of \Obrzdalek{} \cite{10.5555/2392389.2392399}. 
  However, dealing with $\ACO$ requires more technical care. In particular, we need to provide definitions of operations that are computable in logspace uniform $\ACO$.
 Another difference is that we don't restrict our attention to $t$-strategies but to a potentially larger set of positional strategies. We also provide the efficient reduction
 from solving parity games on graphs of bounded tree-depth to solving parity games on graphs of bounded shallow clique-width.

\section{Definitions and Notation}\label{sec:def_not}
\subsection{Basic Notation}\label{sec:bas_not}

We use $g=\FunAugment{f}{y}{c}$ to define $g$ that coincides with $f$ except that $g(y) = c$.
To define a partial function that coincides with $f$ but is undefined on $y$ we write $\FunAugment{f}{y}{\UndefinedM}$.
By $\Domain{f}$ we denote the set of elements for which $f$ is defined.
Given a function $f: A \FT B$ and finite or infinite sequence $s = a_0, a_1, \ldots$ of elements of $A$,
we denote application of $f$ 
to $s$ by $f(s) = f(a_0), f(a_1), \ldots$.

\subsection{Arenas, Games, Colorings}
An \emph{arena} is a directed graph $\GameGraphFullDef{\G}{\V}{\E}$ with vertices partitioned between players $\PE$ and $\PO$.
In a parity game $\ParityGameFullDef{\G}{\V}{\E}{\R}$, the behavior of player $\PX$ is modeled by a partial function 
$\ST_\PX : \V^* \times \V_\PX \PFT \V$
called \emph{strategy}, 
which given the sequence of visited vertices and the vertex that the token is on, gives
the next vertex to move the token to.
Every strategy must be exhaustive and correct -- whenever there is a possibility of player movement,
then the function must be defined,
and the token can only be moved along existing edges.
A strategy $\ST_\PX$ is \emph{positional} if the result depends only on the 
vertex the token is on (i.e. for a vertex $v \in \V_\PX$ and a prefix of a play $\PlayPref$ we have $\ST_\PX(\PlayPref, v) = \ST_\PX(v)$).
A play $\Play$ is consistent with $\ST_\PX$ if whenever it is the turn of player $\PX$,
then the move is made according to $\ST_\PX$.
A strategy $\ST_\PX$ is winning for player $\PX$ from a
vertex $v$ if all plays starting at $v$ and consistent with $\ST_\PX$ are winning for player $\PX$.
It is known that parity games are globally positionally determined \cite{185392_parity_pos_determ}. That is for $\ParityGameDef{\G}{\V}{\E}{\R}$
there exists a partition of $\V$ into the winning regions $\WinS{\PE}$ and $\WinS{\PO}$ such that
$\PE$ has a positional strategy $\ST_\PE$ that wins from every vertex in $\WinS{\PE}$, and
$\PO$ has a positional strategy $\ST_\PO$ that wins from every vertex in $\WinS{\PO}$.

We define $\PlayFrom{\G}{\ST_\PX}{v}$ as the set of plays in the arena $\G$ that start at the vertex 
$v$ and are consistent with the strategy $\ST_\PX$ for player $\PX$,
whereas by $\PlayFromTo{\G}{\ST_\PX}{v}{w}$
we define their prefixes that end at $w$ (i.e. $\PlayFromTo{\G}{\ST_\PX}{v}{w} = \{\Play[\ldots i] : \Play[i] = w \wedge \Play \in \PlayFrom{\G}{\ST_\PX}{v}\}$).
To distinguish a play from its prefix, we denote plays using the symbol $\Play$ and prefixes of plays using 
$\PlayPref$.

We say that $\Color: \V \FT \CPair{}$ is a player-aware coloring of vertices from
$\GameGraphAltFullDef{\G}{\V}{\E}$ if vertices from $\V_X$ are colored using colors from $\C_\PX$.
We will sometimes write a parity game along with player-aware coloring of its arena and say that 
$\ParityGameDefC{\G'}{\V'}{\E'}{\R'}{\Color'}$  is a subgame of  $\ParityGameDefC{\G}{\V}{\E}{\R}{\Color}$
if and only if $\V'_\PE \subseteq \V_\PE$, $\V'_\PO \subseteq \V_\PO$, $\E' \subseteq \E$, 
$\R'$ is $\R$ with domain restricted to $\V'$ and $\Color'$ is $\Color$ with domain restricted to $\V'$.

\subsection{Shallow Clique-Width for Arenas}
Here we define shallow clique-width for arenas, which
can be viewed as clique-width \cite{COURCELLE1993218,COURCELLE200077} with unions of arbitrary number of components (instead of only binary ones)
and restricted term height.

\begin{definition}
  A \emph{tree-model} $\TreeModelShort$ consists of a rooted tree $\T$, with all root-to-leaf paths having the same length, a set of colors $\C = \C_\PE \uplus \C_\PO$, a coloring
  $\Color$ of leaves of $\T$, and a function $\TCS$ that assigns to each internal node of $\T$ a subset of $\C\times\C$.
\end{definition}

    For a tree-model $\TreeModelShort$ and a node $l$ of $\T$, by $\TreeModelName_l$ we mean a
    tree-model $\TreeModelTupleLong{\T_l}{\C_\PE \uplus \C_\PO}{\Color'}{\TCS'}$ where $\T_l$ is the subtree of $\T$
    rooted at $l$, $\Color'$ is $\Color$ with domain restricted to leaves of $\T_l$, and $\TCS'$ is $\TCS$ with domain 
    restricted to internal nodes of $\T_l$.

\begin{definition}
    We define an arena $\GameGraphFullDef{\G}{\V}{\E}$ induced by $\TreeModelShort$ in the following way.
    \begin{itemize}
      \item If $\T$ consists of one node $v$ and $\Color(v) \in \C_\PX$, then it induces edgeless $\G$ with $\V_\PX = \{v\}$.
      \item Otherwise, let $l$ be $\T$'s root and let $l_1, \ldots l_m$ be children of $l$. 
      The arena induced by $\TreeModelName$ is created by taking union of arenas induced by 
      $\TreeModelName_{l_1}, \ldots, \TreeModelName_{l_m}$, and
       adding a directed edge from every vertex colored $c$ to every vertex colored $c'$ whenever
      $\Pair{c}{c'} \in \TCS(l)$.
    \end{itemize} 
\end{definition}

For an example of a tree-model and the induced arena see \cref{fig:ef_tm_example}.

\begin{definition}
  The \emph{shallow clique-width} of an arena $\GameGraphFullDef{\G}{\V}{\E}$ is the smallest $\Param$
  such that there exists a tree-model $\TreeModelShort$ with 
  $|\C| \leq \Param$ and $\T$ of height at most $\Param$ such that the induced arena is $\G$.
\end{definition}

One might observe that shallow clique-width is similar to the notion of shrub-depth \cite{shrub_intro,lmcs:5149}.
The main difference, however,
is that for a shallow clique-width's tree-model $\TreeModelShort$ and an internal node $l$ of $\T$ a pair $\Tuple{c, c'} \in \TCS(l)$
affects all pairs $\Tuple{v, v'}$ of leaves in the subtree of $\T$ rooted at $l$ such that $\Color(v) = c$ and $\Color(v')=c'$,
whereas in shrub-depth it is additionally restricted to only those for which the lowest common ancestor  in $\T$ is $l$.

It is worth mentioning that if a class of graphs has bounded shallow clique-width, then it 
has bounded clique-width and shrub-depth.
The former observation follows from the definition of clique-width,
the latter one from the fact that in the shallow clique-width's tree-model there are at most $2^{|\C\times\C|}$ sets $\TCS(l)$
at each level of the tree-model. So at the expense of an increase in the number of colors, we can get rid 
of these sets and describe connections using only the distance to the lowest common ancestor and colors.

\subsection{Tree-Depth}\label{ssec:td_intro}
Tree-depth was originally defined for undirected graphs \cite{treedepth_def}. 
Here we adapt this definition to arenas by simply forgetting the orientation of edges.
\begin{definition}
    A forest of rooted trees $\Forest$ is an \emph{elimination forest} of an arena $\GameGraphFullDef{\G}{\V}{\E}$
    if vertices of $\Forest$ are vertices of $\G$ and
    $\Pair{v}{w} \in \E$ implies that $v$ and $w$ are in the same tree $\Tree$ in $\Forest$
    and either $v$ is an ancestor of $w$ or $w$ is an ancestor of $v$ in $\Tree$ (see \cref{fig:ef_tm_example}).
    Tree-depth of an arena $\G$ is the smallest height 
    of an elimination forest of $\G$.
\end{definition}
\begin{restatable}{lemma}{LemmaTreeDepthShortPaths}\label{lemma:treedepth_short_paths}\cite[Chapter 6.2]{sparsity}
    If an arena $\G$ has an elimination forest of height $k$, then any simple path in $\G$
    has length at most $2^{k+1} - 2$.
\end{restatable}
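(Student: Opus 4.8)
The plan is to prove the bound by induction on the height $k$ of the elimination forest, working with the underlying undirected graph: since every directed simple path in $\G$ is also a simple path after forgetting edge orientations, and the elimination-forest condition only refers to the ancestor--descendant relation, it suffices to bound undirected simple paths. The base case $k=0$ is immediate: every tree of $\Forest$ is a single vertex, so by the elimination-forest property $\G$ has no edges at all, and every simple path has length $0 = 2^{1}-2$.

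For the inductive step I would fix a simple path $P$ and first note that all its vertices lie in one tree $\Tree$ of $\Forest$, since an edge $\Pair{v}{w}$ forces $v$ and $w$ into a common tree. Let $r$ be the root of $\Tree$, and let $\Tree_1,\ldots,\Tree_m$ be the subtrees hanging off the children of $r$; each $\Tree_i$ is an elimination forest of the induced subarena on its vertices and has height at most $k-1$. The key structural observation is that there are \emph{no edges of $\G$ between distinct subtrees}: if $v$ lies in $\Tree_i$ and $w$ in $\Tree_j$ with $i \neq j$, then neither is an ancestor of the other in $\Tree$, so by definition $\Pair{v}{w} \notin \E$ and $\Pair{w}{v} \notin \E$. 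Consequently every edge of $\G$ is either incident to $r$ or stays inside a single subtree.

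I would then split into two cases. If $P$ avoids $r$, the observation above confines $P$ to a single $\Tree_i$, so the induction hypothesis gives length at most $2^{k}-2 \leq 2^{k+1}-2$. If $P$ passes through $r$, then because $P$ is simple it visits $r$ exactly once; deleting $r$ breaks $P$ into at most two subpaths, each avoiding $r$ and hence each contained in one subtree. By induction each subpath has length at most $2^{k}-2$, and reconnecting them through $r$ adds at most two further edges, for a total length at most $(2^{k}-2)+(2^{k}-2)+2 = 2^{k+1}-2$, exactly as claimed.

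The only real content is the structural no-cross-edge claim, which is what lets the two halves of the path be bounded independently; the exponential bound then arises precisely from gluing two near-extremal subpaths through the root, roughly doubling the length at each level. The points requiring care are the height bookkeeping (a single vertex has height $0$, so removing the root drops the height by exactly one) and the additive ``$+2$'' accounting for the two edges incident to $r$ when recombining the subpaths; getting these constants right is what makes the bound $2^{k+1}-2$ come out tight rather than off by a factor or an additive constant.
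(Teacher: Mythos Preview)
Your proof is correct and follows exactly the approach the paper indicates: the paper only says that the lemma ``can be proven by induction on the height of the elimination forest'' and cites \cite[Chapter 6.2]{sparsity}, which is precisely the induction you carry out. Your detailed treatment of the no-cross-edge observation and the additive $+2$ bookkeeping is sound and yields the stated bound.
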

The lemma above can be proven by induction on the height of the elimination forest.

One can think of graphs of bounded tree-depth as graphs of bounded tree-width \cite[Chapter 12]{10.5555/3134208} with an 
additional constraint on the height of the tree-decomposition. More precisely, if
a graph $\G$ has a tree-decomposition of height $h$ where each bag contains at most $k$ vertices,
then it has tree-depth at most $k(h+1)$ \cite[Chapter 6.4]{sparsity} (to obtain the elimination forest from the tree-decomposition, we leave each vertex of $\G$ in the bag that is closest to 
the root of the tree-decomposition and then replace each bag in the tree-decomposition with a path of its elements).
What is more, if a graph has tree-depth $k$, then it admits a tree-decomposition of height bounded by $k + 1$ where each bag contains at most $k + 2$ vertices
(to obtain the tree-decomposition, we take the elimination forest $\Forest$, 
make it a tree $\Tree$ by choosing a root $r$ of an arbitrary tree in $\Forest$ and connecting other roots to $r$, 
and define bags for vertices of $\Tree$ as sets of vertices on vertex-$r$ paths).

  Shallow clique-width is more general than tree-depth.
  First, for an arena $\G$ of tree-depth $k$ one can use $2(k + 1)3^{k + 1}$ colors to create a tree-model of height $k + 1$
  that induces $\G$. This is because we can take some elimination forest $\Forest$ of $\G$ of height $k$ and color each vertex according to its player-membership in $\G$, and depth and connections to ancestors in $\Forest$
  (for each vertex $v$ and its ancestor $w$ we have that either there is a directed edge from $v$ to $w$, to $v$ from $w$, or $v$ and $w$ are not connected in $\G$).
  On the other hand, arenas where each vertex has a directed edge to every vertex have small shallow clique-width, but large tree-depth.

  The major advantage of tree-depth over shallow clique-width is that given an arena $\G$ of tree-depth at most $k$, we know how to efficiently find 
  an elimination forest of $\G$ of height at most $k$ (details in \cref{sec:EX}).
  It is also worth noting that despite the simplicity of tree-depth, some problems are $\WOHard$ when parameterized by it \cite[Lemma 2, Proposition 5]{DBLP:journals/corr/GajarskyLO13}.
  
  As mentioned before, in this work, we take a detour by first proving that solving parity games played on arenas of bounded
  shallow clique-width is in logspace uniform $\ACO$ (assuming an appropriate graph description is provided) and then providing a reduction.
  We decided to do so as thinking about vertices of the same color as the ones that will be treated similarly 
  turned out to be conceptually simpler from our perspective. What is more, such an approach gave us a 
  logspace uniform family of $\ACO$-circuits for a class of graphs that are not sparse.

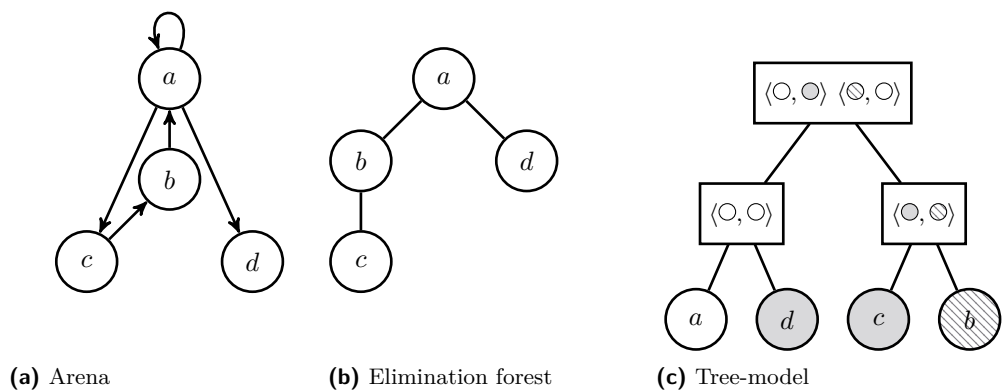
\begin{figure}
  \centering
  \begin{subfigure}{.3\textwidth}
      \centering
      \begin{tikzpicture}[>=stealth', line width=1pt, minimum size=0.8cm]
        \node at (0, 2) (z) {};
        \node at (0, -2) (z2) {};
        \node[draw, circle] at (0, 1.2) (a) {$a$};
      \node[draw, circle] [below = 0.5 of a] (b) {$b$};
      \node[draw, circle] [below left = 0.5 and 0.5 of b] (c) {$c$};
      \node[draw, circle] [below right = 0.5 and 0.5 of b] (d) {$d$};
  
      \draw[->] (a) -- (d);
      \draw[->] (a) -- (c);
      \draw[<-] (a) -- (b);
      \draw[<-] (b) -- (c);

      \draw[->] (a) to [out=70,in=110,looseness=6]  (a);

      \end{tikzpicture}
      \subcaption{Arena}
    \end{subfigure}%
    \begin{subfigure}{.3\textwidth}
      \begin{tikzpicture}[>=stealth', line width=1pt, minimum size=0.8cm]
        \node at (0, 2) (z) {};
        \node at (0, -2) (z2) {};
        \node[draw, circle] at (0, 1.2) (a) {$a$};
        \node[draw, circle] [below left = 0.5 and 0.5 of a] (b) {$b$};
        \node[draw, circle] [below = 0.5 of b] (c) {$c$};
        \node[draw, circle] [below right = 0.5 and 0.5 of a] (d) {$d$};
  
        \draw (b) -- (a);
        \draw (c) -- (b);
        \draw (d) -- (a);

      \end{tikzpicture}
      \subcaption{Elimination forest}
    \end{subfigure}
    \begin{subfigure}{.33\textwidth}
      \centering
      \begin{tikzpicture}[>=stealth', line width=1pt, minimum size=0.8cm]
        \node at (0, 4) (z) {};
        \node at (0, 0) (z2) {};
        \node[draw, circle] at (0, 0) (a) {$a$};
          \node[draw, circle, fill=lipicsLightGray] at (1.2, 0) (b) {$d$};
          \node[draw, circle, fill=lipicsLightGray] at (2.4, 0) (c) {$c$};
          \node[draw, circle, pattern=north west lines, pattern color=lipicsLineGray] at (3.6, 0) (d) {$b$};
  
          \node[draw] at (0.6, 1.4) (n1) {\makecell[c]{
          $\Pair{\WCircle}{\WCircle}$}};

          \node[draw] at (3, 1.4) (n2) {\makecell[c]{
          $\Pair{\GCircle}{\SCircle}$}};

          \node[draw] at (1.8, 3.0) (n3) {\makecell[c]{
          $\Pair{\WCircle}{\GCircle}$
          $\Pair{\SCircle}{\WCircle}$}};

          \draw (a) -- (n1);
          \draw (b) -- (n1);

          \draw (c) -- (n2);
          \draw (d) -- (n2);
  
          \draw (n1) -- (n3);
          \draw (n2) -- (n3);

      \end{tikzpicture}
      \subcaption{Tree-model}
    \end{subfigure}
  \caption{Arena (a), an elimination forest for the arena (b) and a tree-model that induces the arena (c) (all vertices belong to one player).}\label{fig:ef_tm_example}
\end{figure}

\section{Tools}\label{sec:tools}
Now we proceed with tools for constructing a logspace uniform family of circuits for 
parity games played  on arenas with
bounded shallow clique-width.
The approach presented here is similar to the one used in \cite{10.5555/2392389.2392399} for creating a polynomial-time algorithm for parity games
played on arenas of bounded clique-width.
Here we also operate on some succinct information characterizing strategies that we call enforcements.
Enforcements are an adaptation of \Obrzdalek's borders \cite{10.1007/978-3-540-45069-6_7} from tree-width to  shallow clique-width.
The main idea behind borders is to store for a separating set $S$, a strategy $\ST_\PE$, and a vertex $v$ information about the worst arrivals at vertices
of $S$
that opponent can achieve when starting from $v$ and playing against $\ST_\PE$,
whereas in enforcements we are interested in knowing what the worst arrivals at vertices of a specific color can be.
Our adaptation of borders is analogous to the one considered in \cite{10.5555/2392389.2392399}.

We 
additionally extend and modify this approach
 to obtain a logspace uniform family of constant-depth circuits.

\subsection{Parity Games With Stop}

\begin{definition}
  We define a \emph{stop parity game} as a parity game where player $\PE$ has an additional
  move called $\STOP$ at his positions.
  Executing this move ends the game in a draw. 
\end{definition}

\begin{observation}\label{observation:safe_default_pairty_win}
  In an arena $\GameGraphFullDef{\G}{\V}{\E}$ with vertices ranked by $\R : \V \FT \Nat$,
  winning regions for player $\PE$ in both stop and normal parity games
  are equal. What is more if a strategy $\ST_\PE$ is winning in either of the games,
  then it is winning in the other one, as $\ST_\PE$ cannot be forced to use $\STOP$.
\end{observation}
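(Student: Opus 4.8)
The plan is to exploit the single asymmetry between the two games: the extra $\STOP$ move is available only to player $\PE$, and using it yields a draw rather than a win for $\PE$. Since the vertex set, the edge set and the ranking $\R$ are identical in both games, every play that never takes the $\STOP$ move is simultaneously a legal play of the stop game and of the ordinary game, and it is won by the same player (the winning condition on non-$\STOP$ plays is literally the same). Thus the whole statement reduces to showing that a strategy winning for $\PE$ never has to select $\STOP$. Write $\WinS{\PE}$ for the winning region of $\PE$ in the ordinary game and $\WinS{\PE}^{\mathrm{stop}}$ for the one in the stop game.

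First I would dispatch the easy inclusion $\WinS{\PE}\subseteq\WinS{\PE}^{\mathrm{stop}}$. If $\ST_\PE$ is winning for $\PE$ from $v$ in the ordinary game, then $\ST_\PE$ is in particular a strategy that never selects $\STOP$, so the plays from $v$ consistent with $\ST_\PE$ in the stop game are exactly those in the ordinary game. Each such play is won by $\PE$ and none ends in a draw, so $\ST_\PE$ is winning from $v$ in the stop game as well.

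For the converse $\WinS{\PE}^{\mathrm{stop}}\subseteq\WinS{\PE}$ I would argue by contradiction that a strategy $\ST_\PE$ winning for $\PE$ from $v$ in the stop game selects $\STOP$ on no play consistent with it and reachable from $v$: such a play would terminate in a draw, contradicting that every consistent play is won by $\PE$. Hence all consistent plays from $v$ avoid $\STOP$, so they coincide with the consistent plays of the ordinary game from $v$ and are all won by $\PE$ there, giving $v\in\WinS{\PE}$. Combining the two inclusions yields equality of the winning regions, and the very same reasoning shows that a fixed strategy $\ST_\PE$ is winning in one game if and only if it is winning in the other.

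The only point that needs a little care --- and the main, admittedly minor, obstacle --- is the status of a vertex in $\V_\PE$ with no outgoing edge: in the ordinary game such a vertex ends the play at a $\V_\PE$-position (a loss for $\PE$), whereas in the stop game $\STOP$ is the only move there. The contradiction argument already rules out reaching such a vertex under a winning stop-strategy, since the only available move $\STOP$ draws; so along consistent plays player $\PE$ always has a genuine successor and the two games truly agree. This is exactly what makes the slogan ``$\ST_\PE$ cannot be forced to use $\STOP$'' precise.
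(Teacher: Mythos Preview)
Your argument is correct and is precisely the elaboration the paper intends: the paper states this as an observation with only the one-line justification ``$\ST_\PE$ cannot be forced to use $\STOP$,'' and your two inclusions together with the dead-end remark spell this out. There is nothing to add.
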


\begin{definition}
  We say that a strategy $\ST_\PE$ is \emph{safe from} $v$ in a stop parity game $\G$,
  if all plays that start at $v$ and are consistent with $\ST_\PE$ are not losing for player $\PE$.
\end{definition}
Stop parity games will turn out to be useful for solving parity games on arenas that can be extended.

\subsection{Enforcements}
\begin{definition}
  Let $\Values(s)$ denote the set of vertices that appear in the sequence $s$.
  For a game $\ParityGameDef{\G}{\V}{\E}{\R}$, we define a function
  $\PlayMaxRank{\G}(\PlayPref) = \max(\Values(\R(\PlayPref)))$.
  That is for a prefix of a play, return the maximum rank of its vertex.
\end{definition}

\begin{definition}
  We define a total order $\RCMP$ on natural numbers as $p \RCMP q \IFF (-1)^pp < (-1)^qq$ and 
  $p \RCMPEQ q \IFF (p \RCMP q  \vee p = q)$. That is $\ldots \RCMP 5 \RCMP 3 \RCMP 1 \RCMP 0 \RCMP 2 \RCMP 4 \RCMP 6 \RCMP \ldots$.
\end{definition}

\begin{definition}
  For a stop parity game $\ParityGameDef{\G}{\V}{\E}{\R}$, a strategy $\ST_\PE$ and $v \in \V$
  we define a partial function 
  \begin{align*}
    \EnforcementV{\G}{\ST_\PE}{v}(w)=\begin{cases}
      \MinB{\RCMPEQ}{\{\PlayMaxRank{\G}(\PlayPref) : \PlayPref \in \PlayFromTo{\G}{\ST_\PE}{v}{w}  \wedge \ST_\PE(\PlayPref) = \STOP} \} &\MIF w \in \V_\PE.\\
      \MinB{\RCMPEQ}{\{\PlayMaxRank{\G}(\PlayPref) : \PlayPref \in \PlayFromTo{\G}{\ST_\PE}{v}{w} \}} &\MIF  w \in \V_\PO
    \end{cases}
  \end{align*}
  Where $\MinB{\RCMPEQ}(A)$ is a partial function that for a finite subset of natural numbers $A$
  gives the smallest element of $A$ according to $\RCMPEQ$ and is undefined for $A = \emptyset$.
\end{definition}
Intuitively for $w \in \V_\PO$ the operation above returns the worst possible arrival at $w$ that player $\PO$ can force when playing 
against the strategy $\ST_\PE$ and for $w \in \V_\PE$  the worst arrival at $w$ such that  $\ST_\PE$
will decide to stop the game.

\begin{definition}
  We naturally adapt the definition above to games $\ParityGameDef{\G}{\V}{\E}{\R}$ with player-aware coloring $\Color: \V \FT \CPair{}$:
  \begin{align*}
    \EnforcementC{\G}{\TreeModelName}{\ST_\PE}{v}(c) = \MinB{\RCMPEQ}{\{\EnforcementV{\G}{\ST_\PE}{v}(w) : w \in \Domain{\EnforcementV{\G}{\ST_\PE}{v}} \wedge \Color(w) = c\}}
  \end{align*}
\end{definition}

  An \emph{enforcement} is any partial function from $\C$ to $\{0,\ldots,\max(\R(\V))\}$.
  From now on by 
  \emph{enforcement from $v$}
   of a strategy $\ST_\PE$ in $\G$
   we will mean
  $\EnforcementC{\G}{\TreeModel}{\ST_\PE}{v}$. Note that this is always an enforcement
  and that $\ST_\PE$ can have different enforcements from different vertices (see \cref{fig:enf_example}).
  
\begin{figure}
    \centering
    \begin{subfigure}{.3\textwidth}
        \centering
        \begin{tikzpicture}[>=stealth', line width=1pt, minimum size=0.8cm]
          \node[draw, fill=lipicsLightGray] at (0, 0) (s) {$a:0$};
          \node[draw, circle] at (-1, -1.2) (a) {$b:2$};
          \node[draw, circle] at (1, -1.2) (b) {$c:0$};

          \node[draw, pattern=north west lines, pattern color=lipicsLineGray] at (0, -2.4) (t) {$d:1$};

          \draw[->] (s) -- (a);
          \draw[->] (s) -- (b);
          \draw[->] (a) -- (t);
          \draw[->] (b) -- (t);
          \draw[<->] (b) -- (a);

          \draw[->] (t) to [out=290,in=250,looseness=6]  (t);

        \end{tikzpicture}
        \subcaption{Parity game}
      \end{subfigure}%
      \begin{subfigure}{.7\textwidth}
        \begin{tikzpicture}[>=stealth', line width=1pt, minimum size=0.8cm]
          \node[draw, fill=lipicsLightGray] at (0, 0) (s) {$a:0$};
          \node[draw, circle] at (-1, -1.2) (a) {$b:2$};
          \node[draw, circle] at (1, -1.2) (b) {$c:0$};

          \node[draw, pattern=north west lines, pattern color=lipicsLineGray] at (0, -2.4) (t) {$d:1$};

          \draw[->] (s) -- (a);
          \draw[->] (a) -- (t);
          \draw[->] (b) -- (t);
          \draw[<->] (b) -- (a);

          \phantom{\draw[->] (t) to [out=290,in=250,looseness=6]  (t);}
  
          \node at (4.2, 0) {$\EnforcementC{\G}{\TreeModelName}{\ST_\PE}{a} = \FunAugment{\FunAugment{}{\WCircle}{2}}{\SCircle}{2}$};
          \node at (4.2, -1) {$\EnforcementC{\G}{\TreeModelName}{\ST_\PE}{b} = \FunAugment{\FunAugment{}{\WCircle}{2}}{\SCircle}{2}$};
          \node at (4.2, -2) {$\EnforcementC{\G}{\TreeModelName}{\ST_\PE}{c} = \FunAugment{\FunAugment{}{\WCircle}{0}}{\SCircle}{1}$};
          \node at (4.2, -3) {$\EnforcementC{\G}{\TreeModelName}{\ST_\PE}{d} = \FunAugment{}{\SCircle}{1}$};
        \end{tikzpicture}
        \subcaption{Parity game restricted to $\ST_\PE$}
      \end{subfigure}
    \caption{A parity game $\G$ (a), $\G$ with the moves of player $\PE$ restricted to the positional strategy $\ST_\PE = \FunAugment{\FunAugment{}{\Pair{\PlayPref}{a}}{b}}{\Pair{\PlayPref}{d}}{\STOP}$ 
            along with the enforcements of $\ST_\PE$ from various vertices (b).
            Square vertices belong to player $\PE$, round ones to $\PO$. 
            The letters denote vertices, whereas the numbers inside vertices denote their ranks.
            Background shading denotes vertex color (vertices $b$ and $c$ have the same color, whereas colors of $a$, $b$ and $d$ are pairwise distinct).
            }\label{fig:enf_example}
\end{figure}

\begin{observation}\label{observation:enf_no_stop_c}
  Enforcement $\EnforcementC{\G}{\TreeModelName}{\ST_\PE}{v}$ is defined for some color of player $\PE$ 
  (i.e. color $c$ such that  $c \in \C_\PE$)
  if and only if there is a finite play $\Play$ consistent with $\ST_\PE$ that starts at $v$
  such that $\ST_\PE(\Play) = \STOP$.
\end{observation}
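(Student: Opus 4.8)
The plan is to prove both implications by unfolding the relevant definitions, relying on two facts: that $\Color$ is player-aware, so that a vertex carries a color of $\C_\PE$ exactly when it belongs to player $\PE$, and that executing $\STOP$ terminates a play. First I would translate the left-hand side. Since $\Color$ is a player-aware coloring, a vertex $w$ receives a color in $\C_\PE$ precisely when $w \in \V_\PE$. Combining this with the definition of $\EnforcementC{\G}{\TreeModelName}{\ST_\PE}{v}$ (whose value at $c$ is a $\MinB{\RCMPEQ}$ over the $w$ with $\Color(w)=c$, and is therefore defined iff that index set is nonempty), the enforcement is defined for some $c \in \C_\PE$ if and only if there is a vertex $w \in \V_\PE$ lying in $\Domain{\EnforcementV{\G}{\ST_\PE}{v}}$.

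Next I would unfold the $w \in \V_\PE$ branch of $\EnforcementV{\G}{\ST_\PE}{v}$. As $\MinB{\RCMPEQ}$ is undefined only on the empty set, such a $w$ lies in the domain exactly when the set $\{\PlayMaxRank{\G}(\PlayPref) : \PlayPref \in \PlayFromTo{\G}{\ST_\PE}{v}{w} \wedge \ST_\PE(\PlayPref) = \STOP\}$ is nonempty, i.e. when some prefix $\PlayPref$ starting at $v$, consistent with $\ST_\PE$, and ending at $w$ satisfies $\ST_\PE(\PlayPref) = \STOP$. Thus the left-hand side is equivalent to the existence of a vertex $w \in \V_\PE$ together with such a prefix ending at it, and it remains to match this against the existence of a finite play that stops.

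For the direction ($\Leftarrow$), given a finite play $\Play$ consistent with $\ST_\PE$ starting at $v$ with $\ST_\PE(\Play) = \STOP$, I would set $w := \PlayLastVertex{\Play}$. Since $\STOP$ is available only at positions of player $\PE$, we have $w \in \V_\PE$; viewing $\Play$ as its own prefix shows $\Play \in \PlayFromTo{\G}{\ST_\PE}{v}{w}$, so the set above is nonempty and the left-hand side holds. For ($\Rightarrow$), given a prefix $\PlayPref$ ending at some $w \in \V_\PE$ with $\ST_\PE(\PlayPref) = \STOP$, I would argue that $\PlayPref$ is in fact a complete finite play: because executing $\STOP$ ends the game, no play consistent with $\ST_\PE$ can extend $\PlayPref$, so $\PlayPref$ is itself a finite play consistent with $\ST_\PE$ starting at $v$ on which $\ST_\PE$ returns $\STOP$, and taking $\Play := \PlayPref$ finishes the argument.

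The only step requiring genuine care, and hence the main (minor) obstacle, is this last observation in ($\Rightarrow$): that a prefix on which the strategy returns $\STOP$ cannot be a proper prefix of a longer consistent play, so that it really is a bona fide finite play in the sense of the theorem statement rather than merely an intermediate position. Everything else is a direct unfolding of the definitions of $\EnforcementC{\G}{\TreeModelName}{\ST_\PE}{v}$, $\EnforcementV{\G}{\ST_\PE}{v}$, and the prefix sets $\PlayFromTo{\G}{\ST_\PE}{v}{w}$, together with the player-awareness of $\Color$.
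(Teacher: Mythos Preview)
Your proposal is correct and is precisely the definition-unfolding argument one would expect; the paper itself treats this as an observation and gives no proof at all, so your write-up simply makes explicit the reasoning the paper leaves to the reader. The only point worth flagging is the one you already highlighted: that a prefix $\PlayPref \in \PlayFromTo{\G}{\ST_\PE}{v}{w}$ with $\ST_\PE(\PlayPref)=\STOP$ is in fact a full play, which follows because by definition $\PlayPref$ is a truncation of some $\Play \in \PlayFrom{\G}{\ST_\PE}{v}$ and consistency of $\Play$ with $\ST_\PE$ forces $\Play = \PlayPref$.
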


\begin{definition}
  We define a partial order $\ECMPEQ$ on enforcements as follows:\\
  $P \ECMPEQ Q \IFF \Domain{P} \subseteq \Domain{Q} \wedge \FA{a \in \Domain{P}}{Q(a) \RCMPEQ P(a)}$\\
  $P \ECMP Q \IFF  P \ECMPEQ Q \wedge P \not= Q$.
\end{definition}
Intuitively $P \ECMP Q$ means that $Q$ describes worse arrival scenarios from player $\PE$ perspective than $P$.

\subsubsection*{Operations on Enforcements}

Clearly, modifications of a strategy $\rho_\PE$ 
can result in modifications of enforcements of this strategy.
What is more, some of these modifications can be done directly on the enforcements without knowing
the original strategy.
In the further parts of this work, we will be interested in 
modifications that adapt strategies of player $\PE$ from an arena $\G$ 
to an arena $\G'$, that was made from $\G$ by addition of directed edges
between vertices of specified colors.
It will usually suffice to remember enforcements of a 
strategy instead of the strategy itself and to perform respective operations
on enforcements. Below we introduce some useful operations on enforcements.

\bigskip
The first operation
corresponds to a modification of a strategy $\ST_\PE$ to the strategy $\ST'_\PE$
that behaves like $\ST_\PE$, but each time $\ST_\PE$ decides to $\STOP$ at some vertex of a specified color
$c$ $\ST'_\PE$ uses the option to move to the vertex from which the game was started. Moreover, we want to ensure that $\ST'_\PE$ 
will never close a cycle with the highest rank being odd by doing so.
\begin{definition}
  For $\ParityGameFullDefC{\G}{\V}{\E}{\R}{\Color}$,
  enforcement $P$ and color $c$ we define
  \begin{align*}
    \EnfLoop{P}{c} =\begin{cases}
      P  & \MIF c \in \C_\PE \wedge  c \not\in \Domain{P}\\
      \FunAugment{P}{c}{\UndefinedM} &\MIF c \in \C_\PE \wedge P(c) \text{ is even }
    \end{cases}
  \end{align*}
\end{definition}
Please note that $\EnfLoop{P}{c}$ is undefined if $c \in \C_\PO$ or
$P(c)$ is odd. 
  Let $P$ be the enforcement of $\ST_\PE$ from some vertex $v$.
Let $c$ be any element of $\C_\PE$ and assume that from 
each vertex of color $c$ there is an edge to $v$.
Now, the first case in the definition of $\EnfLoop{P}{c}$ corresponds
to the situation when $\ST_\PE$ will never stop at a vertex of color $c$
when starting the game from $v$. Whereas the second one to the situation 
when $\ST_\PE$ can be forced to stop at a vertex of color $c$, but the prefix 
of the play that will lead here will always have even max rank.

\begin{definition}
  For $\ParityGameFullDefC{\G}{\V}{\E}{\R}{\Color}$, $r \in \Nat$,
  color $c$ and enforcements $P$, $Q$, we define two partial functions
  \begin{align*}
    \EnfLift{Q}{r}(c) &= \max(Q(c), r) \;\MIF c \in \Domain{Q}\\
    \EnfUnion{P}{Q}(c) &= \Min{\RCMPEQ}(P(c), Q(c))
  \end{align*}
  In this writing, we allow one of $\Min{\RCMPEQ}(P(c), Q(c))$ arguments to be \UndefinedT{} 
  and define $\Min{\RCMPEQ}(P(c), Q(c))$ to be the defined one (or \UndefinedT{} when both are \UndefinedT{}).
  The $\max(Q(c), r)$ in the definition of $\EnfLift{Q}{r}(c)$ is taken using the standard order on $\Nat$.
\end{definition}

Another operation is for composing strategies
in a way that corresponds to playing one strategy and switching to another when either reaching 
vertices of specified color $c \in \C_\PO$ or stopping at vertices of specified color $c \in \C_\PE$.
\begin{definition}
  For $\ParityGameFullDefC{\G}{\V}{\E}{\R}{\Color}$,
  color $c$ and enforcements $P$, $Q$, we define
  \begin{align*}
    \EnfMerge{\Color}{P}{c}{Q} = 
    \begin{cases}
      \EnfUnion{P}{\EnfLift{Q}{P(c)}} & \MIF c \in \Domain{P} \wedge c \in \C_\PO \\
        \EnfUnion{\FunAugment{P}{c}{\UndefinedM}}{\EnfLift{Q}{P(c)}} & \MIF c \in \Domain{P}  \wedge c \in \C_\PE\\
        P & \MOTHERWISE
    \end{cases}
  \end{align*}
\end{definition}
Intuitively, taking $\EnfLift{Q}{P(c)}$ in the formula above corresponds to extending plays that were used 
to calculate $P$ with the ones being used to calculate $Q$. More detailed explanation 
is the following.

\begin{lemma}\label{lemma:properties_of_merge}
  Properties of $\EnfMerge{}{P}{m}{Q}$ and $\EnfLoop{P}{m}$:
  \begin{enumerate}
    \item Let $P, P', Q, Q'$ be enforcements such that 
    $P' \ECMPEQ P$ and $Q' \ECMPEQ Q$, then
    \[\EnfMerge{\Color}{P'}{c}{Q'} \ECMPEQ \EnfMerge{\Color}{P}{c}{Q}\]
    \item 
    Let $c$ be a color of player $\PO$.
    Let $\ST_\PE$, $\ST'_\PE$, be
    such that $\ST_\PE$ is safe from $v$, whereas $\ST'_\PE$ is safe from every vertex  
    of color $c$ in $\G$. Let $\ST^{\prime\prime}_\PE$ be like
    $\ST_\PE$, but when $\ST_\PE$ reaches a vertex of color $c$,
    $\ST^{\prime\prime}_\PE$ forgets about the past and switches permanently to $\ST'_\PE$ behavior.
    Then $\ST^{\prime\prime}_\PE$ is safe from $v$ in $\G$ and
    \[\EnforcementC{\G}{\TreeModelName}{\ST^{\prime\prime}_\PE}{v} \ECMPEQ \EnfMerge{\Color}{\EnforcementC{\G}{\TreeModelName}{\ST_\PE}{v}}{c}{\EnfUnion{\EnforcementC{\G}{\TreeModelName}{\ST'_\PE}{w_1},\ldots}{\EnforcementC{\G}{\TreeModelName}{\ST'_\PE}{w_m}}}\]
    where $w_1, \ldots, w_m$ are all vertices of color $c$ and $\EnfUnion{A, B}{C} = \EnfUnion{\EnfUnion{A}{B}}{C}$.
    \item Let $\ST_\PE$, $\ST'_\PE$, be strategies that are safe from $v$ and $w$ in $\G$ respectively.
    Let $c$ be a color of player $\PE$ such that from every vertex of color $c$ there is an edge to $w$,
    and let $\ST^{\prime\prime}_\PE$ be like
    $\ST_\PE$, but when $\ST_\PE$ decides to stop at some vertex of color $c$,
    $\ST^{\prime\prime}_\PE$ moves to $w$ instead, forgets about the past and switches permanently to $\ST'_\PE$ behavior.
    Then $\ST^{\prime\prime}_\PE$ is safe from $v$ in $\G$ and
    \[\EnforcementC{\G}{\TreeModelName}{\ST^{\prime\prime}_\PE}{v} = \EnfMerge{\Color}{\EnforcementC{\G}{\TreeModelName}{\ST_\PE}{v}}{c}{\EnforcementC{\G}{\TreeModelName}{\ST'_\PE}{w}}\]
    \item Let $\ST_\PE$ be a strategy
    that is safe from $v$ in $\G$.
    Let $c$ be a color of player $\PE$
    such that from every vertex of color $c$ there is an edge to $v$,
    and let $\ST'_\PE$ be like 
    $\ST_\PE$, but each time  $\ST_\PE$ decides to stop at some vertex of color $c$,
    $\ST'_\PE$ moves to $v$, forgets about the past and resumes as $\ST'_\PE$.
    Then if 
    $\EnfLoop{\EnforcementC{\G}{\TreeModelName}{\ST_\PE}{v}}{c}$ is defined,
    we have that $\ST'_\PE$ is safe from $v$ in $\G$ and
    \[\EnforcementC{\G}{\TreeModelName}{\ST'_\PE}{v} = \EnfLoop{\EnforcementC{\G}{\TreeModelName}{\ST_\PE}{v}}{c}\]
  \end{enumerate}
\end{lemma}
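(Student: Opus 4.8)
The plan is to separate part~(1), which is a purely order-theoretic statement about the operations, from parts~(2)--(4), each of which relates a syntactic operation on enforcements to an explicit surgery on strategies. For part~(1) I would unfold $\EnfMerge{\Color}{P}{c}{Q}$ into its three defining cases and check that every building block is monotone for $\ECMPEQ$: $\Min{\RCMPEQ}$ is monotone in each argument and only enlarges domains, so $\EnfUnion{\cdot}{\cdot}$ and $\EnfLift{\cdot}{r}$ are monotone, whence the composite is too. Because $P'\ECMPEQ P$ forces $\Domain{P'}\subseteq\Domain{P}$ and the membership $c\in\C_\PE$ versus $c\in\C_\PO$ is fixed, the case splits for $P$ and $P'$ line up; the only point needing a direct check is $c\in\Domain{P}\setminus\Domain{P'}$, where the $P'$-side takes the \emph{otherwise} branch and one verifies $P'\ECMPEQ\EnfMerge{\Color}{P}{c}{Q}$ from $P'\ECMPEQ P$ together with the fact that merging only worsens values and grows the domain.

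The keystone I would isolate before attacking parts~(2)--(4) is a compatibility lemma between the ordinary order used inside $\EnfLift{\cdot}{r}$ and the order $\RCMPEQ$ used everywhere else: for every fixed $r$ the map $x\mapsto\max(x,r)$ is monotone for $\RCMPEQ$, and if $r$ is even then $\max(x,r)$ is $\RCMPEQ$-no-worse than $x$. Both follow from a short parity case analysis on whether $x$ and $r$ are even or odd. With this in hand, every play consistent with a surgically-defined strategy factors at the first moment it crosses (reaches color~$c$ in part~(2), redirects at a color-$c$ $\STOP$ in parts~(3),(4)): the prefix up to the crossing is consistent with $\ST_\PE$ from $v$ and the remainder with $\ST'_\PE$ from the crossing target, and $\PlayMaxRank{\G}$ of the concatenation is the ordinary maximum of the two pieces' maxima, i.e. exactly $\EnfLift{\cdot}{P(c)}$ applied to the continuation, where $P:=\EnforcementC{\G}{\TreeModelName}{\ST_\PE}{v}$.

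Part~(3) is an equality because there is a single redirect target $w$: writing $Q:=\EnforcementC{\G}{\TreeModelName}{\ST'_\PE}{w}$, each redirecting play is a worst $\ST_\PE$-prefix ending in a color-$c$ $\STOP$ (worst value $P(c)$) followed by an $\ST'_\PE$-play from $w$ (worst value $Q(a)$ at color $a$), so the worst redirecting arrival at color $a$ is exactly $\max(Q(a),P(c))=\EnfLift{Q}{P(c)}(a)$; plays that never redirect are exactly the $\ST_\PE$-plays that do not $\STOP$ at color $c$, contributing $\FunAugment{P}{c}{\UndefinedM}$, and taking the $\RCMPEQ$-worst of the two families yields the $c\in\C_\PE$ branch $\EnfUnion{\FunAugment{P}{c}{\UndefinedM}}{\EnfLift{Q}{P(c)}}$. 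Part~(2) runs the same factorization at a $\PO$-color, but now the crossing may occur at any of the color-$c$ vertices $w_1,\dots,w_m$ and the continuation may start at any of them; replacing the individual continuation enforcements by their union and lifting all of them by the single worst arrival $P(c)$ (while retaining all of $P$) can only make the estimate pessimistic, which is precisely why part~(2) is the inequality $\ECMPEQ$ rather than an equality. Safety in parts~(2),(3) is immediate: any consistent play either stays with $\ST_\PE$ from $v$ or has its entire infinite tail governed by $\ST'_\PE$, so the ranks seen infinitely often, and hence non-losing-ness, are inherited from the component strategies.

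Part~(4) is where I expect the real difficulty, and I would treat it last. Here the modified strategy is self-referential: closing a color-$c$ $\STOP$ returns the token to $v$ and restarts the same strategy, so a consistent play is an a-priori unbounded concatenation of rounds, each an $\ST_\PE$-play from $v$ terminating in a color-$c$ $\STOP$, possibly followed by one final finite or infinite round. The hypothesis that $\EnfLoop{P}{c}$ is defined is exactly what controls this: if $c\notin\Domain{P}$ there are no rounds and $\ST'_\PE$ coincides with $\ST_\PE$; if $P(c)$ is even then, as $P(c)$ is the $\RCMPEQ$-worst color-$c$ $\STOP$ arrival, every round has even maximal rank, and since there are only finitely many ranks a pigeonhole argument shows that the maximal rank occurring infinitely often along any infinite looping play is even, so $\PE$ is not losing. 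For the enforcement identity I would invoke the compatibility lemma: prepending even-maximum rounds can only $\RCMPEQ$-improve an arrival, so the worst arrival at any color $a\neq c$ is already realized by a zero-round play, which is exactly an $\ST_\PE$-play to a color-$a$ vertex and hence equals $P(a)$; and because $\ST'_\PE$ never stops at color $c$, that entry disappears, matching $\FunAugment{P}{c}{\UndefinedM}=\EnfLoop{P}{c}$. The two obstacles to watch throughout are thus keeping the ordinary $\max$ inside $\EnfLift{\cdot}{r}$ consistent with the $\RCMPEQ$-semantics, dispatched once by the compatibility lemma, and the infinite-looping safety analysis of part~(4), enabled precisely by definedness of $\EnfLoop{P}{c}$.
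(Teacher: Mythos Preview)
Your proposal is correct and matches the paper's approach. The paper does not give a full proof of this lemma; it only remarks that the result follows by ``simple but tedious case analysis'' and identifies as the key step that $p \RCMPEQ p' \wedge q \RCMPEQ q'$ implies $\max(p, q) \RCMPEQ \max(p', q')$, proved by cases on the parities of $p'$ and $q'$. Your ``compatibility lemma'' (that $x\mapsto\max(x,r)$ is $\RCMPEQ$-monotone, and that $\max(x,r)\,\RCMPEQ$-dominates $x$ when $r$ is even) is exactly this key step---your one-variable formulation yields the paper's two-variable version by applying it twice, and the even-$r$ clause is what drives your part~(4) argument. Your factorization of plays at the first crossing, the distinction between the equality in part~(3) (single redirect target) and the inequality in part~(2) (union over all color-$c$ targets gives only a pessimistic bound), and the round-by-round analysis of the looping strategy in part~(4) are all the natural way to carry out the case analysis the paper gestures at, so there is nothing substantively different here---you have simply written out what the paper leaves implicit.
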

The lemma above can be proven by simple but tedious case analysis. The key step in the proof is to show that 
$p \RCMPEQ p' \wedge q \RCMPEQ q'$ implies $\max(p, q) \RCMPEQ \max(p', q')$, what can be done by considering cases on the parity of $p'$ and $q'$.

\subsubsection*{Enforcements and Strategies}
For a set of enforcements $\EnfSetAny{\G}{v}$ and a vertex $v$ of a game $\G$ we consider the following properties.
\begin{enumerate}
  \item (\emph{soundness}) If $P \in \EnfSetAny{\G}{v}$, then there exists a (possibly non-positional) strategy $\ST_\PE$
  that is safe from $v$ in $\G$ such that $\EnforcementC{\G}{\TreeModelName}{\ST_\PE}{v} \ECMPEQ P$.
  \item (\emph{completeness}) For each positional strategy $\ST_\PE$ that is safe from $v$ in $\G$
  we have that $\EnforcementC{\G}{\TreeModelName}{\ST_\PE}{v} \in \EnfSetAny{\G}{v}$.
  \item (\emph{up-closure}) If $P \in \EnfSetAny{\G}{v}$ and $P \ECMPEQ Q$ then $Q \in \EnfSetAny{\G}{v}$.
\end{enumerate}
We will say that $\EnfSetAny{\G}{v}$ is \emph{valid} for $v$ if it is  sound, complete and up-closed for $v$.

\medskip

From now on we will assume that for a stop parity game $\ParityGameDef{\G}{\V}{\E}{\R}$
we always have $\R(V) \subseteq \{0, \ldots, 2|\R(V)| - 1\}$,  and denote $2|\R(V)|$ by $d$,
as otherwise we can sort elements of $\R(V)$ and put them in the desired range.

\begin{lemma}\label{theorem:safety_parity_enf_win}
  Given a set $\EnfSetAny{\G}{v}$ of enforcements in $\G$ that is valid for $v$, 
  we have that
   player $\PE$ wins the parity game on $\G$'s arena from $v$ if and only if 
   $\EnfSetAny{\G}{v}$ contains an enforcement that is \UndefinedT{} for every color
    of player $\PE$.
   
 \end{lemma}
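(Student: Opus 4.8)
The plan is to prove both implications by connecting the combinatorial condition ``$\EnfSetAny{\G}{v}$ contains an enforcement undefined for every color of player $\PE$'' with the existence of a strategy that is safe from $v$ and never uses $\STOP$, and then to observe that a safe $\STOP$-free strategy is exactly a winning one. The bridge between the two worlds is \cref{observation:enf_no_stop_c}: an enforcement $\EnforcementC{\G}{\TreeModelName}{\ST_\PE}{v}$ is defined on some color of $\C_\PE$ if and only if $\ST_\PE$ can be forced into a finite play from $v$ on which it plays $\STOP$. Hence ``undefined on every color of player $\PE$'' is synonymous with ``$\ST_\PE$ never stops when started at $v$''. I will also use \cref{observation:safe_default_pairty_win} to pass freely between the stop game and the ordinary parity game.

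For the forward direction, assume player $\PE$ wins the ordinary parity game from $v$. By positional determinacy there is a positional strategy $\ST_\PE$ that wins from $v$; by \cref{observation:safe_default_pairty_win} it also wins the stop game from $v$, so every consistent play from $v$ is a win and in particular none of them is a draw, which means $\ST_\PE$ never uses $\STOP$. By \cref{observation:enf_no_stop_c} its enforcement $\EnforcementC{\G}{\TreeModelName}{\ST_\PE}{v}$ is therefore undefined on every color of $\C_\PE$. Since a winning strategy is safe from $v$ and $\ST_\PE$ is positional, completeness of $\EnfSetAny{\G}{v}$ yields $\EnforcementC{\G}{\TreeModelName}{\ST_\PE}{v} \in \EnfSetAny{\G}{v}$, giving the desired enforcement.

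For the converse, let $P \in \EnfSetAny{\G}{v}$ be undefined on every color of player $\PE$. By soundness there is a strategy $\ST_\PE$ that is safe from $v$ with $\EnforcementC{\G}{\TreeModelName}{\ST_\PE}{v} \ECMPEQ P$. Unwinding the definition of $\ECMPEQ$, the domain of $\EnforcementC{\G}{\TreeModelName}{\ST_\PE}{v}$ is contained in $\Domain{P}$, so it too avoids every color of player $\PE$; by \cref{observation:enf_no_stop_c} the strategy $\ST_\PE$ never plays $\STOP$ from $v$. Because player $\PE$ always has the $\STOP$ move available, no consistent play from $v$ can end at a vertex of $\V_\PE$, so every finite consistent play ends at a vertex of $\V_\PO$ (a win) and no consistent play is a draw; combined with safety (no losses) this shows every consistent play from $v$ is won by $\PE$. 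Thus $\ST_\PE$ wins the stop game from $v$, and by \cref{observation:safe_default_pairty_win} it wins the ordinary parity game from $v$.

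The routine part is the bookkeeping with $\ECMPEQ$ and the enforcement definitions; the one point that requires care is the distinction between being \emph{safe} (not losing, draws allowed) and \emph{winning}, together with the observation that in a stop game the sole source of a draw is the $\STOP$ move, so a safe strategy that provably never stops is automatically winning. Keeping track of this, and of the fact that the availability of $\STOP$ prevents player $\PE$ from ever being stuck, is the crux of the argument.
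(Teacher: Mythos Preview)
Your proof is correct and follows essentially the same route as the paper's: positional determinacy plus completeness for the forward direction, soundness plus the domain inclusion from $\ECMPEQ$ for the converse, with \cref{observation:enf_no_stop_c} and \cref{observation:safe_default_pairty_win} as the bridges. If anything, you are more explicit than the paper in spelling out why a safe strategy that never plays $\STOP$ is actually winning (finite plays must end in $\V_\PO$ because $\STOP$ is always available, and no draws occur), a step the paper leaves implicit.
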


 \begin{proof}
  If player $\PE$ wins from $v$ in the parity game played on $\G$'s arena, then
  there exists a positional strategy $\ST_\PE$ that wins the parity game starting at $v$, and 
  the enforcement generated by $\ST_\PE$  from $v$ in $\G$ is \UndefinedT{} for
  every color of player $\PE$ (see \cref{observation:enf_no_stop_c}). What is more, 
  since $\EnfSetAny{\G}{v}$ is valid for $v$, 
  from \cref{observation:safe_default_pairty_win} we have that $\EnforcementC{\G}{\TreeModelName}{\ST_\PE}{v} \in \EnfSetAny{\G}{v}$.
  On the other hand, if $\EnfSetAny{\G}{v}$ contains an enforcement $P$
 that is undefined for
 every color of player $\PE$, 
 then since $\EnfSetAny{\G}{v}$ is valid for $v$, there exists a strategy $\ST_\PE$ that is safe from $v$
 such that $\EnforcementC{\G}{\TreeModelName}{\ST_\PE}{v} \ECMPEQ P$, what by \cref{observation:enf_no_stop_c} means that 
 $\ST_\PE$ never uses  $\STOP$.
 \end{proof}

 In the following parts, we will aim to calculate for each vertex of a game a valid set of enforcements.

\section{Circuit Family for Shallow Clique-Width}\label{sec:circ_for_scw}
Now we will show how for $n, \NR > 1$, and $\Param > 0$ create in space $\AlgComplexitySpace$ (and therefore time $\ACAlgComplexity$) 
the circuit $\Circuit_\Tuple{n, \NR, \Param}$
of depth $\ACAlgDepth$, that given the specific representation of a parity game 
of shallow clique-width at most $k$, computes the partition into winning regions.

\subsection{Overview}\label{sec:circ_overview_brief}

Circuit $\Circuit_\Tuple{n, \NR, \Param}$ will be a combination of auxiliary circuits called gadgets.
The circuit will input parity game's $\G$ arena as a tree-model $\TreeModelShort$ (details about the input are presented in \cref{sec:ac0_shrub_circ_in}),
and it will process $\TreeModelName{}$ bottom-up.

The first layers of the circuit will calculate valid sets of enforcements for one-vertex arenas
consisting of leaves of $\T$. To do so gadgets defined in \cref{sss:init_gadget} will be used.
The following layers of the circuit will correspond to computing valid sets of enforcements for larger and larger arenas 
induced by larger and larger subtrees of the given tree-model. 
For the last layers, we will use \cref{theorem:safety_parity_enf_win} to get the partition into winning regions.

To be more precise for an internal node $l$ of $\T$ 
the circuit will consider a subgame $\G_l$ played on the arena induced by $\TreeModelName_l$.
For each vertex of such $\G_l$ the circuit will calculate a valid set of enforcements
using valid sets of enforcements calculated for subgames played on arenas induced by 
children of $l$.
This calculation will be divided into several steps.
First, a disjoint union of arenas induced by children of $l$ will be taken,
and then connections described by $\TCS(l)$ will be added one by one.
For a pair $\Tuple{s, t} \in \TCS(l)$ cases on whether $s$ is a color of player $\PE$ or $\PO$ will be considered.
Let $\G'$ be the subgame of $\G_l$ just before processing of $\Tuple{s, t}$ and $\G''$ be $\G'$ after adding directed edges from all vertices of color $s$ to all of color $t$ in $\G'$.

For the case when $s$ is a color of player $\PE$, we will observe that a positional strategy $\ST_\PE$
that is safe from a vertex $v$ in $G''$ can be modified to a strategy $\ST'_\PE$ 
such that $\ST'_\PE$  is also safe from $v$,
 $\EnforcementC{\G''}{\TreeModelName}{\ST'_\PE}{v} \ECMPEQ \EnforcementC{\G''}{\TreeModelName}{\ST_\PE}{v}$
 and $\ST'_\PE$  behaves similarly to $\ST_\PE$
but when $\ST_\PE$ decides to move along a newly added edge,
then $\ST'_\PE$ chooses to  move to some fixed vertex $w$.
Given such $\ST'_\PE$ we will be able to factor it into two strategies that 
are safe in $\G'$ from $v$ and $w$ respectively.

For the case when $s$ is a color of player $\PO$, we will observe that a positional strategy 
of player $\PE$ that is safe from $v$ in $G''$
can be factored into a positional strategy that is safe from $v$ in $\G'$ and a positional strategy 
that is safe from all vertices of color $t$ in $G'$.

These observations will allow us to calculate valid sets of enforcements for vertices of $\G''$ using valid sets of enforcements
calculated for $\G'$. These updates will be performed using gadgets defined in \cref{sec:update_gadgets}.
The whole procedure of 
adding connections between vertices of specified colors will be handled by the gadget defined in \cref{sss:proc_gadget}.
The details about connections between gadgets and therefore the construction of the whole circuit are presented 
in \cref{ss:circ_constr}.

It is worth noting that the approach presented in this section fails for parity games of bounded shrub-depth.
In short, that is because it makes use of the fact that in shallow clique-width,
we specify connections using colors, whereas in shrub-depth, we 
additionally condition connections on the lowest common ancestor in the tree-model.
To be more precise, when constructing an arena bottom-up from shallow clique-width's tree-model 
$\TreeModelShort$
during the processing of an internal vertex $l$ of $\T$
for $\Pair{\CFrom}{\CTo} \in \TCS(l)$, we have that there is an edge from each leaf of $\T_l$ of
color $\CFrom$ to each leaf of $\T_l$ of color $\CTo$. In shrub-depth, this is additionally conditioned on 
whether $l$ is the lowest common ancestor of the vertices to be connected. This means that in shrub-depth a vertex is 
 distinguished not only by its color but also by the information which subtree of $\T_l$ has the vertex as a leaf.
 Simple attempts to mitigate that either increase the depth of the circuit to be logarithmic or
 the number of possible enforcements to be exponential in the number of vertices.

\subsection{Circuit Input}\label{sec:ac0_shrub_circ_in}
The circuit $\Circuit_{\Tuple{n, \NR, \Param}}$ inputs an $n$-vertex game $\ParityGameFullDefC{\G}{\V}{\E}{\R}{\Color}$ as  
$\R$ and a tree-model $\TreeModelShort$  such that $\R(V) \subseteq \{0, \ldots d - 1\}$, $\NR \leq 2n$,
$|\C| = \Param$, the height of $\T$ is $\Param$ and $\TreeModelName$ induces the arena of $\G$. The whole input is 
encoded as follows. 
\begin{itemize}
    \item For each vertex $v \in \V$: 
    $\Param$ bits that one-hot encode its color (denoted as $\ColorB{}{v}$, one bit for one color, only the bit denoting the true color of the vertex is going to be $\True$), 
    and $\NR$ bits that one-hot encode $\R(v)$ (denoted as $\VRankB{v}$).
    \item $\Param$ bits for color-to-player mapping: $i$'th bit is set to $\True$ if $i$'th color
    belongs to player $\PE$ (denoted as $\CIsEColor{}$).
    \item Description of  $\T$ 
    viewed as a subgraph of the graph consisting of $\Param + 1$ layers, 
    where each layer consists of $n$ vertices, and 
    each of them is connected to all vertices from the layer below (see \cref{fig:node_embedder}). 
    So, for each $l \in \{1, \ldots, \Param\}$ and $i \in \{1, \ldots, n\}$:
    \begin{itemize}
        \item $n$ bits describing the connections to the layer below.
        (denoted as $\TCCB{l}{i}$, for $l=1$ it describes connections to layer $0$, which is formed of vertices from $\V$)
        \item $\Param^2$ bits describing the result of $\TCS$ for a given internal node -- one bit for each pair of colors (denoted as $\TCSB{l}{i}$).
    \end{itemize}
\end{itemize}

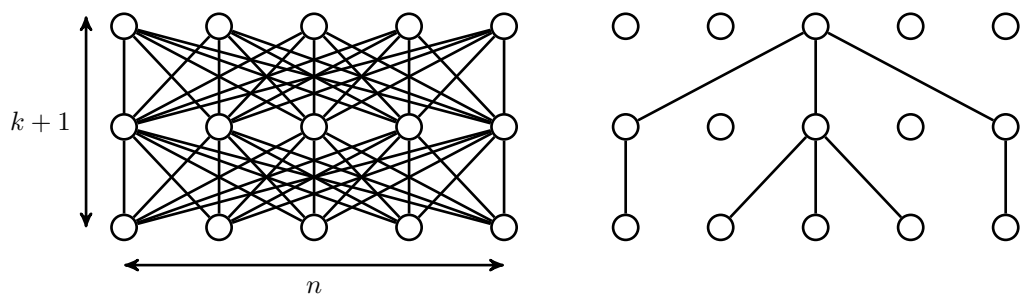
\begin{figure}
  \centering

  \begin{subfigure}{.5\textwidth}
  \begin{tikzpicture}[>=stealth', line width=1pt]
      \foreach \y in {0, ..., 2}
      \foreach \x in {0, ..., 4}
      {
          \pgfmathsetmacro{\posy}{\y*4/3}
          \pgfmathsetmacro{\posx}{\x*5/4}

          \node[draw, circle] at (\posx, \posy) (\x\y) {};   
      }
      \foreach \y in {1, ..., 2}
      \foreach \x in {0, ..., 4}
      {
          \pgfmathtruncatemacro{\ym}{\y - 1}

          \foreach \z in {0, ..., 4}
          {
              \draw (\x\y) -- (\z\ym);
          }

      }

      \draw[<->] (-0.5, 0) -- node [xshift=-6mm] {$\Param + 1$} (-0.5, 2.8);
      \draw[<->] (0, -0.5) -- node [yshift=-3mm] {$n$} (15/3, -0.5);

  \end{tikzpicture}
  \end{subfigure}%
  \begin{subfigure}{.5\textwidth}
  \begin{tikzpicture}[>=stealth', line width=1pt]
      \foreach \y in {0, ..., 2}
      \foreach \x in {0, ..., 4}
      {
          \pgfmathsetmacro{\posy}{\y*4/3}
          \pgfmathsetmacro{\posx}{\x*5/4}

          \node[draw, circle] at (\posx, \posy) (\x\y) {};   
      }

      \draw (22) -- (01);
      \draw (22) -- (41);
      \draw (22) -- (21);

      \draw (01) -- (00);
      \draw (41) -- (40);

      \draw (21) -- (20);
      \draw (21) -- (10);
      \draw (21) -- (30);

      \phantom{
          \draw[<->] (-0.1, 0) -- node [xshift=-6mm] {$\Param + 1$} (-0.1, 2.8);
          \draw[<->] (0, -0.5) -- node [yshift=-3mm] {$n$} (16/3, -0.5);
      }

  \end{tikzpicture}
\end{subfigure}

  \caption{Tree embedding graph (left) and embedding of some tree (right)} \label{fig:node_embedder}.
\end{figure}

We will call a distinguished collection of bits (which can be input bits and gate results) a bit pack.
To denote a specific bit of a bit pack we will use square brackets notation. 
For example $\ColorB{}{v}[c]$ denotes the bit that is true if and only if 
vertex $v$ has color $c$.

For validating input data, one can easily compute in space $\AlgComplexitySpaceL$ parts of the circuit that have constant-depth and validate one-hot encodings.
To validate $\T$, one can compute in space $\AlgComplexitySpaceL$ constant depth parts for validating that:
each node has at most one parent, 
each node from layer $0$ has exactly one parent, 
each node that has at least one child and is not from the last layer has a parent,
each node that has a parent and is not from the layer $0$ has at least one child,
there is exactly one node with children in the last layer.

\subsection{Gadgets}
Here we define gadgets (auxiliary circuits) that will be used to construct the whole circuit.

\subsubsection{Initialization}\label{sss:init_gadget}
Here we define a constant-depth gadget that for a vertex $v$ of $G$ calculates a valid set of enforcements for $v$ in
an edgeless subgame of $G$ that contains only one vertex $v$. We define the gadget so that it outputs empty sets of enforcements for other
vertices of $G$ and marks the vertex $v$ in its output.
\begin{definition}{$\CInitialize{v}$}
  \begin{align*}
    \VertexUsedBit{\OUT}[w] &= (w=v)\\
      \EnfB{\OUT}{w}[P] &= (w=v) \wedge \bigvee_{\Pair{c}{r}: P \text{ is in up-closure of }\{\FunAugment{}{c}{r}\}}{\ColorB{}{w}[c] \wedge \VRankB{w}[r]}
  \end{align*}
  \end{definition}
  Note that for every $P$ the set $\{\Pair{c}{r}: P \text{ is in upward closure of } \{\FunAugment{}{c}{r}\}\}$ can be
  enumerated in space $\mathcal{O}(\log(\NR)\Param)$. 
  Note also that $\{P: P \text{ is in upward closure of } \{\FunAugment{}{c}{r}\}\}$ is a valid enforcement set for 
  a vertex $v$ in one vertex game $\G$ where $v$ has rank $r$ and color $c$.

\subsubsection{Combination Gadget}
We define a constant-depth gadget that given $m$ packs of input bits, each of length $l$
($i$'th bit of $j$'th pack denoted as $\CIn{j}[i]$),
and $m$ bits that encode the choice of the bit pack ($j$'th bit denoted as $\CCond{}[j]$),
outputs the result of the or operation on chosen bit packs.
\begin{definition}{$\CChoose{m}{l}$}
    \[\COut{}[i] = \bigvee_{1 \leq j \leq m}{\CIn{j}[i] \wedge \CCond{}[j]}\]
\end{definition}

\subsubsection{Update Gadget}\label{sec:update_gadgets}
Here we define a constant depth gadget that given for each vertex $v$ of a subgame $\G'$ of $\G$ a valid set of enforcements
outputs a valid set of enforcements for each vertex of the game $\G''$ created from $\G'$ by adding directed edges
from vertices of color $s$ to those of color $t$.

Let $\ParityGameDefC{\G'}{\V'}{\E'}{\R'}{\Color'}$ be any subgame of $\G$.
For a pair of colors $\Pair{\CFrom}{\CTo}$ we define a gadget $\UpdateGadget{\CFrom}{\CTo}$ that inputs
one bit $\CIsEColor{}[\CFrom]$,
and  for each vertex $v$ of $\G$ one bit encoding whether $v \in \V'$ (denoted as $\VertexUsedBit{\IN}[v]$), $\ColorB{}{v}$, and
$\ACRankBits$ bits denoted as $\EnfB{\IN}{v}$. For a vertex $v$ of $\G'$,
$\EnfB{\IN}{v}$ will encode a set of enforcements that is valid for $v$ in $\G'$ 
(one bit for one partial function from $\C$ to $\{0, \ldots, d-1\}$; bit for enforcement $P$ denoted as $\EnfB{\IN}{v}[P]$).
We define $\UpdateGadget{\CFrom}{\CTo}$'s output 
to be $\ACRankBits$ bits for each vertex $v$ of $\G$ (denoted as $\EnfB{\OUT}{v}$). 
For a vertex $v \in \V'$ those bits will encode enforcement set that is valid for $v$ in the game $\G''$
created from $\G'$ by adding connections from vertices colored $\CFrom$ to ones colored $\CTo$.

We consider two cases for a pair of colors $\Pair{\CFrom}{\CTo}$.
The first one is when color $\CFrom$ belongs to player $\PE$ (i.e. $\CIsEColor{}[\CFrom]$ is $\True$) and the 
other is when it belongs to player $\PO$. For the first case,
we use gadget $\GUpdateEV{\CFrom}{\CTo}$, whereas for the second one $\GUpdateOV{\CFrom}{\CTo}$,
both of which are defined below.

\begin{definition}{$\GUpdateEV{\CFrom}{\CTo}$}\\
  \begin{align*}
  \EnfOneMove{}[P] &=  \bigvee_{w}{\VertexUsedBit{\IN}[w] \wedge \ColorB{}{w}[\CTo] \wedge \EnfB{\IN}{w}[P]}\\
  \EnfCycle{}[P] &=  \NoPairWith{P}{\CFrom} \wedge \bigvee_{2r \in \{0, \ldots, \NR - 1\}}{\EnfOneMove{}[\FunAugment{P}{\CFrom}{2r}]}\\
  \EnfOpt{}[P] &= \EnfOneMove{}[P] \vee \EnfCycle{}[P]\\
  \EnfB{\OUT}{v}[P] &= \VertexUsedBit{\IN}[v] \wedge  \left(\bigvee_{S \ECMPEQ P}{\EnfB{\IN}{v}[S]} \vee \bigvee_{\Pair{Q}{R} : \EnfMerge{}{Q}{\CFrom}{R} \ECMPEQ P}{\EnfB{\IN}{v}[Q] \wedge \EnfOpt{}[R]}\right)
  \end{align*}
  Where $\EnfMerge{}{Q}{\CFrom}{R}$ is calculated as if $\CFrom$ was the color of player $\PE$.
\end{definition}

The intuition behind $\GUpdateEV{\CFrom}{\CTo}$ is that we should be able to modify each positional strategy $\ST_\PE$ that is safe from $v$
in game $\G''$ to a strategy $\ST'_\PE$ that behaves similarly to $\ST_\PE$ but when $\ST_\PE$ decides to make a move along a new edge, then $\ST'_\PE$ always chooses to move to a fixed vertex $w$ of color $t$.
What is more, we should be able to make this modification in a way that $\EnforcementC{\G''}{\TreeModelName}{\ST'_\PE}{v} \ECMPEQ \EnforcementC{\G''}{\TreeModelName}{\ST_\PE}{v}$ and $\ST'_\PE$ is also safe from $v$. 
In $\EnfB{\OUT}{v}[P]$ we check for the existence of such $\ST'_\PE$ with $\EnforcementC{\G''}{\TreeModelName}{\ST'_\PE}{v} \ECMPEQ P$.
We do it by splitting $\ST'_\PE$ into two parts: part till the new move (with enforcement $\ECMPEQ Q$) and after (with enforcement $\ECMPEQ R$).
Note that $\ST'_\PE$ defined as above will either make a move along a new edge at most once ($\EnfOneMove{}$) or an unbounded number of times ($\EnfCycle{}$).

\begin{definition}{$\GUpdateOV{\CFrom}{\CTo}$}\\
  First, for each pair of enforcements $Q$, $P$,  and a vertex $v$ such that
  $Q \ECMPEQ P$:
  \begin{align*}
  \AnyGoodEnfForQPv{Q}{P}{v} &= \neg \VertexUsedBit{\IN}[v] \vee \neg \ColorB{}{v}[\CTo] \vee \bigvee_{R : \EvenPair{R}{\CFrom} \wedge \EnfMerge{}{Q}{\CFrom}{R} \ECMPEQ P}{\EnfB{\IN}{v}[R]}\\
  \AnyGoodEnfForQP{Q}{P} &= \bigwedge_{w}{\AnyGoodEnfForQPv{Q}{P}{w}}
  \end{align*}
  Where $\EvenPair{R}{s}$ is $\True$ if and only if $R(s)$ is either even or \UndefinedT{}, and
  $\EnfMerge{}{Q}{\CFrom}{R}$ is calculated as if $\CFrom$ was the color of player $\PO$.\\
  \begin{align*}
  \EnfB{\OUT}{v}[P] &= \VertexUsedBit{\IN}[v] \wedge \bigvee_{Q \ECMPEQ P}{\EnfB{\IN}{v}[Q] \wedge (\NoPairWith{Q}{\CFrom} \vee \AnyGoodEnfForQP{Q}{P})}
  \end{align*}
\end{definition}

The intuition behind $\GUpdateOV{\CFrom}{\CTo}$ is that each safe positional strategy in game $\G''$
should factor to a strategy up to reaching a vertex of color $s$ and a bunch of strategies after the move of player $\PO$.
The part $\AnyGoodEnfForQPv{Q}{P}{v}$ corresponds to checking whether given a safe strategy with enforcement $\ECMPEQ Q$ in game $\G'$
we can make it respond to a move of player $\PO$ to a vertex $v$ in a way that maintains the safety of the strategy and makes its enforcement non-worse than $P$.

We merge auxiliary gadgets using $\CChoose{2}{n\ACRankBits}$ with input packs being the outputs of 
$\GUpdateEV{\CFrom}{\CTo}$ and $\GUpdateOV{\CFrom}{\CTo}$ respectively and $\CCond{}[1] = \CIsEColor{}[\CFrom]$, 
$\CCond{}[2] = \neg\CIsEColor{}[\CFrom]$ ($\GUpdateEV{\CFrom}{\CTo}$ was created to handle the case when $\CFrom$ is the color of player $\PE$
and $\GUpdateOV{\CFrom}{\CTo}$ for the other one, here we use $\CChoose{2}{n\ACRankBits}$ to choose between their results).

Note that by transitivity of $\ECMPEQ$, we have that sets $\EnfB{\OUT}{v}$ produced by $\GUpdateEV{\CFrom}{\CTo}$
and $\GUpdateOV{\CFrom}{\CTo}$ are up-closed.
Note also that $\UpdateGadget{\CFrom}{\CTo}$  has constant depth and can be computed in space $\AlgComplexitySpace$.
This is because $\mathcal{O}{(\log(n))}$ bits suffice to remember a constant number of vertices. 
What is more, as each enforcement is a partial function from the set of $k$ colors to the set of $d$ ranks, so to encode one
$\mathcal{O}(\log(d)k)$ bits suffice, and we consider at most four of them simultaneously.

\begin{restatable}{lemma}{LemmaUpdateValid}\label{lemma:update_valid}
  Given that $\UpdateGadget{\CFrom}{\CTo}$ inputs a valid set of enforcements for 
  each vertex $v$ of a subgame $\G'$ of game $\G$ then it outputs a valid set of enforcements for 
  each vertex $v$ of a game $\G''$ created from $\G'$ by addition of directed edge from every vertex of
  color $\CFrom$ to every vertex of color $\CTo$.
\end{restatable}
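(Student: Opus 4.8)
The plan is to verify the three validity conditions (soundness, completeness, up-closure) separately for the two constituent gadgets $\GUpdateEV{\CFrom}{\CTo}$ and $\GUpdateOV{\CFrom}{\CTo}$, and then observe that the final output, obtained by selecting between them via $\CChoose{2}{n\ACRankBits}$ on the bit $\CIsEColor{}[\CFrom]$, inherits validity in the corresponding case. Up-closure has already been noted to follow from transitivity of $\ECMPEQ$, so the real work is soundness and completeness, and the central tool throughout will be \cref{lemma:properties_of_merge}, which relates the algebraic operations $\EnfMerge{}{\cdot}{\cdot}{\cdot}$ and $\EnfLoop{\cdot}{\cdot}$ on enforcements to the strategy-level operations of composing and looping strategies.

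First I would treat the case $\CFrom \in \C_\PE$ handled by $\GUpdateEV{\CFrom}{\CTo}$. For \emph{soundness}, given a bit $\EnfB{\OUT}{v}[P]$ that is set, I unfold the definition: either $P$ dominates some $S$ already present in the valid input set for $v$ (in which case safety in $\G''$ follows from safety in $\G'$, since $\G''$ only adds edges and a strategy that never uses them stays safe), or $P$ dominates $\EnfMerge{}{Q}{\CFrom}{R}$ for some input enforcement $Q$ at $v$ and some $R$ flagged by $\EnfOpt{}$. In the latter subcase I use soundness of the input sets to extract witnessing strategies for $Q$ and $R$, then invoke part~3 of \cref{lemma:properties_of_merge} (the $\EnfMerge{}$ identity for a $\PE$-color with an edge to the target vertex $w$) to splice them into a strategy safe from $v$ in $\G''$ whose enforcement is $\ECMPEQ \EnfMerge{}{Q}{\CFrom}{R} \ECMPEQ P$; the $\EnfCycle{}$ branch additionally uses part~4 (the $\EnfLoop{}$ identity) to justify the repeated use of the new edge without closing an odd-max-rank cycle, which is exactly the content of the even-rank condition $\FunAugment{P}{\CFrom}{2r}$. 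For \emph{completeness}, I take any positional $\ST_\PE$ safe from $v$ in $\G''$ and argue, as sketched in the overview, that it can be reshaped into a strategy that upon taking a new edge always jumps to one fixed vertex $w$ of color $\CTo$ without worsening the enforcement; factoring this reshaped strategy at its first (or its looping) use of the new edge yields the $Q$ and $R$ witnessing $\EnfB{\OUT}{v}[P]$, again via \cref{lemma:properties_of_merge}.

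Next I would handle $\CFrom \in \C_\PO$ with $\GUpdateOV{\CFrom}{\CTo}$. Here \emph{soundness} of a set bit $\EnfB{\OUT}{v}[P]$ splits on whether the input enforcement $Q$ is undefined on $\CFrom$ (player $\PO$ never forced along the new edge, so safety transfers verbatim from $\G'$) or whether $\AnyGoodEnfForQP{Q}{P}$ holds, meaning that for \emph{every} vertex $v'$ of color $\CTo$ there is an input enforcement $R$ with $\EvenPair{R}{\CFrom}$ and $\EnfMerge{}{Q}{\CFrom}{R}\ECMPEQ P$; I collect these per-vertex continuation strategies and graft them onto the $Q$-strategy using part~2 of \cref{lemma:properties_of_merge} (the $\PO$-color merge, where the continuation must be safe from all color-$\CTo$ vertices simultaneously, matched by the universal $\bigwedge_{w}$ in $\AnyGoodEnfForQP{Q}{P}$). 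For \emph{completeness}, a positional safe strategy in $\G''$ is factored at the moment player $\PO$ first uses a new edge into a prefix strategy safe from $v$ in $\G'$ and a single positional tail that must be safe from every color-$\CTo$ destination in $\G'$; recording the prefix enforcement as $Q$ and the tail enforcements as the $R$'s supplies the disjunct realizing $\EnfB{\OUT}{v}[P]$.

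The main obstacle I anticipate is completeness in the $\PO$-case, specifically justifying that a single positional tail strategy suffices to cover \emph{all} color-$\CTo$ landing vertices at once and that its enforcement from each such vertex satisfies the $\EvenPair{R}{\CFrom}$ parity constraint. The subtlety is that player $\PO$ controls which color-$\CTo$ vertex the play lands on, so the reshaped strategy cannot be allowed to depend on that choice; I would argue that positionality of the original strategy lets one fix the tail behaviour uniformly, and that the parity condition $\EvenPair{R}{\CFrom}$ precisely encodes the requirement that re-entering the color-$\CFrom$ region via the tail cannot create a new odd-dominated cycle, so that the spliced strategy remains safe. Once this uniform-tail argument is in place, the remaining manipulations are the routine $\ECMPEQ$-monotonicity bookkeeping already isolated in parts~1--4 of \cref{lemma:properties_of_merge}.
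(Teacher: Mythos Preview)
Your overall decomposition---up-closure by transitivity, then soundness and completeness separately for $\GUpdateEV{\CFrom}{\CTo}$ and $\GUpdateOV{\CFrom}{\CTo}$, all driven by \cref{lemma:properties_of_merge}---is exactly the paper's architecture, and your treatment of the $\PO$-case and of both soundness arguments is essentially what the paper does (though for $\PO$-soundness the paper builds the spliced strategy by hand rather than invoking part~2 of \cref{lemma:properties_of_merge}, since the continuation strategies $\ST_\PE^{w_i}$ differ per landing vertex and part~2 as stated assumes a single continuation safe from all vertices of the merged colour).

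Where your plan diverges from the paper is in the assessment of where the real work lies. You flag $\PO$-completeness as the main obstacle, but in the paper that case is short: the original positional $\ST_\PE$ itself serves as the uniform tail from every $w_i$, and $\EvenPair{R_i}{\CFrom}$ is forced because otherwise player~$\PO$ could close an odd cycle in $\G''$. The genuinely delicate step is the one you pass over with ``as sketched in the overview'': $\PE$-completeness. Given a positional $\ST_\PE$ safe in $\G''$, one must \emph{identify} a single target $w$ of colour $\CTo$ and prove that rerouting all new-edge uses to $w$ does not worsen the enforcement. The paper does this by forming the $\STOP$-variant $\ST_\PE^\STOP$, then constructing an explicit sequence $\PlayPrefSpec_i$ of worst (in the $\RCMPEQ$ sense) arrivals at colour-$\CFrom$ stopping points, and splitting into three cases: the sequence is empty (no new edges ever used), some segment $\PlayInfSpec_i$ has even max rank (loop via part~4), or all segments have odd max rank so the sequence is finite by safety (single merge via part~3). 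The choice of $w$ and the inequality $\EnforcementC{\G''}{\TreeModelName}{\ST'_\PE}{v} \ECMPEQ \EnforcementC{\G''}{\TreeModelName}{\ST_\PE}{v}$ both come out of this construction and are not obvious from the overview sketch alone; this is the part of the argument you should expect to expand substantially.
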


A simple case analysis suffices to prove this lemma when $\CFrom$ belongs to player $\PO$.
When $\CFrom$ belongs to player $\PE$ one can separately prove soundness (again by simple case analysis)
and completeness following the presented intuition.
As these proofs are rather long and tedious
they have been postponed to \cref{sec:ski_lem_uev}.

\subsubsection{Conditional Update Gadget}
For each pair of colors $\Pair{\CFrom}{\CTo}$ we  define $\UpdateGadgetC{\CFrom}{\CTo}$ 
that inputs the bits that $\UpdateGadget{\CFrom}{\CTo}$ inputs with one additional bit 
that is $\True$ if and only if the update defined by $\UpdateGadget{\CFrom}{\CTo}$
should be performed (i.e. whether enforcements should be updated or copied). 
We define $\UpdateGadgetC{\CFrom}{\CTo}$  to be a simple combination 
of $\UpdateGadget{\CFrom}{\CTo}$ and $\CChoose{2}{n\ACRankBits}$.

\subsubsection{Processing Gadget}\label{sss:proc_gadget}
Here we define a gadget of depth $\mathcal{O}(\Param^2)$ that stacks operations of conditional update gadgets.

Let $p_1, \ldots, p_{\Param^2}$ be pairs of colors in some fixed order.
We define the gadget $\ProcPipe$ as a sequence of gadgets $\UpdateGadgetCP{p_1}, \ldots \UpdateGadgetCP{p_{\Param^2}}$
where $\EnfB{\IN}{v}$ part of the input for gadget $\UpdateGadgetCP{p_i}$ comes from the output of $\UpdateGadgetCP{p_{i-1}}$
for $i > 1$ and rest of the input of $\UpdateGadgetCP{p_j}$ for $j > 0$ comes from the input 
of $\ProcPipe$. In particular for $\UpdateGadgetCP{p_j}$ the bit that tells whether to perform the update comes from
the pack of $\Param^2$ bits (denoted as $\TCSBName^\IN$) that $\ProcPipe$ inputs aside from $\CIsEColor{}$ and 
$\ColorB{}{v}$, $\EnfB{\IN}{v}$, $\VertexUsedBit{\IN}[v]$ for each $v$. We define the output of $\ProcPipe$ to be the output of $\UpdateGadgetCP{p_{\Param^2}}$
and $\VertexUsedBit{\OUT}[v] = \VertexUsedBit{\IN}[v]$ for each $v$.

Note that $\ProcPipe$  defined above has depth $\mathcal{O}(\Param^2)$ and can be computed in space 
$\AlgComplexitySpace$, as each of the conditional update gadgets can be computed in this space, and we can 
use $\mathcal{O}(\log(\Param))$ bits to encode for which $\UpdateGadgetCP{p_j}$ we are performing the computation.

Let $l$ be an internal node of $\T$ from the given tree-model and let $l_1, \ldots, l_m$ be children of $l$. Gadget $\ProcPipe$ will be later
used to calculate valid enforcement sets for a subgame of $\G$ played on an arena induced by $\TreeModelName_l$ 
from valid enforcement sets for a subgame of $\G$ played on an union of arenas induced by $\TreeModelName_{l_1}, \ldots, \TreeModelName_{l_m}$.

\subsection{Circuit Construction}\label{ss:circ_constr}
The construction of the circuit will proceed in layers, with each layer output being $n$ big packs of bits,
each containing $\EnfB{\OUT}{v}$ and $\VertexUsedBit{\OUT}[v]$ for each vertex $v$, so $n(n(\ACSBitPack + 1))$ bits in total.
The output of each layer will be treated as the input for the next one.
The output of $l$'th layer will be denoted as $\CLayerPack{l}$.
To start the construction off and create layer $0$, we use $\CInitialize{v}$ gadget for each vertex $v$.
To create parts for layer $1 \leq l \leq \Param$
we need to use $\ProcPipe$ for each node $\Pair{l}{i}$. 
We do so by setting $\ProcPipe$'s $\TCSBName^\IN$ bits to $\TCSB{l}{i}$,
whereas  $\EnfB{\IN}{v}$ and $\VertexUsedBit{\IN}[v]$ bits using $\CChoose{n}{n(\ACSBitPack + 1)}$ 
with input being $n$ big bit packs from $\CLayerPack{l - 1}$ and $\CCond[j] = \TCCB{l}{i}[j]$
(here $\CChoose{n}{n(\ACSBitPack + 1)}$ is used to take a disjoint union of arenas).
The output of $\ProcPipe$ here constitutes to $\CLayerPack{l}$.

To produce partition of $\V$ into winning regions
 we first gather calculated results using $\CChoose{n}{n(\ACSBitPack + 1)}$ gadget with input being set to
$n$ big bit packs from $\CLayerPack{\Param}$ and $\CCond[j] = \True$, and then 
create for each vertex $v$ gadget that checks whether there is an enforcement $P$
such that $\EnfB{\IN}{v}[P]$ is $\True$ and for each $c \in \Domain{P}$ 
$\CIsEColor{}[c]$ is $\False$.

\begin{theorem}\label{theorem:ac_size_and_complex}
  There exists a Turing machine that given $n, \NR > 1$, and $\Param > 0$
  computes in space $\AlgComplexitySpace$
  circuit $\Circuit_{\Tuple{n, \NR, \Param}}$ of depth $\ACAlgDepth$,
   that inputs 
  parity game $\ParityGameDefC{\G}{\V}{\E}{\R}{\Color}$ and tree-model $\TreeModelShort$
  such that $\R(V) \subseteq \{0, \ldots, \NR - 1\}$, $\NR \leq 2n$, $|\C| = k$, $\T$ has height exactly $k$ and $\TreeModelName$ induces $\G$'s arena,
  and outputs
   the partition of $\G$'s vertices into winning regions of 
  players $\PE$ and $\PO$.
\end{theorem}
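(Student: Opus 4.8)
The plan is to verify that the construction laid out in \cref{ss:circ_constr} is correct and respects the stated resource bounds; the substantive content has already been isolated in \cref{lemma:update_valid} and \cref{theorem:safety_parity_enf_win}, so what remains is an inductive assembly together with depth and space accounting. I would phrase correctness through a single invariant maintained along the layers of the circuit.

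First I would prove this invariant by induction on the circuit layers: after layer $l$, the pack associated to each node $\Pair{l}{i}$ of $\T$ encodes in its $\EnfB{\OUT}{v}$ bits a valid set of enforcements for every vertex $v$ of the subgame of $\G$ induced by $\TreeModelName_{\Pair{l}{i}}$, with $\VertexUsedBit{\OUT}[v]$ marking exactly the leaves of that subtree. The base case is layer $0$, where $\CInitialize{v}$ outputs a valid enforcement set for the one-vertex subgame at leaf $v$ (as noted after its definition) and empty packs elsewhere. For the inductive step at a node $\Pair{l}{i}$, the gadget $\CChoose{n}{n(\ACSBitPack + 1)}$ selected by the bits $\TCCB{l}{i}$ takes the bitwise disjunction of the children's packs; since distinct children index disjoint subtrees their $\VertexUsedBit$ supports are disjoint, so this disjunction faithfully realizes the disjoint union of the corresponding arenas, and because a disjoint union introduces no edges between components the part of the game reachable from any vertex $v$ is unchanged, hence its valid enforcement set is preserved. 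Then $\ProcPipe$ applies $\UpdateGadgetCP{p_1}, \ldots, \UpdateGadgetCP{p_{\Param^2}}$, each conditioned on the relevant bit of $\TCSB{l}{i}$; by \cref{lemma:update_valid} every active $\UpdateGadget{\CFrom}{\CTo}$ turns a valid enforcement set for the current subgame into a valid one for the subgame with all edges from color $\CFrom$ to color $\CTo$ added, so iterating over all pairs yields a valid set for the arena induced by $\TreeModelName_{\Pair{l}{i}}$. Here I would note that adding connections one color pair at a time in the fixed order $p_1, \ldots, p_{\Param^2}$ produces the same final arena regardless of the order, since colors are never altered and the edges contributed by distinct pairs do not interfere.

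Having the invariant up to layer $\Param$, I would conclude that the unique root of $\T$ (the only last-layer node with children, by the input validation) carries a valid enforcement set for the full game $\G$, while the remaining last-layer candidates have no children and thus empty packs, so the final $\CChoose{n}{n(\ACSBitPack + 1)}$ that disjuncts all last-layer packs returns exactly the root's data. Applying \cref{theorem:safety_parity_enf_win} vertex by vertex, player $\PE$ wins from $v$ precisely when the computed valid set for $v$ contains an enforcement undefined on every color of $\C_\PE$, which is exactly the condition tested by the final gadget; this yields the correct partition into winning regions.

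It remains to account for resources. For depth, the gadgets $\CInitialize{v}$, the disjoint-union $\CChoose{n}{n(\ACSBitPack + 1)}$, and the final output gadget all have constant depth, whereas each $\ProcPipe$ has depth $\mathcal{O}(\Param^2)$; since the construction stacks one $\ProcPipe$ per layer across $\Param$ layers, the total depth is $\mathcal{O}(\Param^3) = \ACAlgDepth$. For the construction space, each gadget is emitted in space $\AlgComplexitySpace$, and to drive the overall emission the machine need only track the current layer index ($\mathcal{O}(\log \Param)$ bits), a node index ($\mathcal{O}(\log n)$ bits), and the index within a $\ProcPipe$ ($\mathcal{O}(\log \Param)$ bits), all of which fit within $\AlgComplexitySpace$; hence $\Circuit_{\Tuple{n, \NR, \Param}}$ is produced in space $\AlgComplexitySpace$. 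I expect the main obstacle to be the correctness invariant of the inductive step, specifically arguing that the disjunction-based disjoint union together with the order-independent pairwise edge additions exactly reconstructs the arena induced by $\TreeModelName_{\Pair{l}{i}}$ while preserving validity; the depth and space bounds are then routine bookkeeping once the gadget-level bounds are in hand.
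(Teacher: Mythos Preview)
Your proposal is correct and follows essentially the same approach as the paper: the paper's own proof is a two-sentence remark that ``complexity follows from the observations made during the construction'' and ``correctness follows from \cref{lemma:update_valid} and simple induction on the size of the circuit,'' and you have simply spelled out that induction and the accompanying bookkeeping in full. The only point worth flagging is that your observation about the disjoint union preserving validity (because plays from $v$ cannot leave $v$'s component) and about order-independence of the pairwise edge additions are left implicit in the paper but are indeed needed for the inductive step, so making them explicit is appropriate.
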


\begin{proof}
  The complexity follows from the observations made during the construction.
  Correctness follows from \cref{lemma:update_valid} and simple induction on the size of the circuit.

\end{proof}

\section{Circuit Family for Tree-Depth}\label{sec:circ_td}
Now we will show how for $n, \NR > 1$, and $\Param \geq 0$ compute in space 
$\AlgComplexitySpace$ circuit $\TDCircuit_{\Tuple{n, \NR, \Param}}$
of depth being at most $\ACAlgDepth$, that
given an $n$-vertex parity game with all ranks being less that $\NR$ (we also require $\NR \leq 2n$) 
and elimination forest of height at most $\Param$, outputs the winner for each vertex.
We achieve this by creating circuit $\TRCircuit_{\Tuple{n, \NR, \Param}}$ that changes input representation
to the one accepted by $\CFTD{}$, and then getting the winners
from the output of $\CFTD{}$
using circuit $\TGCircuit_{\Tuple{n, \NR, \Param}}$.
Note that along with the construction of the circuit $\TRCircuit_{\Tuple{n, \NR, \Param}}$ 
we show a more efficient reduction that does not have an exponential blow-up
in $k$ (in contrast to the simple one presented in \cref{ssec:td_intro}). 
Such improvement is not necessary for the main result of this paper, but it makes
the construction more computationally tractable.
In the end, we show how to get rid of the requirement for the elimination forest.

\subsection{Circuit Input}

Circuit $\TRCircuit_{\Tuple{n, \NR, \Param}}$ inputs a game $\ParityGameFullDef{\G}{\V}{\E}{\R}$
and elimination forest $\Forest$ as the following sequence of bits.
\begin{itemize}
    \item For each vertex $v \in \V$:
    \begin{itemize}
        \item $\NR$ bits that one-hot encode $\R(v)$
        (denoted as  $\VRankB{v}$),
        \item one bit that is true if and only if $v$ is a vertex
        of player $\PE$ (denoted as $\EPB{v}$),
        \item $\Param + 1$ bits that one-hot encode the depth of $v$ in $\Forest$
        (denoted as $\DepthB{v}$; bit $\DepthB{v}[x]$ is set to true if $v$ has depth $x$; root has depth $0$),
        \item  $n - 1$ bits that encode the parent of $v$ in $\Forest$
        (denoted as $\ParentB{v}$; $\ParentB{v}[w]$ is $\True$ if $w$ is a parent of $v$ in $\Forest$;
        each non-root vertex has exactly one parent).
    \end{itemize}
    \item For each pair of vertices $v, w \in V$, a bit that is set to true if and only if 
    there is a directed edge from $v$ to $w$ (denoted as $\VWEdgB{v}{w}$).
\end{itemize}
Checking whether $\Forest$ is a valid forest can be done similarly as in \cref{sec:ac0_shrub_circ_in}.
For ancestor-descendant relationship one can compute in space $\AlgComplexitySpaceL$ a gadget of depth $\mathcal{O}(\log(\Param))$
that at level $i$ checks for every $v, w \in \V$ whether there is a simple path from $v$ to $w$ that goes towards the root
and has length at most $2^i$.

\subsection{Reduction Circuit}
First of all, we do not aim to create exactly the same game as input for $\CFTD$,
but a modified game from which we could deduce winning regions in the original one.
The main idea here is to split each vertex in a number of vertices such that
when we merge them back, then we get the original arena.
Let $\V = \{v_1, \ldots, v_n\}$ be all vertices of $\G$, with each vertex $v \in \V$
we associate $2n$ copies of $v$ with the same player ownership: 
$ v \FT \CVB{v}{v_1}, \ldots, \CVB{v}{v_n}, \CVEB{v}{v_1}, \ldots, \CVEB{v}{v_n}$.
 The first $n$ of these copies
are denoted as $\CVB{v}{}$, whereas the last $n$ as $\CVEB{v}{}$. 
Each of them corresponds to one additional vertex from the input of $\TRCircuit_{\Tuple{n, \NR, \Param}}$
(the copy among the first $n$ copies of $v$ that corresponds to $w$ is denoted as $\CVB{v}{w}$,
whereas the copy among the last $n$ copies that corresponds to $w$ is denoted as $\CVEB{v}{w}$).
Intuitively for a vertex $v \in \V$, copies from $\CVB{v}{}$ and $\CVEB{v}{}$ 
will be used to distribute the original connections from and to $v$. More precisely, 
vertices from $\CVB{v}{}$ will allow to choose a connection from $v$, whereas vertices from $\CVEB{v}{}$
to execute it, and if there will be a connection from $v$ to $w$ in $\G$, then 
we will create a connection from some $\CVEB{v}{v'}$ to some $\CVB{w}{w'}$.

We set the rank of
copies from $\CVEB{v}{}$ to $\R(v) + 2$, whereas
the rank of copies from $\CVB{v}{}$ to $1$ if $v$ belongs to player $\PE$, and to 
$0$ if $v$ belongs to player $\PO$ (as $\CVB{v}{}$ will be used to choose a connection we don't want 
a player to put off the decision indefinitely).
Let $x$ be the depth of $v$ in $\Forest$.
We color copies $\CVB{v}{}$ using color $\Tuple{\TDPCCVVC, \EPB{v}, x}$,
whereas copies from $\CVEB{v}{}$ using $\Tuple{\TDPCCVEC, \EPB{v}, x}$.
The copies created here are leaves of the tree-model that $\CFTD$ will take as input. 
Note that for all operations presented above we can compute in space $\AlgComplexitySpaceL$ gadget of constant depth,
as we need $\mathcal{O}(\log(n))$ bits to represent a copy of a vertex and rank and $\mathcal{O}(\log(k))$ to represent a color.

\medskip

Before we go any further, we introduce some gadgets. For simplicity, we will assume that 
vertex is its own ancestor.
\begin{itemize}
  \item $\CNoChildren{v}$ -- that is $\True$ if and only if $v$ has no children in $\Forest$.
  \item $\CAncestorEQ{v}{w}$ -- that is $\True$ if and only if $w$ is an ancestor of $v$ in $\Forest$.
  Note that it can be done by a gadget with $\mathcal{O}(\log(k))$ layers computable in space $\AlgComplexitySpaceL$ that in layer 
  $i$ considers paths from vertices towards root of length at most $2^i$.
  \item $\CAncestorEQEBTW{v}{x}{x'}$ --
   that is true if and only there are two ancestor $w$ and $w'$ of $v$ of depth $x$ and $x'$ respectively such that there is an an edge from $w$ to $w'$.
\end{itemize}

Now we will describe how we compute the part of $\TRCircuit_{\Tuple{n, \NR, \Param}}$ that outputs the rest of the tree-model.
The main thing that we will do here is that for each vertex $v$ that is a leaf in the given 
elimination forest we will gather copies of ancestors that correspond to $v$, add connections using the copies and join
the copies along the way creating a tree-model similar in shape to the given elimination forest.
The computation will consist of $\Param + 1$ major steps, and $3\Param + 3$ simple steps. 
Major steps will create layers of the tree-model (see \cref{sec:ac0_shrub_circ_in}; layers are numbered starting from the layer of leaves, which has number $0$), whereas simple steps will extend the height of the tree-model to
match the specification for  $\CFTD{}$. 
As in the description of the tree-model for $\CFTD{}$ we use graph where number of nodes in every layer is the same and in our case equals number of copies of vertices,
so we will use copies of vertices to number nodes from a specific layer. The tree-model that we 
will produce here may not use all of the leaves
(there may be vertices at the bottom that will be left isolated), 
we get around it by relaxing the correctness of $\CFTD{}$ input
so that it will allow vertices from layer $0$ without parents (note that they will be ignored in the further parts of the circuit).
As the steps presented below are rather technical we encourage to refer to \cref{fig:major_steps}.

First the major steps, for $l = 1$, for each vertex $v$ of the input game that is childless in $\Forest$,
we gather and initialize copies of ancestors of $v$ corresponding to $v$.
That is for each $v, w \in \V$, $p \in \{\True, \False\}$ and $x \in \{0, \ldots, \Param\}$ we set
\begin{align*}
  \TCCB{l}{\CVB{v}{v}}[\CVB{w}{v}] &= \CAncestorEQ{v}{w} \wedge \CNoChildren{v}\\
  \TCCB{l}{\CVB{v}{v}}[\CVEB{w}{v}] &= \CAncestorEQ{v}{w} \wedge \CNoChildren{v}\\
  \TCSB{l}{\CVB{v}{v}}[\Pair{\Tuple{\TDPCCVVC, p, x}}{\Tuple{\TDPCCVEC, p, x}}] &= \CNoChildren{v}
\end{align*}
Then we add all connections between gathered copies, that is for
$v \in \V$, $p, p' \in \{\True, \False\}$ and $x, x' \in \{0, \ldots, \Param\}$ we set
\begin{align*}
  \TCSB{l}{\CVB{v}{v}}[\Pair{\Tuple{\TDPCCVEC, p, x}}{\Tuple{\TDPCCVVC, p', x'}}] &= \CNoChildren{v} \wedge \CAncestorEQEBTW{v}{x}{x'}
\end{align*}

Now let $1 < l \leq \Param + 1$. 
Here we create parts that drag low depth vertices up in the tree-model and merge gathered copies.
That is for $p, p' \in \{\True, \False\}$, $x \in \{0, \ldots, \Param - l + 1\}$, $w\not=v$ we set:
\begin{align*}
  \TCCB{l}{\CVB{v}{v}}[\CVB{v}{v}] &= \CNoChildren{v} \wedge \bigvee_{t \in \{0, \ldots, \Param - l + 1\}}{\DepthB{v}[t]}\\
  \TCCB{l}{\CVB{v}{v}}[\CVB{w}{w}] &= \DepthB{v}[\Param - l + 1] \wedge \ParentB{w}[v]\\
  \TCSB{l}{\CVB{v}{v}}[\Pair{\Tuple{\TDPCCVVC, p, \Param - l + 1}}{\Tuple{\TDPCCVVC, p, \Param - l + 1}}] &= \DepthB{v}[\Param - l + 1]
\end{align*}

For simple steps, we extend the height of the tree-model to match specification for $\CFTD$.
Let $v$ be some fixed vertex in $\V$ and $w$ any vertex from $\V$, and let $\Param + 3 \leq l \leq 4\Param +4$,  we set
$\TCCB{\Param + 2}{\CVB{v}{v}}[\CVB{w}{w}]$ to $\DepthB{w}[0]$ whereas $\TCCB{l}{\CVB{v}{v}}[\CVB{v}{v}]$ to $\True$.

In both major and simple steps, we set unmentioned bits to $\False$.

\begin{figure}
  
  \centering

    \begin{subfigure}{.5\textwidth}
      \centering
      \begin{tikzpicture}[>=stealth', line width=1pt, minimum size=0.8cm]
      \node[draw, circle] at (0, 1.2) (a) {$a$};
      \node[draw, circle] [below = 0.5 of a] (b) {$b$};
      \node[draw, circle] [below left = 0.5 and 0.5 of b] (c) {$c$};
      \node[draw, circle] [below right = 0.5 and 0.5 of b] (d) {$d$};
  
      \draw[->] (a) -- (d);
      \draw[->] (c) -- (a);
      \draw[->] (b) -- (c);
      \end{tikzpicture}
      \subcaption{Arena}
    \end{subfigure}%
    \begin{subfigure}{.5\textwidth}
      \centering
      \begin{tikzpicture}[>=stealth', line width=1pt, minimum size=0.8cm]
        \node[draw, circle] at (5, 1.2) (a) {$a$};
        \node[draw, circle] [below left = 0.5 and 0.5 of a] (b) {$b$};
        \node[draw, circle] [below = 0.5 of b] (c) {$c$};
        \node[draw, circle] [below right = 0.5 and 0.5 of a] (d) {$d$};
  
        \draw (b) -- (a);
        \draw (c) -- (b);
        \draw (d) -- (a);
      \end{tikzpicture}
      \subcaption{Elimination forest}
    \end{subfigure}
   
    \begin{subfigure}{1.0\textwidth}
      \centering
      \vspace{-3cm}
      \begin{tikzpicture}[>=stealth', line width=1pt, minimum size=0.8cm]
      \phantom{
        \TMLV{}
        \TMLIV{}
        \TMLIII{}}
        \TMLI{}
        \TML{}
        \phantom{
        \TMLEII{}
        \TMLEIV{}
        \TMLEV{}}

        \path[->] (ac) edge [out=280, in=250, looseness=2] (aec);
        \path[->] (bc) edge [out=280, in=250, looseness=2] (bec);
        \path[->] (cc) edge [out=280, in=250, looseness=2] (cec);

        \path[->] (ad) edge [out=280, in=250, looseness=2] (aed);
        \path[->] (dd) edge [out=280, in=250, looseness=2] (ded);

        \path[->] (cec) edge [out=280, in=250, looseness=0.6] (ac);
        \path[->] (bec) edge [out=280, in=250, looseness=2] (cc);
        \path[->] (aed) edge [out=280, in=250, looseness=1.5] (dd);

      \end{tikzpicture}
      \subcaption{Layers $0$ and $1$ from tree-model  and connections added at major step $1$ (bottom).}
    \end{subfigure}

    \begin{subfigure}{1.0\textwidth}
      \centering
      \begin{tikzpicture}[>=stealth', line width=1pt, minimum size=0.8cm]
  
          \TMLV{0}  
        \TMLIV{}
        \TMLIII{}
        \TMLI{}
        \TML{}
        \TMLEII{}
        \TMLEIV{}
        \TMLEV{}
  
        \path[<->] (ad) edge [out=280, in=250, looseness=0.5] (ac);
        \draw[->] (ac) to [out=260,in=284,looseness=6]  (ac);
        \draw[->] (ad) to [out=270,in=250,looseness=6]  (ad);
      \end{tikzpicture}
      \subcaption{Layers $0,1,2,3,4$ from tree-model and merge performed at major step $4$ (bottom).}
    \end{subfigure}
    \caption{Tree-model produced for a simple arena (a) with elimination forest (b).}\label{fig:major_steps}

  \end{figure}
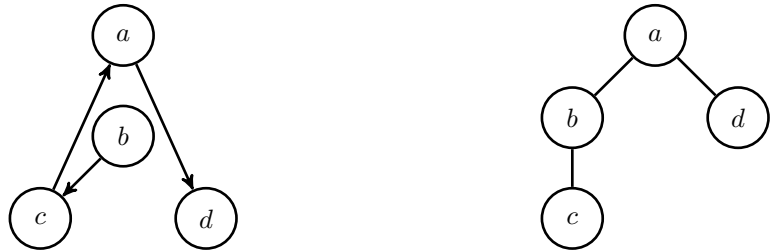
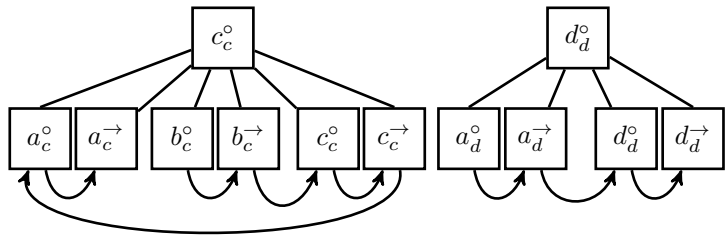
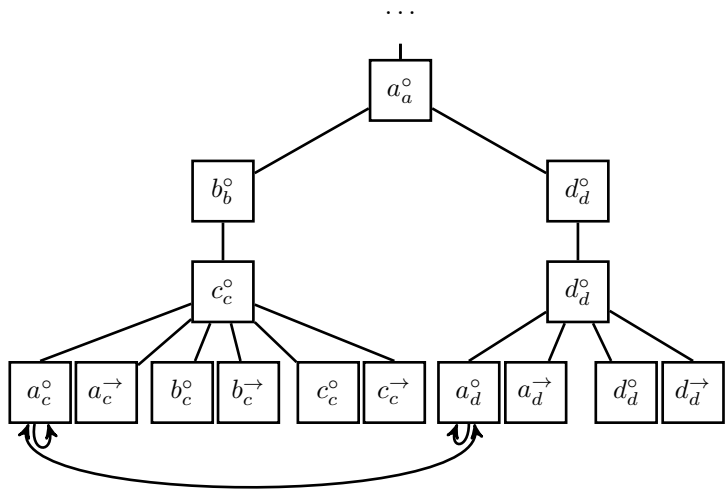

\begin{proposition}\label{proposition:efficient_reduction}
The depth of $\TRCircuit_{\Tuple{n, \NR, \Param}}$ when constructed as above is $\mathcal{O}(\log(\Param))$, what is 
more the construction can be done in space $\AlgComplexitySpaceL$.
\end{proposition}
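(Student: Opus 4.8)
The plan is to establish the depth bound and the construction-space bound separately, and the argument for both hinges on a single structural observation about the construction.

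\textbf{Depth.} The crucial point is that every output bit of $\TRCircuit_{\Tuple{n, \NR, \Param}}$ — the ranks and colors of the leaf copies $\CVB{v}{w}$, $\CVEB{v}{w}$, together with the layer-description bits $\TCCB{l}{i}$ and $\TCSB{l}{i}$ produced in the major and simple steps — is defined directly as a Boolean combination of the \emph{input} bits, routed through the auxiliary gadgets $\CNoChildren{v}$, $\CAncestorEQ{v}{w}$, $\CAncestorEQEBTW{v}{x}{x'}$, $\DepthB{v}$ and $\ParentB{v}$. No layer's bits depend on the bits of any other layer: although the major and simple steps are presented in sequence, each defining equation references only the original input, so the entire tree-model is computed in parallel. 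Consequently the depth of $\TRCircuit_{\Tuple{n, \NR, \Param}}$ equals the maximum depth among the gadgets feeding these equations, plus a constant for the final conjunctions and disjunctions that assemble each output bit.

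It then remains to bound the gadget depths. The gadget $\CNoChildren{v}$ is the constant-depth formula $\bigwedge_w \neg\ParentB{w}[v]$, and the leaf rank/color selectors are constant-depth choices over the input bits. The ancestor gadget $\CAncestorEQ{v}{w}$ is computed by repeated doubling, adding one layer per step to test paths towards the root of length at most $2^i$; since depths are bounded by $\Param$, this uses $\mathcal{O}(\log(\Param))$ layers. Finally $\CAncestorEQEBTW{v}{x}{x'}$ is an $\mathrm{OR}$ over pairs of ancestors of a constant-depth combination of two $\CAncestorEQ{}{}$ gadgets with depth tests $\DepthB{}$ and an edge bit $\VWEdgB{w}{w'}$, so it again has depth $\mathcal{O}(\log(\Param))$. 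Taking the maximum, $\TRCircuit_{\Tuple{n, \NR, \Param}}$ has depth $\mathcal{O}(\log(\Param))$.

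\textbf{Space.} For the construction bound, the generating Turing machine emits the gadgets and output formulas by sweeping over the relevant indices: vertices and their copies $\CVB{v}{w}$, $\CVEB{v}{w}$, each encodable in $\mathcal{O}(\log(n))$ bits; and depths $x, x' \in \{0, \ldots, \Param\}$, colors (direction marker, player bit, depth), and layer numbers $l$, each encodable in $\mathcal{O}(\log(\Param))$ bits. At any moment the machine holds only a constant number of such indices, and each auxiliary gadget is itself constructible in space $\AlgComplexitySpaceL$ as recorded in its definition. Composing a logspace generator that iterates these index ranges with the logspace gadget generators keeps the total working space at $\mathcal{O}(\log(n) + \log(\Param)) = \AlgComplexitySpaceL$.

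\textbf{Main obstacle.} The only point demanding care is confirming the parallel, non-stacked structure: one must verify, by reading off each defining equation in the major and simple steps, that it mentions only input-derived quantities and never the output of a previously processed step. Were the steps genuinely composed in series, the depth would degrade to $\Theta(\Param)$; the construction is arranged precisely so that each layer is an independent function of the input, and checking this is the crux of the depth claim.
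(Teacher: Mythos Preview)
Your proposal is correct and follows the same approach as the paper: the key observation in both is that every output bit $\TCCB{l}{\cdot}$ and $\TCSB{l}{\cdot}$ is defined directly from the input (via depth and ancestor information) rather than from the output of any other step, so the circuit depth is dominated by the $\mathcal{O}(\log(\Param))$ ancestor gadget, and the indices needed during generation fit in $\AlgComplexitySpaceL$ bits. The paper's proof compresses this into two sentences, whereas you spell out the parallel structure and the gadget depths explicitly; the substance is the same.
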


\begin{proof}
    Observe that $\TCCB{l}{\CVB{v}{v}}$ and $\TCSB{l}{\CVB{v}{v}}$ bits depend only on bits describing depth and ancestors.
    What is more, $\AlgComplexitySpaceL$ bits suffice to describe which step $l$ we are performing
    and for which bit of either $\TCCB{l}{\CVB{v}{v}}$ or $\TCSB{l}{\CVB{v}{v}}$
    we are creating a gate.
\end{proof}

\subsection{Output Circuit}
Now we will proceed with the definition of  $\TGCircuit_{\Tuple{n, \NR, \Param}}$.
To get the winning region of $\PE$ we need to check 
for each vertex $v \in \V_\PE$ whether there is a copy of $\CVB{v}{w}$ such that $\PE$ wins from  
$\CVB{v}{w}$ and for each $v \in \V_\PO$ whether $\PE$ wins from all copies of $v$.
It suffices as note that we have constructed input for $\CFTD{}$ such that
when we contract all $\CVB{v}{}$ and $\CVEB{v}{}$ into one vertex of rank $\R(v)$ then we get the original game.
\[\COut{\TGCircuit{}}[v] = (\EPB{v} \wedge \bigvee_{w \in \V}{\COut{}[\CVB{v}{w}]}) \vee (\neg \EPB{v} \wedge \bigwedge_{w \in \V}{\COut{}[\CVB{v}{w}]}) \]
where $\COut{}$ denotes the output bits of $\CFTD{}$. Note that $\TGCircuit_{\Tuple{n, \NR, \Param}}$ can be computed in space
$\mathcal{O}(\log(n))$ as  $\mathcal{O}(\log(n))$ bits suffice to handle a constant number of vertices.

\begin{theorem}\label{thm:td_given_solve_log}
  There exists a Turing machine that given
    $n, \NR > 1$, and $\Param \geq 0$ computes in space $\AlgComplexitySpace$ circuit $\TDCircuit_{\Tuple{n, \NR, \Param}}$ of depth $\ACAlgDepth$,
    that inputs $n$-vertex parity game $\G$ 
    along with elimination forest of $\G$'s arena, and outputs the partition of $\G$'s vertices into winning regions of 
    players $\PE$ and $\PO$.
\end{theorem}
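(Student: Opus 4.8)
The plan is to realize $\TDCircuit_{\Tuple{n, \NR, \Param}}$ as the composition of the three blocks built in this section: $\TRCircuit_{\Tuple{n, \NR, \Param}}$ rewrites the input parity game together with its elimination forest $\Forest$ into the tree-model format, its output is fed to $\CFTD{} = \Circuit_{\Tuple{2n^2, \NR + 2, 4\Param + 4}}$ from \cref{theorem:ac_size_and_complex}, and finally $\TGCircuit_{\Tuple{n, \NR, \Param}}$ reads off the winners of the original game from the output of $\CFTD{}$. Before chaining them I would verify that the split game produced by the reduction meets the preconditions of \cref{theorem:ac_size_and_complex} under the substituted parameters $n' = 2n^2$, $\NR' = \NR + 2$, $\Param' = 4\Param + 4$. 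Each of the $n$ vertices is split into $2n$ copies, so the induced game has $n' = 2n^2$ vertices; the copies $\CVEB{v}{}$ receive rank $\R(v) + 2 \le \NR + 1$ and the copies $\CVB{v}{}$ receive rank $0$ or $1$, so every rank lies in $\{0,\ldots,\NR'-1\}$, and $\NR' = \NR + 2 \le 2n + 2 \le 4n^2 = 2n'$ since $\NR \le 2n$ and $n > 1$; the colors $\Tuple{\TDPCCVVC/\TDPCCVEC, p, x}$ number exactly $2\cdot 2\cdot(\Param+1) = 4\Param+4$; and the simple steps pad the tree-model to height exactly $4\Param + 4$.

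For the resource bounds I would add up the three contributions. By \cref{proposition:efficient_reduction}, $\TRCircuit_{\Tuple{n, \NR, \Param}}$ has depth $\mathcal{O}(\log(\Param))$ and is emitted in space $\AlgComplexitySpaceL$ (the ancestor gadgets fit in the same depth and space). Instantiating \cref{theorem:ac_size_and_complex} with the parameters above gives depth $\mathcal{O}((4\Param+4)^3) = \ACAlgDepth$ and construction space $\mathcal{O}(\log(2n^2) + \log(\NR+2)(4\Param+4)) = \AlgComplexitySpace$. Finally $\TGCircuit_{\Tuple{n, \NR, \Param}}$ has constant depth and is emitted in space $\mathcal{O}(\log(n))$. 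Summing, the total depth is dominated by $\CFTD{}$ and equals $\ACAlgDepth$, while a single Turing machine can write the three blocks in sequence using $\mathcal{O}(\log \Param)$ extra bits to track which block it is currently generating, keeping the overall space at $\AlgComplexitySpace$.

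The heart of the argument is correctness, which reduces to proving that the split game $\G'$ induced by the tree-model $\TRCircuit$ outputs faithfully simulates $\G$. Concretely I would establish the claim announced when defining $\TGCircuit$: contracting, for each $v \in \V$, all active copies $\CVB{v}{}$ and $\CVEB{v}{}$ into a single vertex of rank $\R(v)$ recovers $\G$. The supporting facts are that (i) because edges of $\G$ only join vertices in ancestor--descendant relation in $\Forest$ (cf. \cref{lemma:treedepth_short_paths}), every edge $\Pair{v}{w} \in \E$ is witnessed by a common leaf $v'$ below both $v$ and $w$, so the major steps reproduce exactly the connections of $\G$ as $\CVEB{v}{v'} \to \CVB{w}{v'}$; (ii) the clique edges inserted during the merge steps identify all active choose-copies of a fixed $v$ into one strongly connected set of equal rank, so each move of $\G$ is simulated by navigating this clique, passing through some $\CVEB{v}{v'}$ where the successor is actually selected, and landing in the successor's choose-clique; and (iii) shifting ranks by $2$ preserves the parity of the maximal rank occurring infinitely often, while ranks $1$ (for $\PE$) and $0$ (for $\PO$) on the choose-copies ensure that a player stalling forever inside a choose-clique loses, so such behavior never occurs in optimal play. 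Given the winning regions of $\G'$ computed by $\CFTD{}$, the output formula recovers those of $\G$: an isolated copy of $v \in \V_\PE$ is a dead end owned by $\PE$ and hence losing for $\PE$, so the disjunction over copies discards it and reads off an active copy, whereas an isolated copy of $v \in \V_\PO$ is a dead end losing for $\PO$ (winning for $\PE$), so the conjunction over copies likewise ignores it.

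I expect the simulation claim, specifically parts (ii) and (iii), to be the main obstacle, as it requires translating (positional) strategies back and forth between $\G$ and $\G'$ and checking that the interposed choose-copies, clique edges, and rank shift never change the winner under contraction; the remaining reasoning — the parameter bookkeeping and the isolated-copy case analysis inside $\TGCircuit$ — becomes routine once this lemma is in place.
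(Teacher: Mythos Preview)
Your proposal is correct and follows exactly the route the paper takes: the paper's own proof is a one-liner, ``Follows from \cref{proposition:efficient_reduction} and \cref{theorem:ac_size_and_complex}'', and you have simply unpacked what that citation entails—composing $\TRCircuit$, $\CFTD$, $\TGCircuit$, checking the substituted parameters $2n^2,\NR+2,4\Param+4$, summing depths and construction spaces, and invoking the contraction argument the paper sketches just before defining $\TGCircuit$. Two small remarks: in point~(i) the relevant fact is the \emph{definition} of an elimination forest (edges join ancestor--descendant pairs), not \cref{lemma:treedepth_short_paths}; and your treatment of isolated copies is at the same level of rigor as the paper's—semantically correct, though strictly speaking one should also note that $\CFTD$'s output on leaves with no parent is consistent with that semantics, which the paper waves away with ``they will be ignored''.
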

\begin{proof}
    Follows from \cref{proposition:efficient_reduction} and \cref{theorem:ac_size_and_complex}.
\end{proof}

\subsection{Computing Elimination Forest}\label{sec:EX}

The following is an easy consequence of the result from \cite[Theorem 1]{elberfeld:hal-00678176}.
\begin{lemma}\label{theorem:treedepth_circ}
  For any $\Param \geq 0$ there exists a logspace uniform family of $\ACO$-circuits 
  that inputs a graph and checks whether its tree-depth is $\Param$.
\end{lemma}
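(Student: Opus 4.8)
The plan is to reduce the exact test ``tree-depth equals $\Param$'' to two threshold tests, ``tree-depth at most $\Param$'' and ``tree-depth at most $\Param-1$'', each of which is supplied directly by \cite[Theorem 1]{elberfeld:hal-00678176}. For a fixed $d$ that result yields a logspace uniform family of $\ACO$-circuits that, given an arbitrary input graph, decides whether its tree-depth is at most $d$; crucially the circuit is total, so on a graph whose tree-depth exceeds $d$ it correctly answers ``no''. I would instantiate this once with $d=\Param$ and once with $d=\Param-1$, treating both as black boxes.

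First I would record the elementary identity: a graph has tree-depth exactly $\Param$ if and only if it has tree-depth at most $\Param$ but does \emph{not} have tree-depth at most $\Param-1$. Hence the target predicate is the conjunction of the $\Param$-threshold circuit with the negation of the $(\Param-1)$-threshold circuit. Since both threshold families have constant depth and polynomial size, their Boolean combination again has constant depth and polynomial size; a logspace transducer can emit the two sub-circuits and wire the additional $\wedge$ and $\neg$ gates using $\mathcal{O}(\log(n))$ space, so the combined family is once more logspace uniform $\ACO$. The only boundary case is $\Param=0$: here there is no $(\Param-1)$-threshold to build, the second conjunct is vacuously true, and I would simply output the $\Param=0$ threshold circuit, which accepts exactly the empty graph.

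The point that needs verification rather than construction is that the threshold test delivered by \cite[Theorem 1]{elberfeld:hal-00678176} is genuinely total and correct on \emph{all} graphs, including those of arbitrarily large tree-depth, and is not merely a promise problem restricted to bounded-tree-depth inputs. This is exactly where the real content of the cited theorem sits, since deciding membership in the bounded-tree-depth class on unrestricted inputs cannot be reduced to $\MSO$ model checking over that class alone. Once this totality is granted, there is no further obstacle: the whole argument is a constant-depth Boolean post-processing of two established families, which is precisely why the statement is an easy consequence of the cited result.
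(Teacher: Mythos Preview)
Your proposal is correct and matches the spirit of the paper, which gives no proof at all beyond the remark that the lemma is ``an easy consequence'' of \cite[Theorem~1]{elberfeld:hal-00678176}. Your reduction of the exact test to the conjunction of the $\Param$-threshold test with the negation of the $(\Param-1)$-threshold test is the natural way to spell out that easy consequence, and your observation that the cited theorem produces a \emph{total} decider (correct on all graphs, not merely those of bounded tree-depth) is exactly the point one needs to check.

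One minor remark: if you look at how the paper actually \emph{uses} the lemma in \cref{sec:EX}, it only ever needs the threshold version ``tree-depth at most $\Param - i$'', not the exact version. So the phrasing ``tree-depth is $\Param$'' in the lemma statement is arguably a slight imprecision, and your extra step of combining two thresholds, while correct, is a bit more than the application requires. This does not affect the validity of your argument.
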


Now observe that we can get an elimination forest for an arena of tree-depth at most $\Param$
by a circuit that forgets about the orientation of edges and performs $\Param$ steps.
At step $i \in \{1, \ldots, \Param\}$ it considers a graph of tree-depth at most $\Param - i + 1$ and checks
which vertices should be removed to make it a graph of tree-depth at most $\Param-i$. 
It does so by taking for each vertex $v$ its connected component,
removing $v$ from it and
checking the tree-depth of the component with $v$ removed using a circuit from \cref{theorem:treedepth_circ}.
In case there are many candidates the circuit picks the first one.
From \cref{theorem:treedepth_circ} and \cref{lemma:treedepth_short_paths},
for fixed $\Param$ this circuit can have constant depth and be computed in logarithmic space
(to compute connected components it suffices to check for paths of length at most $2^{k+1} - 2$
what can be done similarly as in $\CAncestorEQ{v}{w}$). 
So from the construction above and \cref{thm:td_given_solve_log} we have the following.

\begin{corollary}
  For any $\Param \geq 0$ there exists a logspace uniform family of $\ACO$-circuits 
  that solves parity games played on arenas of tree-depth at most $\Param$.
\end{corollary}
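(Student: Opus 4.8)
The plan is to realize the elimination forest as a preprocessing circuit and then feed it, together with the game, into the circuit of \cref{thm:td_given_solve_log}; the whole construction then factors as a composition of two logspace-uniform $\ACO$ families, and since for a fixed $\Param$ such a composition keeps constant depth and a description generable in space $\AlgComplexitySpaceL$, the corollary follows. So the only real work is to build, from an arena of tree-depth at most $\Param$, an elimination forest $\Forest$ of height at most $\Param$ in logspace-uniform $\ACO$, encoded in exactly the $\DepthB{v}$/$\ParentB{v}$ format that \cref{thm:td_given_solve_log} expects.

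First I would construct $\Forest$ top-down in $\Param$ rounds after forgetting edge orientations. Round $i$ starts from the subgraph whose tree-depth is at most $\Param - i + 1$; for each surviving connected component I identify the vertices whose deletion from that component drops its tree-depth to at most $\Param - i$, declare the first such vertex under a fixed ordering of $\V$ to be that component's root, and record it at depth $i-1$. The component-minus-root then breaks into smaller connected components on which the next round acts. Each round only needs $\mathcal{O}(n)$ parallel calls to the tree-depth tester of \cref{theorem:treedepth_circ} (one per candidate vertex) plus a connected-component computation; the latter is $\ACO$ because \cref{lemma:treedepth_short_paths} bounds every simple path by $2^{\Param+1}-2$, so reachability inside a component is decided by a bounded-iteration path-doubling gadget of the same shape as $\CAncestorEQ{v}{w}$.

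From these choices I read off the required fields: $\DepthB{v}$ is the round in which $v$ was selected as a root, and $\ParentB{v}$ points to the root selected in the previous round for the component to which $v$ belonged at that round (the depth-$0$ root having no parent). Since there are only $\Param$ rounds, each of constant depth, the forest circuit has constant depth; and because $\mathcal{O}(\log(n))$ bits track the active vertex/component while $\mathcal{O}(\log(\Param))$ bits track the round, its description is produced in space $\AlgComplexitySpaceL$. Composing with \cref{thm:td_given_solve_log} and the output decoding yields the claimed family.

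The main obstacle I anticipate is arranging the top-down recursion so that the round-$i$ selections are globally consistent across all components at once and so that the depth/parent bookkeeping is genuinely constant-depth rather than growing with recursion depth. The observation that makes this go through is that the recursion tree has fixed height $\Param$, so ``recursing on the components'' is really $\Param$ parallel sweeps over the entire graph rather than a data-dependent descent: each sweep uses only local data (the current component via bounded-length reachability together with one $\ACO$ tree-depth query per candidate), and the fixed ordering of $\V$ supplies the deterministic tie-breaking that pins down a unique $\Forest$.
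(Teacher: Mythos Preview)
Your proposal is correct and follows essentially the same approach as the paper: build the elimination forest top-down in $\Param$ rounds by, at each round, computing connected components via bounded-length reachability (justified by \cref{lemma:treedepth_short_paths}) and selecting the first vertex whose removal drops the component's tree-depth, using the tester of \cref{theorem:treedepth_circ}; then compose with \cref{thm:td_given_solve_log}. You spell out the bookkeeping for $\DepthB{v}$ and $\ParentB{v}$ more explicitly than the paper does, but the construction is the same.
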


\section{Conclusions and Further Work}
We have shown that solving parity games played on arenas of bounded tree-depth or shallow clique-width is in logspace uniform $\ACO$, 
assuming that the suitable tree-model is provided for the latter case.
Speaking of uniformity, one can 
assign appropriate binary identifiers to gates such that for any $\Param$, the problem: 
``Given circuit parameters $n$, $\NR$ in binary and binary identifiers of two gates $A$, $B$ decide whether in  $\TDCircuit_{\Tuple{n, \NR, \Param}}$ both $A$ and $B$ are present, and
 the output of $B$ constitutes to the 
input of $A$''
can be solved by a deterministic Turing machine that works in time polylogarithmic in $n$ and uses space logarithmic in $n$. 
However, going lower seems to require more technical care. 
A natural next step in research would be
to decide whether one can obtain a similar result for the bounded shrub-depth case.
However, as we observed in \cref{sec:circ_overview_brief}, such step seems to require a different approach.

\bibliography{refs}

\appendix

\section{Proof of Lemma \ref{lemma:update_valid}}\label{sec:ski_lem_uev}
Let $\ParityGameFullDefC{\G'}{\V'}{\E'}{\R'}{\Color'}$ be a stop parity game
with player-aware coloring $\Color'$.
Let $\G''$ be a game created from $\G'$ by adding a directed edge from 
every vertex colored $\CFrom$ to every vertex colored $\CTo$.
For each $v \in \V'$ let $\EnfSetAlgG{\G}{\G'}{v}$ be a set of enforcements 
that is valid for $v$ in $\G'$.

\bigskip

Before proving \cref{lemma:update_valid} we will rewrite auxiliary gadgets 
$\GUpdateEV{\CFrom}{\CTo}$ and $\GUpdateOV{\CFrom}{\CTo}$ to a more convenient form.

\begin{definition}{Equivalent definition of $\GUpdateEV{\CFrom}{\CTo}$ }
  \begin{align*}
    \EnfOneMove &=& \bigcup_{w : \Color'(w) = \CTo}{\EnfSetAlgG{\G}{\G'}{w}}\\
    \EnfCycle &=&
    \{\FunAugment{P}{\CFrom}{\UndefinedM} : P(\CFrom) \text{ is even } \wedge P \in \EnfOneMove \}\\
    \EnfOpt &=& \EnfOneMove \cup \EnfCycle\\           
    \EnfSetAlgG{\G}{\G''}{v} &=&  \UpCloseECMP{\EnfSetAlgG{\G}{\G'}{v}} \cup
                          \UpCloseECMP{\{\EnfMerge{}{Q}{\CFrom}{R} : \CFrom \in \Domain{Q} \wedge Q \in \EnfSetAlgG{\G}{\G'}{v} \wedge R \in  \EnfOpt\}}
  \end{align*}
  Where $\EnfMerge{}{Q}{\CFrom}{R}$ is calculated as if $\CFrom$ was the color of player $\PE$,
  and $\UpCloseECMP{\EnfSetAlgG{\G}{\G'}{v}}$ means the up-closure of $\EnfSetAlgG{\G}{\G'}{v}$ according to $\ECMPEQ$.
\end{definition}

\begin{definition}{Equivalent definition of $\GUpdateOV{\CFrom}{\CTo}$ }
  \begin{align*}
    P \in \EnfSetAlgG{\G}{\G''}{v}\\
    \IFF\\
    \EX{Q \in \EnfSetAlgG{\G}{\G'}{v}}{\NoPairWith{Q}{\CFrom} \wedge Q \ECMPEQ P}\\
    \vee\\
    \EX{Q \in \EnfSetAlgG{\G}{\G'}{v}}{\FA{w: \Color'(w) = \CTo}{\EX{R \in \EnfSetAlgG{\G}{\G'}{w}}{\EnfMerge{\Color'}{Q}{\CFrom}{R} \ECMPEQ P}} \wedge \EvenPair{R}{\CFrom}}
  \end{align*}
  Where $\EnfMerge{}{Q}{\CFrom}{R}$ is calculated as if $\CFrom$ was the color of player $\PO$.
\end{definition}

Now we will prove \cref{lemma:update_valid} by proving soundness and completeness separately for  $\GUpdateEV{\CFrom}{\CTo}$ and $\GUpdateOV{\CFrom}{\CTo}$
(we have already observed that resulting sets are up-closed in \cref{sec:update_gadgets}).
\begin{lemma}\label{lemma:E_update_sound}
  Given that $\CFrom$ is the color of player $\PE$ and 
  for each $v \in \V'$ $\EnfSetAlgG{\G}{\G'}{v}$ is valid for $v$ in $\G'$,
  we have that for each $v \in \V'$ $\EnfSetAlgG{\G}{\G''}{v}$ created by $\GUpdateEV{\CFrom}{\CTo}$ is sound for $v$ in $\G''$.
\end{lemma}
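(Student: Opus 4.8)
The plan is to verify soundness by producing, for every $P \in \EnfSetAlgG{\G}{\G''}{v}$, a (possibly non-positional) strategy that is safe from $v$ in $\G''$ and whose enforcement is $\ECMPEQ P$. Since the equivalent definition presents $\EnfSetAlgG{\G}{\G''}{v}$ as a union of two up-closures and $\ECMPEQ$ is transitive, it suffices to handle \emph{base} elements: an arbitrary $P$ dominates some base $P_0 \ECMPEQ P$, and any strategy with enforcement $\ECMPEQ P_0$ automatically has enforcement $\ECMPEQ P$. So I would split into the two base cases, namely $P_0 \in \EnfSetAlgG{\G}{\G'}{v}$, and $P_0 = \EnfMerge{}{Q}{\CFrom}{R}$ with $\CFrom \in \Domain{Q}$, $Q \in \EnfSetAlgG{\G}{\G'}{v}$ and $R \in \EnfOpt$.

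Before the case analysis I would record a \emph{lifting} observation. Because $\CFrom \in \C_\PE$ and the coloring is player-aware, every edge added in passing from $\G'$ to $\G''$ is an out-edge of a vertex owned by player $\PE$; hence player $\PO$ gains no new moves. Consequently any valid strategy of $\G'$ (one that never traverses a new edge) stays exhaustive and correct in $\G''$, the set of consistent plays is unchanged, and therefore it is safe from $v$ in $\G''$ iff it is safe in $\G'$, with the same enforcement in both games. This immediately settles the first base case: soundness of $\EnfSetAlgG{\G}{\G'}{v}$ gives a strategy safe from $v$ in $\G'$ with enforcement $\ECMPEQ P_0$, which lifts verbatim to $\G''$.

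For the merge case I would invoke \cref{lemma:properties_of_merge}. Using soundness of the input sets, take $\ST_\PE$ safe from $v$ in $\G'$ with $\EnforcementC{\G'}{\TreeModelName}{\ST_\PE}{v} \ECMPEQ Q$; by the lifting observation the same bound holds in $\G''$. If $R \in \EnfOneMove$, then $R \in \EnfSetAlgG{\G}{\G'}{w}$ for some $w$ with $\Color'(w) = \CTo$, so there is $\ST'_\PE$ safe from $w$ with enforcement $\ECMPEQ R$. Since in $\G''$ every $\CFrom$-vertex has an edge to $w$, part 3 of \cref{lemma:properties_of_merge} glues $\ST_\PE$ and $\ST'_\PE$ into a strategy safe from $v$ whose enforcement equals $\EnfMerge{}{\EnforcementC{\G''}{\TreeModelName}{\ST_\PE}{v}}{\CFrom}{\EnforcementC{\G''}{\TreeModelName}{\ST'_\PE}{w}}$, and monotonicity of merge (part 1 of \cref{lemma:properties_of_merge}) bounds this by $\EnfMerge{}{Q}{\CFrom}{R} = P_0$.

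The subtle subcase, which I expect to be the main obstacle, is $R \in \EnfCycle$, i.e. $R = \FunAugment{R'}{\CFrom}{\UndefinedM}$ with $R' \in \EnfOneMove$ and $R'(\CFrom)$ even; this is where the new edge is used unboundedly often and where part 4 of \cref{lemma:properties_of_merge} enters. Here $R' \in \EnfSetAlgG{\G}{\G'}{w}$ for a $\CTo$-vertex $w$, giving $\ST'_\PE$ safe from $w$ with $\EnforcementC{\G''}{\TreeModelName}{\ST'_\PE}{w} \ECMPEQ R'$. The crucial point is that $\EnfLoop{\EnforcementC{\G''}{\TreeModelName}{\ST'_\PE}{w}}{\CFrom}$ is defined: from $R'(\CFrom)$ even and the definition of $\ECMPEQ$ one checks that $\EnforcementC{\G''}{\TreeModelName}{\ST'_\PE}{w}(\CFrom)$ is even or undefined, which is exactly the condition making $\EnfLoop$ defined. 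Part 4 then turns $\ST'_\PE$ (looping back to $w$ along new edges) into a strategy $\hat{\ST}$ safe from $w$ with enforcement $\EnfLoop{\EnforcementC{\G''}{\TreeModelName}{\ST'_\PE}{w}}{\CFrom} \ECMPEQ R$, the inequality holding because $\EnfLoop$ merely deletes $\CFrom$ from the domain. Finally I would glue $\ST_\PE$ to $\hat{\ST}$ exactly as in the previous subcase (part 3, then part 1) to get a strategy safe from $v$ with enforcement $\ECMPEQ \EnfMerge{}{Q}{\CFrom}{R} = P_0$, completing soundness. Matching the parity bookkeeping of $\EnfLoop$ against $\ECMPEQ$ is where most of the technical work lies.
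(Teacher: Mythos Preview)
Your proposal is correct and follows essentially the same three-case split as the paper: lift an old strategy unchanged, glue via part~3 of \cref{lemma:properties_of_merge} for $R \in \EnfOneMove$, and for $R \in \EnfCycle$ first apply part~4 to loop $\ST'_\PE$ back to $w$ and then glue as before, using part~1 for monotonicity throughout. Your treatment is in fact slightly more explicit than the paper's (which just says ``the rest of the proof is analogical'' for the $\EnfCycle$ case), and your parity check that $R'(\CFrom)$ even and $E \ECMPEQ R'$ force $E(\CFrom)$ to be even or undefined is exactly the point needed to ensure $\EnfLoop{E}{\CFrom}$ is defined.
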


\begin{proof}
    Let $v$ be an arbitrary vertex of $\G''$,
    we need to show that for every element $P$ of $\EnfSetAlgG{\G}{\G''}{v}$ there is
    a strategy $\ST_\PE$ that is safe from $v$ and $\EnforcementC{\G''}{\TreeModelName}{\ST_\PE}{v} \ECMPEQ P$.
    So let $P$ be any element of  $\EnfSetAlgG{\G}{\G''}{v}$,
    there are three cases to consider:
      \begin{itemize}
        \item If $P \in \EnfSetAlgG{\G}{\G'}{v}$,
        then from the assumption we have a strategy $\ST_\PE$
        that is safe from $v$ in $\G'$ and $\EnforcementC{\G'}{\TreeModelName}{\ST_\PE}{v} \ECMPEQ P$.
        This strategy is also safe from $v$ in $\G''$ and has the same enforcement, as it does not use new moves.
        \item If $\EnfMerge{\Color'}{Q}{\CFrom}{R} \ECMPEQ P$ where $Q \in \EnfSetAlgG{\G}{\G'}{v}$ and  $R \in \EnfOneMove$,
        then in fact $R \in \EnfSetAlgG{\G}{\G'}{w}$ for some $w$ such that $\Color'(w) = \CTo$.
        From the assumption we have a strategy $\ST_\PE^Q$ that is safe from $v$ in $\G'$ and 
        strategy $\ST_\PE^R$ that is safe from $w$ in $\G'$ such that 
        $Q' =\EnforcementC{\G'}{\TreeModelName}{\ST_\PE^Q}{v} \ECMPEQ Q$ and
        $R' = \EnforcementC{\G'}{\TreeModelName}{\ST_\PE^R}{w} \ECMPEQ R$.
        As $Q' \ECMPEQ Q$ and $R' \ECMPEQ R$, so by \cref{lemma:properties_of_merge} 
        we have $\EnfMerge{\Color'}{Q'}{\CFrom}{R'} \ECMPEQ  \EnfMerge{\Color'}{Q}{\CFrom}{R}$
        and that $\EnfMerge{\Color'}{Q'}{\CFrom}{R'}$ is the enforcement of some strategy in $\G''$ that is safe from $v$.
        \item If $\EnfMerge{\Color'}{Q}{\CFrom}{R} \ECMPEQ P$ where $Q \in \EnfSetAlgG{\G}{\G'}{v}$ and  $R \in \EnfCycle$,
      then by definition $R = \FunAugment{Z}{\CFrom}{\UndefinedM}$ where $Z(\CFrom)$ is even and  $Z \in \EnfSetAlgG{\G}{\G'}{w}$ for some $w$ such that $\Color'(w) = \CTo$.
      By assumption, we have $\ST_\PE^Z$ that is safe from $w$ and $Z' = \EnforcementC{\G'}{\TreeModelName}{\ST_\PE^Z}{w} \ECMPEQ Z$.
      By \cref{lemma:properties_of_merge}, $\FunAugment{Z'}{\CFrom}{\UndefinedM}$ is the enforcement from $w$ of a strategy in $\G''$
      that is safe from $w$.
      The rest of the proof is now analogical as above.
      \end{itemize}
  \end{proof}

\begin{lemma}\label{lemma:E_update_complete}
  Given that $\CFrom$ is the color of player $\PE$ and 
  for each $v \in \V'$ $\EnfSetAlgG{\G}{\G'}{v}$ is valid for $v$ in $\G'$,
  we have that for each $v \in \V'$ $\EnfSetAlgG{\G}{\G''}{v}$ created by $\GUpdateEV{\CFrom}{\CTo}$ is complete for $v$ in $\G''$.
\end{lemma}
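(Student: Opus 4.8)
The plan is to prove completeness by transforming an arbitrary safe positional strategy into one that routes every use of a new edge to a single fixed target, and then reading off the corresponding merge from the equivalent description of $\GUpdateEV{\CFrom}{\CTo}$. Fix a positional $\ST_\PE$ that is safe from $v$ in $\G''$ and set $P = \EnforcementC{\G''}{\TreeModelName}{\ST_\PE}{v}$; since $\EnfSetAlgG{\G}{\G''}{v}$ is up-closed, it suffices to exhibit one element of its base (one of the two sets before up-closure) that is $\ECMPEQ P$. If no play from $v$ consistent with $\ST_\PE$ ever traverses a new edge (in particular if $v$ reaches no vertex of color $\CFrom$), then $\ST_\PE$ is a positional strategy safe from $v$ in $\G'$ that uses only old moves, so $P \in \EnfSetAlgG{\G}{\G'}{v}$ by completeness in $\G'$ and hence $P \in \EnfSetAlgG{\G}{\G''}{v}$. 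This settles the degenerate case, so assume from now on that $\ST_\PE$ does traverse new edges.

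The heart of the argument is to pick a single vertex $w^*$ of color $\CTo$, replace every move of $\ST_\PE$ along a new edge by a move to $w^*$ to obtain a positional strategy $\ST'_\PE$, and show that $w^*$ can be chosen so that $\ST'_\PE$ is still safe from $v$ and $\EnforcementC{\G''}{\TreeModelName}{\ST'_\PE}{v} \ECMPEQ P$. The guiding principle is that, because the new edges join \emph{every} vertex of color $\CFrom$ to \emph{every} vertex of color $\CTo$, player $\PE$ is free to send each new move to whichever color-$\CTo$ vertex has the best continuation; committing to one $\ECMPEQ$-optimal such vertex can only improve the worst-case (i.e.\ $\RCMPEQ$-minimal) arrivals recorded by the enforcement. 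Since the continuation from $w^*$ is itself rerouted, the optimal target must be selected by an extremal/fixpoint argument over the finitely many color-$\CTo$ vertices reachable from $v$, and that same argument has to guarantee that whenever $\ST'_\PE$ can be driven back to color $\CFrom$ and hence loops through $w^*$, the maximal rank on that loop is even (otherwise $\PE$ would lose). I expect this step, the existence of a single safe, non-worsening target $w^*$, to be the main obstacle, and it is exactly where the player-$\PE$ case is genuinely harder than soundness (\cref{lemma:E_update_sound}).

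Once $\ST'_\PE$ is in hand, the factoring is a clean application of \cref{lemma:properties_of_merge}. Let $\sigma$ be the positional $\G'$-strategy that agrees with $\ST'_\PE$ except that at the color-$\CFrom$ vertices where $\ST'_\PE$ reroutes to $w^*$ it plays $\STOP$, and put $Q = \EnforcementC{\G'}{\TreeModelName}{\sigma}{v}$ and $Z = \EnforcementC{\G'}{\TreeModelName}{\sigma}{w^*}$; both are valid by completeness in $\G'$, so $Q \in \EnfSetAlgG{\G}{\G'}{v}$ and $Z \in \EnfSetAlgG{\G}{\G'}{w^*} \subseteq \EnfOneMove$. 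Viewed from $w^*$, $\ST'_\PE$ is exactly $\sigma$ with the stop at color $\CFrom$ replaced by the loop back to $w^*$, so part $4$ of \cref{lemma:properties_of_merge} applies (safety of the loop is what makes $\EnfLoop{Z}{\CFrom}$ defined, forcing $Z(\CFrom)$ to be undefined or even) and yields $R := \EnforcementC{\G''}{\TreeModelName}{\ST'_\PE}{w^*} = \EnfLoop{Z}{\CFrom}$, which is either $Z \in \EnfOneMove$ or $\FunAugment{Z}{\CFrom}{\UndefinedM} \in \EnfCycle$; in both cases $R \in \EnfOpt$. Viewed from $v$, $\ST'_\PE$ plays $\sigma$ until its first reroute and then switches permanently to its $w^*$-behavior, so part $3$ of \cref{lemma:properties_of_merge} gives $\EnforcementC{\G''}{\TreeModelName}{\ST'_\PE}{v} = \EnfMerge{\Color'}{Q}{\CFrom}{R}$ with $\CFrom \in \Domain{Q}$. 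This enforcement lies in the base of $\EnfSetAlgG{\G}{\G''}{v}$; since it equals $\EnforcementC{\G''}{\TreeModelName}{\ST'_\PE}{v} \ECMPEQ P$ and the set is up-closed, $P \in \EnfSetAlgG{\G}{\G''}{v}$, which is the required completeness.
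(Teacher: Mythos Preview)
Your overall plan matches the paper's: reduce to a single new-edge target and then read off a $\EnfMerge{}{Q}{\CFrom}{R}$ with $R\in\EnfOpt$. The degenerate case and the final factorisation via parts~3 and~4 of \cref{lemma:properties_of_merge} are essentially right (modulo a small fix below). But the step you yourself flag as ``the main obstacle'' really is one, and you have not proved it: the bare assertion that an ``extremal/fixpoint argument over the finitely many color-$\CTo$ vertices'' produces a safe, non-worsening $w^*$ is not a proof, and it is not even clear what quantity you would take extremal. Different choices of $w^*$ change the continuation from $w^*$ itself, so naive minimisation over color-$\CTo$ vertices does not terminate in any obvious order, nor does it explain why the loop through $w^*$ has even top rank.

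The paper's argument shows that filling this gap takes real work and, crucially, a case split you do not anticipate. It builds a greedy sequence $\PlayPrefSpec_0,\PlayPrefSpec_1,\ldots$ of $\RCMPEQ$-worst arrivals at color-$\CFrom$ vertices under the ``stop instead of new edge'' strategy $\ST_\PE^{\STOP}$, each time following $\ST_\PE$'s new move to get the next start. Case~1: some increment $\PlayInfSpec_i$ has even max rank; then the target is $v_i$, rerouting there closes only even loops, and $R=\EnfLoop{Z}{\CFrom}\in\EnfCycle$. Case~2: every increment has odd max rank; safety of $\ST_\PE$ then forces the sequence to terminate at some $u$, the target is whatever $\ST_\PE$ does at $u$ (possibly $\STOP$), and by construction that continuation never revisits color $\CFrom$, so $R\in\EnfOneMove$ (or, when $\ST_\PE(u)=\STOP$, the whole enforcement already lies in $\UpCloseECMP{\EnfSetAlgG{\G}{\G'}{v}}$). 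In both cases an explicit prefix-swap against $\PlayPrefSpec_{i-1}$ (resp.\ $\PlayPrefSpec_i$) establishes $\EnforcementC{\G''}{\TreeModelName}{\ST'_\PE}{v}\ECMPEQ\EnforcementC{\G''}{\TreeModelName}{\ST_\PE}{v}$. Your sketch contains neither this dichotomy nor the prefix-swap argument; without them there is no proof that the loop at $w^*$ is even, nor that a non-looping target exists when it is not.

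One smaller technical point: your $\ST'_\PE$ reroutes only the positions where $\ST_\PE$ takes a \emph{new} edge, but parts~3 and~4 of \cref{lemma:properties_of_merge} replace \emph{every} $\STOP$ of $\sigma$ at color~$\CFrom$. Since $\sigma$ also stops at those color-$\CFrom$ vertices where the original $\ST_\PE$ already played $\STOP$, the strategy produced by applying part~3 to $\sigma$ is not your $\ST'_\PE$. The paper avoids this mismatch by letting its rerouted strategy redirect both new-edge moves \emph{and} original stops at color~$\CFrom$; you should do the same.
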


To prove this, it suffices to show that for each
positional strategy $\ST_\PE$ that is safe from some vertex $v$ in $\G''$, we have
$\EnforcementC{\G''}{\TreeModelName}{\ST_\PE}{v} \in \EnfSetAlgG{\G}{\G''}{v}$.
Let $\ST_\PE$ be any positional strategy that is safe from a vertex $v$ in $\G''$, and $\ST_\PE^\STOP$ be its modification that behaves 
like $\ST_\PE$ but each time $\ST_\PE$ will make a move along a new edge (the edge that is present in $\G''$ but not in $\G'$)  $\ST_\PE^\STOP$ will 
decide to $\STOP$. 

\begin{observation}\label{observation:g_j_1_g_j}
  From the definition of $\ST_\PE^\STOP$, we have that
  its enforcements are the same in $\G'$ and $\G''$, 
  as it does not move along new edges.
\end{observation}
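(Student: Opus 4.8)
The plan is to argue straight from the definitions: I claim that no play consistent with $\ST_\PE^\STOP$ and starting at $v$ ever traverses a new edge, so the entire family of such plays — together with the ranks, colors, and $\STOP$-decisions attached to them — is the very same object whether we read it off in $\G'$ or in $\G''$. Since $\EnforcementV{\G}{\ST_\PE^\STOP}{v}$ and hence $\EnforcementC{\G}{\TreeModelName}{\ST_\PE^\STOP}{v}$ are functions only of these plays, they must agree in both games. The starting point is to record the shape of the new edges: $\G''$ arises from $\G'$ by adding a directed edge from every vertex of color $\CFrom$ to every vertex of color $\CTo$, and in the present case $\CFrom \in \C_\PE$, so every new edge leaves a vertex belonging to player $\PE$. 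Consequently the set of outgoing edges is unchanged at every vertex of player $\PO$, and only $\CFrom$-colored vertices, all of which belong to $\PE$, gain new outgoing options; in particular the move set available to player $\PO$ is identical in $\G'$ and $\G''$.

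Next I would verify that $\ST_\PE^\STOP$ is a legal strategy in both games and never uses a new edge. By its very definition $\ST_\PE^\STOP$ plays $\STOP$ exactly at those positions where $\ST_\PE$ would follow a new edge, and coincides with $\ST_\PE$ everywhere else; hence at every position where it actually moves it follows an edge already present in $\G'$, while $\STOP$ is always available to $\PE$ in a stop parity game. This makes $\ST_\PE^\STOP$ exhaustive and correct both in $\G'$ and in $\G''$, and guarantees that it traverses no edge of $\E'' \setminus \E'$.

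I would then establish the key equality $\PlayFrom{\G'}{\ST_\PE^\STOP}{v} = \PlayFrom{\G''}{\ST_\PE^\STOP}{v}$ by a move-by-move induction along an arbitrary play: at a position owned by $\PE$ the next move, or termination by $\STOP$, is dictated identically by $\ST_\PE^\STOP$ and, when it is a move, uses only an old edge; at a position owned by $\PO$ the admissible continuations coincide because $\PO$'s outgoing edges are the same in both games. Since $\G''$ and $\G'$ share the same vertices, ranks $\R$, and coloring $\Color$, for every $w$ the prefix sets $\PlayFromTo{\G'}{\ST_\PE^\STOP}{v}{w}$ and $\PlayFromTo{\G''}{\ST_\PE^\STOP}{v}{w}$ coincide, the values $\PlayMaxRank{\G'}$ and $\PlayMaxRank{\G''}$ agree on them, and the $\STOP$-decisions are the same. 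Feeding this into the definitions yields $\EnforcementV{\G'}{\ST_\PE^\STOP}{v} = \EnforcementV{\G''}{\ST_\PE^\STOP}{v}$ and therefore $\EnforcementC{\G'}{\TreeModelName}{\ST_\PE^\STOP}{v} = \EnforcementC{\G''}{\TreeModelName}{\ST_\PE^\STOP}{v}$.

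The only point requiring genuine care — and the main obstacle — is verifying that exhaustiveness transfers between the two games: a $\CFrom$-colored $\PE$-vertex could be a dead end in $\G'$ yet acquire new outgoing edges in $\G''$, so I must invoke that $\STOP$ is always available to $\PE$ to ensure that such a vertex terminates the play by $\STOP$ in both games, rather than being stuck in one while continuing in the other. Once this is handled, the identity of the two play families, and hence of the two enforcements, is a direct unfolding of the definitions.
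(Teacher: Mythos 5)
Your proposal is correct and matches the paper's reasoning exactly: the paper treats this observation as immediate from the definition of $\ST_\PE^\STOP$ (``as it does not move along new edges''), and your argument is precisely the implicit unfolding of that one-liner --- all new edges leave $\CFrom$-colored vertices of player $\PE$, so $\PO$'s options are unchanged, $\ST_\PE^\STOP$ replaces every new move by $\STOP$ (which is always available to $\PE$ in a stop parity game, disposing of your dead-end worry), hence the play families, and with them $\EnforcementV{}{\ST_\PE^\STOP}{v}$ and $\EnforcementC{}{\TreeModelName}{\ST_\PE^\STOP}{v}$, coincide in $\G'$ and $\G''$. Your explicit note that the argument uses $\CFrom \in \C_\PE$ is a detail the paper leaves implicit, but it is the same proof.
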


The main intuition behind the reasoning presented below is that, as 
we have edges from all vertices colored $\CFrom$ to all vertices colored $\CTo$,
so we should be able to modify $\ST_\PE$ such that if it moves along a new edge from $\CFrom$
to $\CTo$, then it moves always to the same vertex.
We should also be able to do the change without making enforcements from $v$ bigger according to $\ECMPEQ$.

To see this, we consider a sequence of worst arrivals (from the perspective of player $\PE$)
at vertices of color $\CFrom$, such that the next move according to $\ST_\PE$ is either a move along a new edge or $\STOP$.
More formally, we consider $\PlayPrefSpec_i, B_i, X_i, v_i$ defined as follows.\\
For $i = 0$:
\begin{itemize}
  \item $v_i = v$.
  \item $B_i$ is the set of plays $\Play$ that start at $v_i$ are consistent with $\ST_\PE^\STOP$
  and end at some vertex of color $\CFrom$ such that $\PlayMaxRankS{\G''}{\Play}$ is the lowest possible according to $\RCMPEQ$.

  More formally:\\
  Let $L = \PlayFrom{\G''}{\ST_\PE^\STOP}{v_i} \cap \bigcup_{w: \Color'(w) = \CFrom}{\PlayFromTo{\G''}{\ST_\PE^\STOP}{v_i}{w}}$,\\
  $r = \min_{\RCMPEQ}{\{\PlayMaxRankS{\G''}{\Play} : \Play \in L\}}$,\\ 
  then $B_i = \{\Play \in L : \PlayMaxRankS{\G''}{\Play} = r \}$.
  \item $X_i$ is the subset of $B_i$ that consists of plays $\Play$ such that $\ST_\PE(\Play) = \STOP$.
  \item $\PlayPrefSpec_i$ is any element of  $X_i$ and in case $X_i$ is empty, then it is any element of 
  $B_i$ and if $B_i$ is empty then so $\PlayPrefSpec_i$.
\end{itemize}
For $i > 0$:
\begin{itemize}
  \item If either $X_{i - 1} \not= \emptyset$ or $B_{i - 1} = \emptyset$, then $B_i = X_i = \emptyset$, $\IsUndefined{v_i}$ and $\PlayPrefSpec_i = \PlayPrefSpec_{i-1}$
  \item Otherwise, let $v_i = \ST_\PE(\PlayPrefSpec_{i-1})$, and 
  $B_i, X_i$ be defined exactly as when $i = 0$. Let $\PlayPrefSpec_i$ be $\PlayPrefSpec_{i - 1}$  extended with any element of 
  $X_i$ and in case $X_i = \emptyset$, then any element of $B_i$ and if $B_i = \emptyset$, then let $\PlayPrefSpec_i$ be equal to $\PlayPrefSpec_{i - 1}$.
  Note that there for $i > 0$ we have that $\Color'(\PlayLastVertex{\PlayPrefSpec_{i - 1}}) = \CFrom$ and $\Color'(v_i) = \CTo$.
\end{itemize}
We denote $\PlayPrefSpec$ as $\lim_{i \Approaches \infty}{\PlayPrefSpec_i}$, and we consider cases for $\PlayPrefSpec$. 
As $\PlayInfSpec_i$, we denote the part appended to $\PlayPrefSpec_{i - 1}$ to create $\PlayPrefSpec_i$.

\begin{observation}\label{observation:compliance_1}
  If $\PlayPrefSpec$ is not empty, then it starts at $v$ and is either consistent with $\ST_\PE$ or can be extended to a play that is consistent with $\ST_\PE$.
  What is more,
  each $\PlayPrefSpec_i$ can be extended to a play that is consistent with $\ST_\PE$.
\end{observation}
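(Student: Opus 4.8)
The plan is to prove the three assertions by reducing all of them to a single structural fact: the prefixes out of which $\PlayPrefSpec$ is assembled never traverse a newly added edge in their interior. The claim that $\PlayPrefSpec$ starts at $v$ is immediate. Whenever $\PlayPrefSpec$ is nonempty, the stage $\PlayPrefSpec_0$ is already nonempty (if $B_0 = \emptyset$ the recursion terminates at once with every $\PlayPrefSpec_i$ empty), and by construction $\PlayPrefSpec_0$ lies in $B_0 \cup X_0$, a set of prefixes rooted at $v_0 = v$. Since each later stage only extends the previous one, the starting vertex is preserved in the limit.

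The heart of the argument is showing that every $\PlayPrefSpec_i$ is consistent with $\ST_\PE$. I would first recall that $\ST_\PE^\STOP$ agrees with $\ST_\PE$ everywhere except that it plays $\STOP$ exactly where $\ST_\PE$ would move along a new edge (one present in $\G''$ but not in $\G'$), and that such edges leave only vertices of colour $\CFrom$, which in this case belong to player $\PE$. Now consider one of the segments that make up $\PlayPrefSpec_i$: a prefix consistent with $\ST_\PE^\STOP$ that ends at a vertex of colour $\CFrom$. At any interior player-$\PE$ vertex the strategy $\ST_\PE^\STOP$ cannot have stopped (otherwise the prefix would have ended there), so at that vertex $\ST_\PE$ uses an old edge and $\ST_\PE^\STOP$ copies it; hence every player-$\PE$ move inside the segment coincides with the move prescribed by $\ST_\PE$. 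The junctions between consecutive segments are handled by the defining equation $v_i = \ST_\PE(\PlayPrefSpec_{i-1})$: the step from the $\CFrom$-coloured endpoint $\PlayLastVertex{\PlayPrefSpec_{i-1}}$ to the $\CTo$-coloured vertex $v_i$ is precisely the move chosen by $\ST_\PE$ on the prefix $\PlayPrefSpec_{i-1}$. Concatenating the segments along these junctions therefore yields a prefix all of whose player-$\PE$ moves follow $\ST_\PE$, so $\PlayPrefSpec_i$ is consistent with $\ST_\PE$.

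Given this, the extendability statements follow routinely. A finite prefix consistent with the exhaustive strategy $\ST_\PE$ can always be completed to a full play consistent with $\ST_\PE$ by letting player $\PE$ follow $\ST_\PE$ and player $\PO$ take any legal move; this yields the last assertion for each $\PlayPrefSpec_i$, and also the finite case of the middle assertion, namely when the recursion terminates and $\PlayPrefSpec$ equals some $\PlayPrefSpec_i$. If instead the recursion never terminates, then $\PlayPrefSpec = \lim_{i \to \infty} \PlayPrefSpec_i$ is infinite, and since each of its moves already appears in some $\PlayPrefSpec_i$ and is thus an $\ST_\PE$-move, the infinite play $\PlayPrefSpec$ is itself consistent with $\ST_\PE$.

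I expect the only delicate point to be the interior-edge argument of the second paragraph: one must argue carefully that a $\ST_\PE^\STOP$-consistent prefix ending at a vertex of colour $\CFrom$ contains no interior traversal of a new edge, which is exactly where the specific definition of $\ST_\PE^\STOP$ and the fact that new edges emanate only from player-$\PE$ vertices of colour $\CFrom$ are used. Everything else is bookkeeping about concatenating prefixes and passing to the limit.
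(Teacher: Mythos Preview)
Your argument is correct. The paper states this as an observation without giving a proof, so there is no explicit argument to compare against; your write-up spells out precisely the verification the authors evidently considered routine. The key point you identify---that within each $\ST_\PE^\STOP$-consistent segment no interior player-$\PE$ move can be along a new edge (else $\ST_\PE^\STOP$ would have stopped there), and that the junction step $\PlayLastVertex{\PlayPrefSpec_{i-1}} \to v_i$ is exactly $\ST_\PE(\PlayPrefSpec_{i-1})$---is indeed what makes the concatenation $\ST_\PE$-consistent, and the rest follows as you say.
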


First of all, if $\PlayPrefSpec$ is empty, then $B_0 = \emptyset$ and
every play that starts at $v$ and is consistent with $\ST_\PE$ never makes use of new moves, so 
$\ST_\PE$ is also safe from $v$ in $\G'$ and has the same enforcement, so we have 
$\EnforcementC{\G''}{\TreeModelName}{\ST_\PE}{v} \in \EnfSetAlgG{\G}{\G''}{v}$.

\bigskip

If $\PlayPrefSpec$ is not empty, then there may exist $i > 0$ such that $B_{i} \not= \emptyset$ and 
$\PlayInfSpec_i$ has even max $\R$. 
Now assume that such $i$ exists and let it be the smallest possible. 
Consider the strategy 
$\ST'_\PE$ that behaves like  ${\ST}_\PE$ but each time   ${\ST}_\PE$ would use a new move or decide 
to $\STOP$ at vertex colored $\CFrom$ the strategy
$\ST'_\PE$ makes a move to $v_i$.
Note that 
that $\ST'_\PE$ defined as above is positional,
as ${\ST}_\PE$ is positional.

\begin{observation}\label{observation:till_i_not_better}
  For $i$ as above, we have
  $\PlayMaxRankS{\G''}{\PlayPrefSpec_{i - 1}} \RCMPEQ  \PlayMaxRankS{\G''}{\PlayPrefSpec_{0}}$,
  as otherwise
  contradiction with $i$ being minimal.
\end{observation}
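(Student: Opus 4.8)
The plan is to decompose the prefix $\PlayPrefSpec_{i-1}$ as $\PlayPrefSpec_0$ followed by the appended segments $\PlayInfSpec_1, \ldots, \PlayInfSpec_{i-1}$, to pin down the parity of each segment's maximal rank using the minimality of $i$, and then to conclude from a small monotonicity fact about the order $\RCMPEQ$. The case $i = 1$ is the trivial equality $\PlayMaxRankS{\G''}{\PlayPrefSpec_0} \RCMPEQ \PlayMaxRankS{\G''}{\PlayPrefSpec_0}$, so the content is in $i \ge 2$.

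First I would note that because the construction reaches a nonempty $B_i$, it cannot have terminated at any earlier step; unwinding the $i>0$ clause of the definition downward gives $X_j = \emptyset$ and $B_j \neq \emptyset$ for every $0 \le j \le i-1$. In particular each intermediate step genuinely appends a segment $\PlayInfSpec_j$ (drawn from $B_j$, since $X_j=\emptyset$) that runs from a vertex of color $\CFrom$ through $v_j$ to the next vertex of color $\CFrom$. Now I invoke minimality: $i$ is the least positive index with $B_i \neq \emptyset$ and $\PlayInfSpec_i$ of even maximal rank, so for each $0 < j < i$, since $B_j \neq \emptyset$, the segment $\PlayInfSpec_j$ must have \emph{odd} maximal rank — otherwise $j < i$ would already satisfy the defining condition, contradicting the minimality of $i$.

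The key observation is then that the maximal rank of a concatenation of play prefixes is simply the ordinary maximum (over $\Nat$) of the maximal ranks of the pieces, since $\PlayMaxRankS{\G''}{\PlayPrefSpec}$ is by definition the largest rank among the vertices visited by $\PlayPrefSpec$. Hence $\PlayMaxRankS{\G''}{\PlayPrefSpec_{i-1}}$ equals the ordinary maximum of $\PlayMaxRankS{\G''}{\PlayPrefSpec_0}$ and the odd numbers $\PlayMaxRankS{\G''}{\PlayInfSpec_1}, \ldots, \PlayMaxRankS{\G''}{\PlayInfSpec_{i-1}}$. I would then finish with the elementary fact that for any $M_0 \in \Nat$ and any odd $N$ one has $\max(M_0, N) \RCMPEQ M_0$: if $N \le M_0$ the maximum is $M_0$ and the two sides coincide, while if $N > M_0$ the maximum is the odd $N$, and from $(-1)^N N = -N < -M_0 \le (-1)^{M_0} M_0$ we get $N \RCMP M_0$, i.e. $N$ lies to the left of $M_0$ in the chain $\cdots \RCMP 3 \RCMP 1 \RCMP 0 \RCMP 2 \RCMP \cdots$. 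Applying this repeatedly to the odd segment-maxima yields $\PlayMaxRankS{\G''}{\PlayPrefSpec_{i-1}} \RCMPEQ \PlayMaxRankS{\G''}{\PlayPrefSpec_0}$, as claimed. The point that I expect to require the most care — though it is short — is exactly this interplay between the standard order that governs concatenation of maximal ranks and the parity-twisted order $\RCMPEQ$ in which the statement is phrased: one must check that prepending or appending an odd-ranked segment can only move the value $\RCMP$-downward (worse for player $\PE$), never upward, which is precisely where the minimality of $i$ (ruling out an intervening even segment) is used.
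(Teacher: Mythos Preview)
Your proposal is correct and follows exactly the approach the paper hints at in its one-line justification (``as otherwise contradiction with $i$ being minimal''): you unwind the construction to see that $B_j \neq \emptyset$ and $X_j = \emptyset$ for $0 \le j \le i-1$, use minimality of $i$ to force each intermediate segment $\PlayInfSpec_j$ to have odd maximal rank, and then verify the elementary $\RCMPEQ$-monotonicity fact that taking an ordinary maximum with an odd number never increases the value in the $\RCMPEQ$ order. The paper leaves all of this implicit, so your write-up is simply a fleshed-out version of the same argument.
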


\begin{lemma}\label{lemma:merge_composition} 
  \phantom{L}\\
  For
  \begin{gather*}
    P = \EnforcementC{\G'}{\TreeModelName}{\ST_\PE^\STOP}{v}\\
    Q = \EnfLoop{\EnforcementC{\G'}{\TreeModelName}{\ST_\PE^\STOP}{v_i}}{\CFrom}
  \end{gather*} 
  we have
  \begin{gather*}
    \EnforcementC{\G''}{\TreeModelName}{\ST'_\PE}{v} = \EnfMerge{\Color'}{P}{\CFrom}{Q}
  \end{gather*}
  what is more $\ST'_\PE$ is safe from $v$ and $\EnforcementC{\G''}{\TreeModelName}{\ST'_\PE}{v} \ECMPEQ \EnforcementC{\G''}{\TreeModelName}{{\ST}_\PE}{v}$.
\end{lemma}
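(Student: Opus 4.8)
The plan is to exhibit $\ST'_\PE$ as the result of two nested strategy compositions from \cref{lemma:properties_of_merge}: an inner \emph{loop} at $v_i$ (property (4)) wrapped in an outer \emph{merge} at $v$ (property (3)). Reading off its enforcement then immediately yields the claimed equality with $\EnfMerge{\Color'}{P}{\CFrom}{Q}$ together with the safety of $\ST'_\PE$; the comparison $\ECMPEQ \EnforcementC{\G''}{\TreeModelName}{\ST_\PE}{v}$ is the separate, and harder, part.

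First I would check that both properties apply in $\G''$. Since $\PlayPrefSpec$ is non-empty we have $B_0 \neq \emptyset$, so $\ST_\PE^\STOP$ is forced to $\STOP$ at a $\CFrom$-coloured vertex reachable from $v$; hence $\CFrom \in \Domain{P}$ and the player-$\PE$ clause of $\mathrm{merge}$ is the active one. As $i > 0$ we have $\Color'(v_i) = \CTo$, so in $\G''$ every $\CFrom$-coloured vertex carries an edge to $v_i$, which is exactly the adjacency hypothesis of properties (3) and (4). Finally $v_i = \ST_\PE(\PlayPrefSpec_{i-1})$ is reached from $v$ along a play consistent with $\ST_\PE$ (\cref{observation:compliance_1}); since $\ST_\PE$ is positional and safe from $v$, every play from $v_i$ consistent with $\ST_\PE$ is the suffix of a play from $v$ consistent with $\ST_\PE$ and so not losing, whence $\ST_\PE^\STOP$ is safe from both $v$ and $v_i$ (adding $\STOP$ cannot make a safe strategy losing).

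Next I would show that $\EnfLoop{\EnforcementC{\G'}{\TreeModelName}{\ST_\PE^\STOP}{v_i}}{\CFrom}$ is defined; this is the one place where the minimality of $i$ is used. Because $\ST_\PE^\STOP$ never traverses a new edge, $B_i$ consists of finite $\G'$-plays from $v_i$ that end with a $\STOP$ at a $\CFrom$-vertex, so its optimal value $r_i$ is precisely $\EnforcementC{\G'}{\TreeModelName}{\ST_\PE^\STOP}{v_i}(\CFrom)$; as $\PlayInfSpec_i \in B_i$ was chosen to have even maximal rank, this value is even and lands in the second clause of $\mathrm{loop}$. I would then apply property (4) (in $\G''$, at $v_i$, colour $\CFrom$) to $\ST_\PE^\STOP$, producing a safe positional strategy $\widehat{\ST}_\PE$ that plays $\ST_\PE^\STOP$ from $v_i$ but returns to $v_i$ whenever it would $\STOP$ at a $\CFrom$-vertex, with $\EnforcementC{\G''}{\TreeModelName}{\widehat{\ST}_\PE}{v_i} = Q$ (using \cref{observation:g_j_1_g_j} to pass between $\G'$ and $\G''$). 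Applying property (3) to $\ST_\PE^\STOP$ from $v$ and $\widehat{\ST}_\PE$ from $v_i$, merged at $\CFrom$, then yields a strategy $\ST''_\PE$ that is safe from $v$ with enforcement $\EnfMerge{\Color'}{P}{\CFrom}{Q}$. It remains to identify $\ST''_\PE$ with $\ST'_\PE$: both agree with $\ST_\PE$ everywhere except that, at a $\CFrom$-vertex where $\ST_\PE$ takes a new edge or stops, they move to $v_i$; this uses that ``$\ST_\PE^\STOP$ stops at $\CFrom$'' is by definition the same event as ``$\ST_\PE$ takes a new edge or stops at $\CFrom$''. Equality of the two positional strategies gives the enforcement equality and the safety of $\ST'_\PE$.

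The remaining inequality $\EnforcementC{\G''}{\TreeModelName}{\ST'_\PE}{v} \ECMPEQ \EnforcementC{\G''}{\TreeModelName}{\ST_\PE}{v}$ is where I expect the real work. For the domain inclusion I would note that $\ST'_\PE$ only ever reaches vertices (in particular $v_i$, of colour $\CTo$) that $\ST_\PE$ itself reaches from $v$. For the pointwise condition I would fix a colour $c$ and the play $\Play'$ of $\ST'_\PE$ realising the $\min_\RCMPEQ$ defining $\EnforcementC{\G''}{\TreeModelName}{\ST'_\PE}{v}(c)$, which runs $\ST_\PE^\STOP$ to a worst $\CFrom$-arrival (of maximal rank $\RCMPEQ P(\CFrom)$, by \cref{observation:till_i_not_better}), then circulates through the $v_i$-loop, and finally continues as $\ST_\PE^\STOP$; I would match it to a play of $\ST_\PE$ taking the corresponding new edges and argue its maximal rank is $\RCMPEQ$ that of $\Play'$, so the $\min_\RCMPEQ$ for $\ST_\PE$ is $\RCMPEQ$ the one for $\ST'_\PE$. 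The delicate point throughout is the interaction of the rank-accumulating $\max$ (both along concatenated play segments and inside $\EnfLift$) with the order $\RCMPEQ$; here I would repeatedly invoke the monotonicity $p \RCMPEQ p' \wedge q \RCMPEQ q' \Rightarrow \max(p,q) \RCMPEQ \max(p',q')$ noted after \cref{lemma:properties_of_merge} to keep the comparison stable across both the loop and the merge.
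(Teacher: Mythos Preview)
Your plan for the first two claims (the enforcement equality and the safety of $\ST'_\PE$) is essentially the paper's: apply property~(4) of \cref{lemma:properties_of_merge} at~$v_i$ to obtain a safe strategy with enforcement~$Q$, then property~(3) at~$v$ merging with colour~$\CFrom$, and identify the resulting strategy with~$\ST'_\PE$. This is correct. One small misstatement: the minimality of~$i$ is \emph{not} what makes $\mathrm{loop}$ defined---that needs only that $\PlayMaxRankS{\G''}{\PlayInfSpec_i}$ is even, which is the case hypothesis. Minimality of~$i$ is what drives \cref{observation:till_i_not_better}, which you invoke only in the comparison step.

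The comparison $\EnforcementC{\G''}{\TreeModelName}{\ST'_\PE}{v} \ECMPEQ \EnforcementC{\G''}{\TreeModelName}{\ST_\PE}{v}$ is where your sketch has a real gap. ``Match $\Play'$ to a play of $\ST_\PE$ taking the corresponding new edges'' does not work as stated: at a $\CFrom$-vertex where $\ST'_\PE$ jumps to $v_i$, the positional $\ST_\PE$ may move along a new edge to some $w' \neq v_i$ (or may $\STOP$), after which the two plays run through different vertices and can no longer be compared segment by segment. The paper's construction is different: for an arbitrary $\PlayPref \in \PlayFromTo{\G''}{\ST'_\PE}{v}{w} \setminus \PlayFromTo{\G''}{\ST_\PE}{v}{w}$ it locates the \emph{last} position~$q$ (strictly before the endpoint) at which $\ST'_\PE$ and $\ST_\PE$ disagree, and replaces $\PlayPref[\ldots q]$ wholesale by the fixed prefix $\PlayPrefSpec_{i-1}$. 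The splice is legitimate because $\PlayPref[q+1] = v_i = \ST_\PE(\PlayPrefSpec_{i-1})$ and the suffix after~$q$ is already $\ST_\PE$-consistent. The rank comparison $\PlayMaxRankS{\G''}{\PlayPrefSpec_{i-1}} \RCMPEQ \PlayMaxRankS{\G''}{\PlayPref[\ldots q]}$ then uses exactly the two facts you isolated: \cref{observation:till_i_not_better} controls $\PlayPrefSpec_{i-1}$ against the first segment of~$\PlayPref$ (which lies in the set over which $B_0$ optimises), and every subsequent $v_i$-loop segment of~$\PlayPref$, lying in the set over which $B_i$ optimises, has max rank $\succeq_\RCMPEQ \PlayMaxRankS{\G''}{\PlayInfSpec_i}$ and hence is even, so taking $\max$ with it can only move upward in~$\RCMPEQ$. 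Your ingredients are the right ones; what is missing is this prefix-replacement construction that makes the matched play genuinely $\ST_\PE$-consistent.
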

\begin{proof}
  From the definition of $\ST_\PE^\STOP$ and \cref{observation:compliance_1}  we have that $\ST_\PE^\STOP$ is safe from both $v$ and $v_i$ in $\G''$.
  From the choice of $i$, we have that $\EnforcementC{\G'}{\TreeModelName}{\ST_\PE^\STOP}{v_i}(\CFrom)$ is even.
  This along with  \cref{lemma:properties_of_merge} gives us that $Q$ is the enforcement from $v_i$ of
  a strategy ${\ST}_\PE^{\STOP'}$ that is safe from $v_i$. 

  From the same lemma, we have that  $\EnfMerge{\Color'}{P}{\CFrom}{Q}$ 
  is the enforcement from $v$ of the strategy  $\ST^{\prime\prime}_\PE$  that is safe from $v$, behaves like ${\ST}_\PE^\STOP$,
  but each time ${\ST}_\PE^\STOP$ decides to stop at some vertex colored $\CFrom$
  $\ST^{\prime\prime}_\PE$ decides to move to $v_i$, so it is also the enforcement from $v$ of  $\ST'_\PE$.
  
  For the relation $\EnforcementC{\G''}{\TreeModelName}{\ST'_\PE}{v} \ECMPEQ \EnforcementC{\G''}{\TreeModelName}{{\ST}_\PE}{v}$,
  let $\PlayPref \in \left(\PlayFromTo{\G''}{\ST'_\PE}{v}{w} \setminus \PlayFromTo{\G''}{{\ST}_\PE}{v}{w}\right)$  and $q$ be the last point on $\PlayPref$ such that
  $\ST'_\PE(\PlayPref[q]) \not= \ST_\PE(\PlayPref[q])$ and $q$ is not the last point on $\PlayPref$, then $\Color'(\PlayPref[q]) = \CFrom$ and we can replace 
  $\PlayPref[\ldots q]$ (prefix consisting of the first $q$ elements) with $\PlayPrefSpec_{i-1}$ creating $\PlayPref'$, such that $\PlayMaxRank{\G''}(\PlayPref') \RCMPEQ \PlayMaxRank{\G''}(\PlayPref)$
  and $\PlayPref' \in  \PlayFromTo{\G''}{{\ST}_\PE}{v}{w}$.
  This follows from \cref{observation:till_i_not_better} and the fact that in any play consistent with $\ST'_\PE$ fragments between vertices of color $\CFrom$
  at which  ${\ST}_\PE^\STOP$ decides to stop have rank not smaller that $\PlayMaxRank{\G''}(\PlayInfSpec_i)$ according to $\RCMPEQ$, 
  what in particular means even max rank.
\end{proof}

Note that in the lemma above $\EnforcementC{\G''}{\TreeModelName}{\ST'_\PE}{v} \in \EnfSetAlgG{\G}{\G''}{v}$ as 
$\EnforcementC{\G'}{\TreeModelName}{\ST_\PE^\STOP}{v} \in \EnfSetAlgG{\G}{\G'}{v}$
and
$\EnforcementC{\G'}{\TreeModelName}{\ST_\PE^\STOP}{v_i} \in \EnfSetAlgG{\G}{\G'}{v_i}$.
So as $\EnforcementC{\G''}{\TreeModelName}{\ST'_\PE}{v} \ECMPEQ \EnforcementC{\G''}{\TreeModelName}{{\ST}_\PE}{v}$ and $\EnfSetAlgG{\G}{\G''}{v}$ is up-closed, we have that $\EnforcementC{\G''}{\TreeModelName}{{\ST}_\PE}{v} \in \EnfSetAlgG{\G}{\G''}{v}$.

\bigskip

The one last case is when $\PlayPrefSpec$  is not empty but there is no $i > 0$ such that 
$B_{i} \not= \emptyset$ and 
$\PlayInfSpec_i$ has even max $\R$. As $\ST_\PE$ is a positional strategy that is safe from $v$, 
so from \cref{observation:compliance_1} $\PlayPrefSpec$ must be finite as each of 
$\PlayInfSpec_i$ for $i > 0$ is either empty or has odd max rank. 
Now let $u$ be the last vertex on $\PlayPrefSpec$ and $i$ be the highest number such that $B_i \not= \emptyset$, observe that $\Color'(u) = \CFrom$.
Consider the strategy $\ST'_\PE$ that behaves exactly like ${\ST}_\PE$, but for vertices $w$ of color $\CFrom$ such that
${\ST}_\PE^{\STOP}(w) = \STOP$, we have
$\ST'_\PE(w) = {\ST}_\PE(u)$, i.e. each time ${\ST}_\PE$ chooses to move along new edge or decides to $\STOP$
at vertex colored $\CFrom$
the strategy $\ST'_\PE$ makes the move that ${\ST}_\PE$ would make at vertex $u$.
Note 
that $\ST'_\PE$ is positional,
as ${\ST}_\PE$ is positional.

\begin{observation}\label{observation:till_i_not_better_2}
  For $i$ as above, we have
  $\PlayMaxRankS{\G''}{\PlayPrefSpec_{i}} \RCMPEQ  \PlayMaxRankS{\G''}{\PlayPrefSpec_{0}}$
  as otherwise, we would have $k$ such that $\PlayInfSpec_k$ is even.
\end{observation}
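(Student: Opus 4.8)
The plan is to establish \cref{observation:till_i_not_better_2} by a parity case analysis and then feed it into the factorisation that closes the remaining (third) case of completeness, where $\PlayPrefSpec$ is nonempty yet no fragment $\PlayInfSpec_k$ with $k>0$ carries an even maximal rank. Write $M=\PlayMaxRankS{\G''}{\PlayPrefSpec_0}$ and $N=\PlayMaxRankS{\G''}{\PlayPrefSpec_i}$, and recall that $N$ is the \emph{ordinary} maximum of $M$ together with the ordinary maxima of the appended fragments $\PlayInfSpec_1,\dots,\PlayInfSpec_i$, each of which is odd by the case hypothesis. I would argue by contradiction: since $\RCMPEQ$ is total, $N \RCMPEQ M$ can only fail if $M \RCMP N$, so assume this. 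As $M$ is one of the maximised quantities, $N\geq M$ in the ordinary order. If $N$ were odd, then when $M$ is odd we get $N \RCMPEQ M$ (among odds, the larger value is $\RCMP$-smaller), and when $M$ is even we get $N \RCMP M$ (every odd is $\RCMP$ every even); both contradict $M \RCMP N$. Hence $N$ must be even; as $N>M$ in the ordinary order (equality would give $N \RCMPEQ M$) and every fragment maximum is odd, this even value is realised by some fragment $\PlayInfSpec_k$ with $k>0$, which is exactly the configuration excluded by the case hypothesis. The one genuinely delicate point is keeping the ordinary order (used to compute $\PlayMaxRank{\G''}$) disjoint from the parity order $\RCMPEQ$ and tracking their interaction across the parity split; everything else is a routine unfolding of the definition of $\RCMP$.

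With the observation in hand I would finish the case as follows. Let $u=\PlayLastVertex{\PlayPrefSpec}$, so $\Color'(u)=\CFrom$, and since $\ST_\PE$ takes a new edge at $u$ its target $\ST_\PE(u)$ has colour $\CTo$. The strategy $\ST'_\PE$ redirects every forced $\CFrom$-stop of $\ST_\PE^\STOP$ to the single vertex $\ST_\PE(u)$, and is positional because $\ST_\PE$ is. I would then factor $\ST'_\PE$ through \cref{lemma:properties_of_merge}(3): it plays $\ST_\PE^\STOP$ from $v$ up to the first forced $\CFrom$-stop and then resumes from $\ST_\PE(u)$. This yields $\EnforcementC{\G''}{\TreeModelName}{\ST'_\PE}{v}=\EnfMerge{\Color'}{P}{\CFrom}{R}$ with $P=\EnforcementC{\G'}{\TreeModelName}{\ST_\PE^\STOP}{v}$ a valid $\G'$-enforcement from $v$ (and $\CFrom\in\Domain{P}$, witnessed by $\PlayPrefSpec_0$) and $R$ the enforcement from the $\CTo$-vertex $\ST_\PE(u)$, which lands in $\EnfOneMove$; hence $\EnforcementC{\G''}{\TreeModelName}{\ST'_\PE}{v}$ sits in the $\EnfMerge$-part of the set produced by $\GUpdateEV{\CFrom}{\CTo}$.

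The last step is to transfer membership back from $\ST'_\PE$ to $\ST_\PE$, and this is precisely where \cref{observation:till_i_not_better_2} does its work, exactly as in the wrap-up of \cref{lemma:merge_composition}: for any prefix consistent with $\ST'_\PE$ that diverges from $\ST_\PE$, one splices the redirection through $u$ (i.e.\ $\PlayPrefSpec_i$) in place of the diverging fragment, and the observation guarantees that this splice does not raise the maximal rank in the $\RCMPEQ$-sense, giving $\EnforcementC{\G''}{\TreeModelName}{\ST'_\PE}{v}\ECMPEQ\EnforcementC{\G''}{\TreeModelName}{\ST_\PE}{v}$; up-closure of the output set then yields $\EnforcementC{\G''}{\TreeModelName}{\ST_\PE}{v}\in\EnfSetAlgG{\G}{\G''}{v}$. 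I expect the main obstacle to be not the observation itself but the surrounding bookkeeping: verifying (using \cref{observation:compliance_1}, the finiteness of $\PlayPrefSpec$, and the minimality built into each $B_i$) that the redirection genuinely produces a \emph{single}-move factorisation with $R\in\EnfOneMove$ rather than an odd-dominated loop, and that safety is preserved throughout — conceptually clear but, as the paper notes for these gadgets, tedious.
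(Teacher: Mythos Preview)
Your parity case analysis for the observation itself is correct and is exactly a fleshing-out of the paper's one-line justification (the clause ``as otherwise, we would have $k$ such that $\PlayInfSpec_k$ is even'' is the paper's entire argument here).

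The surrounding sketch of how the observation closes the third case of completeness has a genuine gap, though. You assert that $\ST_\PE$ takes a new edge at $u$ and hence that $\ST_\PE(u)$ is a vertex of colour $\CTo$. This need not be so: with $i$ the largest index for which $B_i\neq\emptyset$, if $X_i\neq\emptyset$ then by construction the last appended fragment of $\PlayPrefSpec_i$ was drawn from $X_i$, and elements of $X_i$ satisfy $\ST_\PE(\cdot)=\STOP$ by definition. In that sub-case there is no target vertex of colour $\CTo$, and your single-move factorisation with $R\in\EnfOneMove$ does not apply as written. The paper accounts for this in \cref{lemma:merge_composition_2} by setting $w=u$ (rather than $w=\ST_\PE(u)$) when $\ST_\PE(u)=\STOP$; note also that \cref{lemma:no_going_back} is invoked only in the complementary sub-case $\ST_\PE(u)\neq\STOP$. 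Your use of \cref{observation:till_i_not_better_2} in the splicing step that yields $\EnforcementC{\G''}{\TreeModelName}{\ST'_\PE}{v}\ECMPEQ\EnforcementC{\G''}{\TreeModelName}{\ST_\PE}{v}$ is precisely what the paper does and remains correct once the missing branch is supplied.
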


\begin{lemma}\label{lemma:no_going_back}
  If $\ST_\PE(u) \not= \STOP$, then $\EnforcementC{\G''}{\TreeModelName}{\ST_\PE^\STOP}{\ST_\PE(u)}(\CFrom)$
  is \UndefinedT.
\end{lemma}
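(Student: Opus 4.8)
The plan is to read the statement directly off the definition of the sequence $\PlayPrefSpec_i, B_i, X_i, v_i$, so the work is almost entirely bookkeeping. Recall that $u$ is the last vertex of $\PlayPrefSpec$ and that $i$ is the largest index with $B_i \neq \emptyset$; since nothing is appended after step $i$, we have $\PlayPrefSpec = \PlayPrefSpec_i$ and $u = \PlayLastVertex{\PlayPrefSpec_i}$. First I would observe that, under the hypothesis $\ST_\PE(u) \neq \STOP$, the construction cannot have terminated because $X_i \neq \emptyset$: whenever $X_i \neq \emptyset$ the prefix appended at step $i$ is taken from $X_i$, so its last vertex $u$ satisfies $\ST_\PE(u) = \STOP$, contradicting the hypothesis. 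Hence $X_i = \emptyset$, the prefix appended at step $i$ comes from $B_i$, and in particular $v_{i+1} = \ST_\PE(\PlayPrefSpec_i)$ is defined and, as $\ST_\PE$ is positional, equals $\ST_\PE(u)$.

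Next I would use the maximality of $i$: since $B_{i+1}$ is computed from $v_{i+1} = \ST_\PE(u)$ and $i$ is the largest index with $B_i \neq \emptyset$, we get $B_{i+1} = \emptyset$. Unfolding its definition, this means that there is no prefix starting at $\ST_\PE(u)$, consistent with $\ST_\PE^\STOP$, that ends at a vertex of color $\CFrom$. Consequently no play from $\ST_\PE(u)$ consistent with $\ST_\PE^\STOP$ ever visits a vertex of color $\CFrom$, since the initial segment of such a play up to the first such visit would witness $B_{i+1} \neq \emptyset$.

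It then remains to translate this reachability statement into the enforcement claim. Because $\CFrom \in \C_\PE$, every vertex $w$ of color $\CFrom$ lies in $\V_\PE$, so $\EnforcementV{\G''}{\ST_\PE^\STOP}{\ST_\PE(u)}(w)$ is defined only if some prefix consistent with $\ST_\PE^\STOP$ reaches $w$ and stops there. As we have just shown that no color-$\CFrom$ vertex is reached from $\ST_\PE(u)$ at all, this partial function is undefined on every such $w$, and therefore $\EnforcementC{\G''}{\TreeModelName}{\ST_\PE^\STOP}{\ST_\PE(u)}(\CFrom)$, being a minimum over an empty family, is undefined. The only place that needs genuine care is the very first step: correctly matching the hypothesis $\ST_\PE(u) \neq \STOP$ to the reason the sequence terminates -- a vanishing $B_{i+1}$ rather than a nonempty $X_i$ -- and then upgrading ``no prefix ends at color $\CFrom$'' to ``no play ever visits color $\CFrom$''; everything else is a direct reading of the definitions.
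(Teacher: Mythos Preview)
Your argument follows the paper's route exactly: from $\ST_\PE(u)\neq\STOP$ deduce $X_i=\emptyset$, hence $v_{i+1}=\ST_\PE(u)$ is defined, and by maximality of $i$ conclude $B_{i+1}=\emptyset$; then read off that the enforcement at $\CFrom$ is undefined. The paper's proof is the one-line version of this.

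One small slip is worth fixing. You unfold $B_{i+1}=\emptyset$ as ``no \emph{prefix} starting at $\ST_\PE(u)$ consistent with $\ST_\PE^\STOP$ ends at a color-$\CFrom$ vertex'' and then upgrade to ``no play ever \emph{visits} such a vertex''. By definition, however, $L$ (and hence $B_{i+1}$) consists of \emph{complete plays} in $\PlayFrom{\G''}{\ST_\PE^\STOP}{v_{i+1}}$ that end at a color-$\CFrom$ vertex, not arbitrary prefixes; a prefix reaching a color-$\CFrom$ vertex at which $\ST_\PE$ uses an old edge (so $\ST_\PE^\STOP$ does not stop) is not in $L$, and your ``initial segment'' justification does not apply to it. This does not damage the conclusion: since $\CFrom\in\C_\PE$, the enforcement $\EnforcementV{\G''}{\ST_\PE^\STOP}{\ST_\PE(u)}(w)$ for $w$ of color $\CFrom$ needs a prefix with $\ST_\PE^\STOP(\PlayPref)=\STOP$, and such a prefix \emph{is} a complete play, hence would witness $L\neq\emptyset$. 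Simply replace the over-strong ``no visit'' claim with this direct link between ``stops at $w$'' and ``complete play ending at $w$''.
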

\begin{proof}
  If $\ST_\PE(u) \not= \STOP$, then $X_i = \emptyset$, and as $B_{i + 1} = \emptyset$, so there is no 
    play that starts at $\ST_\PE(u)$ is consistent with $\ST_\PE^\STOP$ and ends at vertex colored $\CFrom$.
\end{proof}

\begin{lemma}\label{lemma:merge_composition_2}
  \phantom{G}\\
  For
  \begin{gather*}
    P = \EnforcementC{\G'}{\TreeModelName}{\ST_\PE^\STOP}{v}\\
    w = \begin{cases}
      \ST_\PE(u) & \MIF \ST_\PE(u) \not= \STOP\\
      u       & \MOTHERWISE
    \end{cases}\\
    Q = \EnforcementC{\G'}{\TreeModelName}{\ST_\PE^\STOP}{w}
  \end{gather*} 
  we have
  \begin{gather*}
    \EnforcementC{\G''}{\TreeModelName}{\ST'_\PE}{v} = \EnfMerge{\Color'}{P}{\CFrom}{Q}
  \end{gather*}
  and $\ST'_\PE$ is safe from $v$ and $\EnforcementC{\G''}{\TreeModelName}{\ST'_\PE}{v} \ECMPEQ \EnforcementC{\G''}{\TreeModelName}{{\ST}_\PE}{v}$.
\end{lemma}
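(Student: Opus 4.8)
The plan is to mirror the argument of \cref{lemma:merge_composition}, with the role played there by the even-rank cycle (which forced the $\EnfLoop{\cdot}{\cdot}$ correction) now taken over by the ``no return'' phenomenon of \cref{lemma:no_going_back}. First I would record the two safety facts. That $\ST_\PE^\STOP$ is safe from $v$ in $\G''$ is immediate: it is obtained from the safe-from-$v$ strategy $\ST_\PE$ by turning some moves into $\STOP$, and stopping is never losing. Safety of $\ST_\PE^\STOP$ from $w$ follows from \cref{observation:compliance_1}, since $u$ (hence $w$) lies on $\PlayPrefSpec$, which can be extended to a play consistent with $\ST_\PE$; every play from $w$ consistent with $\ST_\PE^\STOP$ is then a continuation of such a play and so is not losing.

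The core step is to identify $\ST'_\PE$ with the composed strategy from property (3) of \cref{lemma:properties_of_merge}, applied in $\G''$ with base strategy $\ST_\PE^\STOP$ (enforcement $P$), color $\CFrom \in \C_\PE$, target vertex $w$, and switch strategy $\ST_\PE^\STOP$ from $w$ (enforcement $Q$, which by \cref{observation:g_j_1_g_j} is the same in $\G'$ and $\G''$). In the branch $\ST_\PE(u)\neq\STOP$ the vertex $w=\ST_\PE(u)$ has color $\CTo$, so in $\G''$ every color-$\CFrom$ vertex has an edge to $w$ and the precondition of property (3) is met. Here the decisive point is \cref{lemma:no_going_back}: it gives that $Q$ is undefined on $\CFrom$, so after a single redirect to $w$ the play never again reaches a color-$\CFrom$ vertex at which $\ST_\PE^\STOP$ stops, and $\ST'_\PE$ therefore performs a genuine one-shot merge rather than a loop. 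Property (3) then yields both the safety of $\ST'_\PE$ from $v$ and the identity $\EnforcementC{\G''}{\TreeModelName}{\ST'_\PE}{v} = \EnfMerge{\Color'}{P}{\CFrom}{Q}$, with no $\EnfLoop{\cdot}{\cdot}$ correction needed.

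For the ordering $\EnforcementC{\G''}{\TreeModelName}{\ST'_\PE}{v} \ECMPEQ \EnforcementC{\G''}{\TreeModelName}{\ST_\PE}{v}$ I would repeat the play-surgery of \cref{lemma:merge_composition}: given a prefix $\PlayPref \in \PlayFromTo{\G''}{\ST'_\PE}{v}{w'}\setminus\PlayFromTo{\G''}{\ST_\PE}{v}{w'}$, take the last point $q$ where $\ST'_\PE$ and $\ST_\PE$ disagree, observe $\Color'(\PlayPref[q])=\CFrom$, and splice $\PlayPrefSpec_i$ in place of the prefix $\PlayPref[\ldots q]$, obtaining a play consistent with $\ST_\PE$ whose maximal rank is $\RCMPEQ$-no-larger; \cref{observation:till_i_not_better_2} bounds the rank of the spliced part. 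Since $P$ and $Q$ lie in the respective valid sets, the identity together with up-closure of $\EnfSetAlgG{\G}{\G''}{v}$ and the $\ECMPEQ$-relation then give $\EnforcementC{\G''}{\TreeModelName}{\ST_\PE}{v}\in\EnfSetAlgG{\G}{\G''}{v}$, finishing completeness exactly as after \cref{lemma:merge_composition}.

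I expect the main obstacle to be the degenerate branch $\ST_\PE(u)=\STOP$ (equivalently $X_i\neq\emptyset$). There $w=u$ has color $\CFrom$, $\ST'_\PE$ stops at every color-$\CFrom$ vertex where $\ST_\PE^\STOP$ stops, so $\ST'_\PE=\ST_\PE^\STOP$ and $\EnforcementC{\G''}{\TreeModelName}{\ST'_\PE}{v}=P$; the merge-strategy reading of property (3) no longer applies, so one must verify $\EnfMerge{\Color'}{P}{\CFrom}{Q}=P$ directly from the definition of $\EnfMerge{\Color'}{\cdot}{\CFrom}{\cdot}$, and this is precisely where the relation between $P$ and $Q=\EnforcementC{\G'}{\TreeModelName}{\ST_\PE^\STOP}{u}$ at the worst-arrival vertex $u$ enters. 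Verifying this collapse, and keeping the case split clean, is the part that requires the most care.
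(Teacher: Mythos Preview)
Your proposal follows the same skeleton as the paper's proof: establish safety of $\ST_\PE^\STOP$ from $v$ and $w$ via \cref{observation:compliance_1}, invoke property (3) of \cref{lemma:properties_of_merge} together with \cref{lemma:no_going_back} to identify the merge $\EnfMerge{\Color'}{P}{\CFrom}{Q}$ with the enforcement of $\ST'_\PE$, and then run a play-surgery argument backed by \cref{observation:till_i_not_better_2} for the $\ECMPEQ$ comparison.

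Two small points where the paper's proof is organized differently. First, in the surgery the paper takes the \emph{first} disagreement point $q$ (not the last, as in \cref{lemma:merge_composition}). The reason is that \cref{lemma:no_going_back} guarantees there is at most one redirect along any play consistent with $\ST'_\PE$, so first and last coincide; but phrasing it as ``first'' makes it immediate that $\PlayPref[\ldots q]$ is consistent with $\ST_\PE^\STOP$, hence one of the plays over which $\PlayPrefSpec_0$ was chosen minimal, which together with \cref{observation:till_i_not_better_2} gives the required rank bound. Your ``last'' works too, but only because it equals ``first'' here. Second, the paper does not split off the degenerate branch $\ST_\PE(u)=\STOP$ as a separate case: it describes $\ST^{\prime\prime}_\PE$ as making ``the move that $\ST_\PE$ would make at $u$'' and then notes, via \cref{lemma:no_going_back}, that this means ``either stopping the game or playing further according to $\ST_\PE^\STOP$ and never stopping at any vertex of color $\CFrom$ again'', which absorbs both branches at once. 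Your instinct that this case needs care is sound, but the paper's treatment is no more detailed than what you sketch.
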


\begin{proof}
  From the definition of ${\ST}_\PE^\STOP$ and \cref{observation:compliance_1}
  we have that ${\ST}_\PE^\STOP$ is safe from $v$ and $w$ in $\G''$.
  From \cref{lemma:properties_of_merge},
  we have that  $\EnfMerge{\Color'}{P}{\CFrom}{Q}$ 
  is the enforcement from $v$ of the strategy  $\ST^{\prime\prime}_\PE$  that is safe from $v$, behaves like ${\ST}_\PE^\STOP$,
  but the first time ${\ST}_\PE^\STOP$ decides to stop at some vertex colored $\CFrom$
  $\ST^{\prime\prime}_\PE$ makes the move that $\ST_\PE$ would make at $u$. 
  Note that from \cref{lemma:no_going_back} this means either stopping the game or playing further according to ${\ST}_\PE^\STOP$
  and never stopping at any vertex of color $s$ again, so in fact $\EnfMerge{\Color'}{P}{\CFrom}{Q}$ is the enforcement of $\ST_\PE'$.

  For the relation $\EnforcementC{\G''}{\TreeModelName}{\ST'_\PE}{v} \ECMPEQ \EnforcementC{\G''}{\TreeModelName}{{\ST}_\PE}{v}$,
  let $\PlayPref \in \left(\PlayFromTo{\G''}{\ST'_\PE}{v}{w} \setminus \PlayFromTo{\G''}{{\ST}_\PE}{v}{w}\right)$  and $q$ be the first point on $\PlayPref$ such that
  $\ST'_\PE(\PlayPref[q]) \not= \ST_\PE(\PlayPref[q])$, then $\Color'(\PlayPref[q]) = \CFrom$, and we can replace 
  $\PlayPref[\ldots q]$ with $\PlayPrefSpec_{i}$ creating $\PlayPref'$, such that $\PlayMaxRank{\G''}(\PlayPref') \RCMPEQ \PlayMaxRank{\G''}(\PlayPref)$
  and $\PlayPref' \in  \PlayFromTo{\G''}{{\ST}_\PE}{v}{w}$. 
  This follows from \cref{observation:till_i_not_better_2} and 
  \cref{lemma:no_going_back}, as for $q' < q$ we have ${\ST}_\PE^\STOP(\PlayPref[q']) \not= \STOP$
  (at $u$ we either stop or move to a vertex from which we will play according to $\ST_\PE$ and never visit a vertex of color $\CFrom$ again).
  
\end{proof}

Note that in the lemma above $\EnforcementC{\G''}{\TreeModelName}{\ST'_\PE}{v} \in \EnfSetAlgG{\G}{\G''}{v}$,
so as $\EnfSetAlgG{\G}{\G''}{v}$ is up-closed we have that $\EnforcementC{\G''}{\TreeModelName}{{\ST}_\PE}{v} \in \EnfSetAlgG{\G}{\G''}{v}$.
This last case finishes the proof of \cref{lemma:E_update_complete}.

\bigskip

Now lets focus on  $\GUpdateOV{\CFrom}{\CTo}$.
\begin{lemma}\label{lemma:O_update_sound}
  Given that $\CFrom$ is the color of player $\PO$ and 
  for each $v \in \V'$ $\EnfSetAlgG{\G}{\G'}{v}$ is valid for $v$ in $\G'$,
  we have that for each $v \in \V'$ $\EnfSetAlgG{\G}{\G''}{v}$ created by $\GUpdateOV{\CFrom}{\CTo}$ is sound for $v$ in $\G''$.
  \end{lemma}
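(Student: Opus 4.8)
The plan is to unfold the equivalent definition of $\GUpdateOV{\CFrom}{\CTo}$ and, for each $P \in \EnfSetAlgG{\G}{\G''}{v}$, exhibit a (possibly non-positional) strategy that is safe from $v$ in $\G''$ and whose enforcement is $\ECMPEQ P$; this is exactly soundness. The definition presents $P$ as witnessed by one of two disjuncts, so I would split the argument accordingly and treat each by a direct construction of a strategy, leaning on \cref{lemma:properties_of_merge} to control the resulting enforcement.

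For the first disjunct there is $Q \in \EnfSetAlgG{\G}{\G'}{v}$ with $\NoPairWith{Q}{\CFrom}$ and $Q \ECMPEQ P$. By soundness of the input set I obtain a strategy $\ST_\PE^Q$ that is safe from $v$ in $\G'$ with $\EnforcementC{\G'}{\TreeModelName}{\ST_\PE^Q}{v} \ECMPEQ Q$; since the domain only shrinks under $\ECMPEQ$, its enforcement is also undefined on $\CFrom$. Because $\CFrom \in \C_\PO$, undefinedness on $\CFrom$ means that no play from $v$ consistent with $\ST_\PE^Q$ ever reaches a vertex of color $\CFrom$, and since the new edges of $\G''$ emanate only from such vertices, they are never available. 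Hence $\ST_\PE^Q$ plays identically in $\G''$, remains safe from $v$, and keeps the same enforcement, which is $\ECMPEQ Q \ECMPEQ P$ by transitivity of $\ECMPEQ$.

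For the second disjunct there is $Q \in \EnfSetAlgG{\G}{\G'}{v}$ and, for every $\CTo$-colored vertex $w$, some $R_w \in \EnfSetAlgG{\G}{\G'}{w}$ with $\EvenPair{R_w}{\CFrom}$ and $\EnfMerge{\Color'}{Q}{\CFrom}{R_w} \ECMPEQ P$. Using soundness I select $\ST_\PE^Q$ safe from $v$ and, for each such $w$, $\ST_\PE^{R_w}$ safe from $w$ in $\G'$, with enforcements $\ECMPEQ Q$ and $\ECMPEQ R_w$ respectively. I then compose them into $\ST^{\prime\prime}_\PE$: play $\ST_\PE^Q$ until player $\PO$ uses a new edge, which necessarily leads from a $\CFrom$-colored vertex to some $\CTo$-colored $w$, and from then on forget the past and follow $\ST_\PE^{R_w}$, applying the same rule recursively whenever a further new edge is used. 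This is precisely the composition pattern underlying \cref{lemma:properties_of_merge}, playing one strategy up to the jump at color $\CFrom$ and a continuation afterwards, so the enforcement of $\ST^{\prime\prime}_\PE$ from $v$ is bounded, on the branch where $\PO$ jumps to $w$, by $\EnfMerge{\Color'}{Q}{\CFrom}{R_w}$; here I would use monotonicity (Property 1 of that lemma) together with $\EnforcementC{\G'}{\TreeModelName}{\ST_\PE^Q}{v}\ECMPEQ Q$ and $\EnforcementC{\G'}{\TreeModelName}{\ST_\PE^{R_w}}{w}\ECMPEQ R_w$. Since each such merge is $\ECMPEQ P$ and combining the branches preserves the upper bound $P$ under $\ECMPEQ$ (because $\RCMPEQ$ is total, so both the pointwise maximum and the union of enforcements that are each $\ECMPEQ P$ stay $\ECMPEQ P$), I obtain $\EnforcementC{\G''}{\TreeModelName}{\ST^{\prime\prime}_\PE}{v} \ECMPEQ P$.

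The step I expect to be the main obstacle is the safety of $\ST^{\prime\prime}_\PE$, and specifically the infinite plays that use new edges infinitely often, where the recursion never settles inside a single $\G'$-safe continuation so I cannot simply appeal to safety in $\G'$. The key is the hypothesis $\EvenPair{R_w}{\CFrom}$: because $R_w(\CFrom)$ is even or undefined and the chosen strategy satisfies $\EnforcementC{\G'}{\TreeModelName}{\ST_\PE^{R_w}}{w}\ECMPEQ R_w$, each $\CFrom$-colored vertex from which $\PO$ jumps is reached with even maximal rank within its segment, and the subsequent lift by the earlier arrival keeps every later segment maximum even with respect to $\RCMP$; consequently the highest rank occurring infinitely often in such a play is even, so the play is won by $\PE$. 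Plays using only finitely many new edges reduce, after the last jump, to a play consistent with a $\G'$-safe continuation and are handled directly, while finite plays end either at a dead end for $\PO$ or with a $\STOP$ and are therefore not losing. This case analysis mirrors the reasoning behind \cref{lemma:properties_of_merge} and completes the soundness argument.
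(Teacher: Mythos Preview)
Your proposal is correct and follows essentially the same approach as the paper: the same two-case split on the disjuncts of the equivalent definition, the same observation that in the first case the witnessing strategy never visits a $\CFrom$-vertex and hence transfers unchanged to $\G''$, and in the second case the same recursive ``switch to $\ST_\PE^{R_w}$ whenever $\PO$ uses a new edge to $w$'' construction, with safety for infinitely many switches derived from $\EvenPair{R_w}{\CFrom}$. The only minor difference is that the paper first replaces $Q$ and the $R_w$ by $\ECMPEQ$-minimal elements of the input sets so as to obtain strategies with exactly those enforcements, whereas you keep strategies with enforcements merely $\ECMPEQ Q$ and $\ECMPEQ R_w$ and invoke monotonicity of $\EnfMerge{}{\cdot}{\CFrom}{\cdot}$ (Property~1 of \cref{lemma:properties_of_merge}); both routes yield the required bound $\ECMPEQ P$.
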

\begin{proof}
    Let $v$ be an arbitrary vertex of $\G''$,
    we need to show that for every $P \in \EnfSetAlgG{\G}{\G''}{v}$ there exists a strategy $\ST_\PE$
    that is safe from $v$ and $\EnforcementC{\G''}{\TreeModelName}{\ST_\PE}{v} \ECMPEQ P$.
    So let $P$ be the arbitrary element of $\EnfSetAlgG{\G}{\G''}{v}$.

    If $\EX{Q \in \EnfSetAlgG{\G}{\G'}{v}}{\NoPairWith{Q}{\CFrom} \wedge Q \ECMPEQ P}$, then
    as $\EnfSetAlgG{\G}{\G'}{v}$ is valid for $v$ in $\G'$, so we have a strategy $\ST_\PE$ such that
    $\EnforcementC{\G'}{\TreeModelName}{\ST_\PE}{v} \ECMPEQ Q$, the strategy $\ST_\PE$
     is safe from 
    $v$ in $\G'$ and there is no play consistent with $\ST_\PE$ that starts at $v$ and visits
    a vertex colored $\CFrom$. So $\ST_\PE$ is safe from $v$ in $\G''$ with the same enforcement.
    
    For the other case, let $w_1, \ldots, w_m$ be all vertices from the arena $\G''$ that have color $\CTo$.
    Observe that from the definition of the update rule, we have 
    $Q \in \EnfSetAlgG{\G}{\G'}{v}$, and for each $w_i$ we have $R_i \in \EnfSetAlgG{\G}{\G'}{w_i}$
    such that $\EnfMerge{\Color'}{Q}{\CFrom}{R_i} \ECMPEQ P$ and $\EvenPair{R_{i}}{\CFrom}$. By \cref{lemma:properties_of_merge},
    we can choose $Q$ and $R_i$ to be minimal (i.e $\FA{Q' \in \EnfSetAlgG{\G}{\G'}{v}}{Q' \not \ECMP Q}$ 
    and $\FA{R'_i \in \EnfSetAlgG{\G}{\G'}{w_i}}{R'_i \not \ECMP R_i}$),
    and we will still have the property that $\EnfMerge{\Color'}{Q}{\CFrom}{R_i} \ECMPEQ P$ and $\EvenPair{R_{i}}{\CFrom}$. 
    For such choice, by
    assumed validity, we have strategies 
    $\ST_\PE^v$ safe from $v$, and $\ST_\PE^{w_i}$ safe from $w_i$ that have exactly those enforcements in $\G'$. Using those 
    strategies we construct a strategy $\ST_\PE$ that initially behaves exactly like $\ST_\PE^v$,
    and each time player $\PO$ makes a move along new edge to the vertex $w_i$  $\ST_\PE$
    forgets about the past and switches to $\ST_\PE^{w_i}$.
    Observe that on a  play starting at $v$ and consistent with  $\ST_\PE$ 
    parts between switching the strategies have the highest rank even as we have $\EvenPair{R_i}{\CFrom}$.
    Observe that this along with $\EnfMerge{\Color'}{Q}{\CFrom}{R_i} \ECMPEQ P$ means that $\EnforcementC{\G''}{\TreeModelName}{\ST_\PE}{v} \ECMPEQ P$ and
    that $\ST_\PE$ is safe from $v$ as if the switch is performed infinitely often, then 
    the play is winning for player $\PE$. Otherwise, the safety follows from the safety of 
    the last strategy that we 
     switched to.

 \end{proof}

\begin{lemma}\label{lemma:O_update_complete}
  Given that $\CFrom$ is the color of player $\PO$ and 
  for each $v \in \V'$ $\EnfSetAlgG{\G}{\G'}{v}$ is valid for $v$ in $\G'$,
  we have that for each $v \in \V'$ $\EnfSetAlgG{\G}{\G''}{v}$ created by $\GUpdateOV{\CFrom}{\CTo}$ is complete for $v$ in $\G''$.
\end{lemma}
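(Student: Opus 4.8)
The plan is to fix an arbitrary positional strategy $\ST_\PE$ that is safe from $v$ in $\G''$, put $P = \EnforcementC{\G''}{\TreeModelName}{\ST_\PE}{v}$, and produce witnesses showing $P \in \EnfSetAlgG{\G}{\G''}{v}$ through the equivalent definition of $\GUpdateOV{\CFrom}{\CTo}$. First I would note that $\PE$'s moves never traverse the freshly added edges (these leave $\CFrom$-colored, hence $\PO$-owned, vertices), so $\ST_\PE$ restricts to a valid strategy in $\G'$ that is still safe from $v$; setting $Q = \EnforcementC{\G'}{\TreeModelName}{\ST_\PE}{v}$, completeness of $\EnfSetAlgG{\G}{\G'}{v}$ yields $Q \in \EnfSetAlgG{\G}{\G'}{v}$. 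Because every play of $\G'$ is a play of $\G''$ while $\G''$ only hands $\PO$ extra options, arrivals can only worsen, giving $Q \ECMPEQ P$. I would then split on whether $\CFrom \in \Domain{Q}$.

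If $\NoPairWith{Q}{\CFrom}$, this $Q$ already satisfies the first disjunct, so $P \in \EnfSetAlgG{\G}{\G''}{v}$ and there is nothing more to do.

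The main case is $\CFrom \in \Domain{Q}$, which I would handle with the second disjunct using the same $Q$. Reachability comes for free: as $Q(\CFrom)$ is defined there is a $\G'$-play from $v$ to some vertex $u$ of color $\CFrom$, and a new edge runs from $u$ to every vertex of color $\CTo$, so each $w$ with $\Color'(w) = \CTo$ is reached from $v$ by a play consistent with $\ST_\PE$. Positionality then makes $\ST_\PE$ safe from each such $w$, and I may take $R_w = \EnforcementC{\G'}{\TreeModelName}{\ST_\PE}{w} \in \EnfSetAlgG{\G}{\G'}{w}$. The key step is the cycle argument establishing $\EvenPair{R_w}{\CFrom}$: if $R_w(\CFrom)$ were defined and odd, $\PO$ could push the token from $w$ to a $\CFrom$-colored vertex along a worst path (of maximal rank $R_w(\CFrom)$) and then take a new edge back to $w$, looping forever; positionality forces the same odd-dominated loop each time, yielding a play losing for $\PE$ from $w$ and contradicting safety.

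It then remains to check $\EnfMerge{\Color'}{Q}{\CFrom}{R_w} \ECMPEQ P$ for every such $w$, which since $\CFrom \in \Domain{Q} \cap \C_\PO$ is just $\EnfUnion{Q}{\EnfLift{R_w}{Q(\CFrom)}} \ECMPEQ P$. I would verify this color by color. Domain containment holds because colors in $\Domain{Q}$ lie in $\Domain{P}$ (from $Q \ECMPEQ P$) and colors in $\Domain{R_w}$ are reached in $\G''$ by the composite play above. For the values, since $P(a)$ is the $\RCMPEQ$-worst arrival over $\G''$-plays, it suffices to exhibit, for each of the two merge components, a $\G''$-play at least as bad: the pure $\G'$-play witnessing $Q(a)$ gives $P(a) \RCMPEQ Q(a)$, while the concatenation ``worst path from $v$ to a $\CFrom$-vertex, new edge to $w$, worst path from $w$ to color $a$'' is consistent with $\ST_\PE$ (the step out of $\CFrom$ is $\PO$'s move and, $\ST_\PE$ being positional, $\STOP$-decisions at $a$ transfer) and has ordinary maximal rank $\max(Q(\CFrom), R_w(a)) = \EnfLift{R_w}{Q(\CFrom)}(a)$, giving $P(a) \RCMPEQ \EnfLift{R_w}{Q(\CFrom)}(a)$; hence $P(a) \RCMPEQ \Min{\RCMPEQ}(Q(a), \EnfLift{R_w}{Q(\CFrom)}(a))$. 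Supplying this $Q$ together with the family of $R_w$ satisfies the second disjunct, so $P \in \EnfSetAlgG{\G}{\G''}{v}$. I expect the main obstacle to be precisely this final color-by-color check: one must track carefully the interaction between the $\RCMPEQ$-order governing arrivals and the ordinary $\max$ used by $\EnfLift{\cdot}{\cdot}$, and separate the $\PE$-colored ($\STOP$) arrivals from the $\PO$-colored ones.
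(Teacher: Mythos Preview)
Your proposal is correct and follows essentially the same route as the paper: both take the positional $\ST_\PE$ safe from $v$ in $\G''$, set $Q=\EnforcementC{\G'}{\TreeModelName}{\ST_\PE}{v}$ and $R_w=\EnforcementC{\G'}{\TreeModelName}{\ST_\PE}{w}$ for every $w$ of color $\CTo$, argue $\EvenPair{R_w}{\CFrom}$ via the odd-cycle contradiction, and conclude $\EnfMerge{\Color'}{Q}{\CFrom}{R_w}\ECMPEQ P$. The only differences are cosmetic: the paper case-splits on $\CFrom\in\Domain{P}$ while you split on $\CFrom\in\Domain{Q}$ (equivalent, since from $v$ an $\CFrom$-vertex is reachable in $\G''$ iff it is already reachable in $\G'$), and the paper dispatches the final inequality by citing \cref{lemma:properties_of_merge} whereas you unfold it color-by-color; your explicit verification is in fact what that citation amounts to.
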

\begin{proof}
    We need to show that for any positional strategy $\ST_\PE$ of player $\PE$ that is save from some vertex $v$
    in $\G''$, we have $\EnforcementC{\G''}{\TreeModelName}{\ST_\PE}{v} \in \EnfSetAlgG{\G}{\G''}{v}$.
    Let $\ST_\PE$ be any positional strategy that is safe from $v$ and let $P =  \EnforcementC{\G''}{\TreeModelName}{\ST_\PE}{v}$.

    First of all if $\NoPairWith{P}{\CFrom}$, then  $P \in \EnfSetAlgG{\G}{\G'}{v}$
    and $P \in \EnfSetAlgG{\G}{\G''}{v}$.

    Otherwise let $w_1, \ldots, w_m$ be all vertices from the arena $\G''$ that have color $\CTo$.
    As now there is a play that starts at $v$ is consistent with $\ST_\PE$ and visits a vertex 
    colored $\CFrom$, so $\ST_\PE$ is safe from each of the vertices $w_i$ 
    (also note that $\ST_\PE$ is safe from $v$ and each of the vertices $w_i$ in $\G'$).
    Consider
     $Q = \EnforcementC{\G'}{\TreeModelName}{\ST_\PE}{v}$,
    $R_i = \EnforcementC{\G'}{\TreeModelName}{\ST_\PE}{w_i}$
    and observe that $\EvenPair{R_i}{\CFrom}$ must hold as otherwise player $\PO$
    would able to close cycles with the highest rank being odd.  What is more 
    from \cref{lemma:properties_of_merge} we have that $\EnfMerge{\Color'}{Q}{\CFrom}{R_i} \ECMPEQ P$,
    so $Q$ and $R_i$ fit to the condition for adding $P$ to $\EnfSetAlgG{\G}{\G''}{v}$.
\end{proof}

Now proof of \cref{lemma:update_valid} follows easily from \cref{lemma:E_update_sound,lemma:E_update_complete,lemma:O_update_sound,lemma:O_update_complete}.

\end{document}